\documentclass[11pt,letterpaper]{article}
\usepackage{floatrow}
\usepackage{typearea}
\usepackage{pgfplots}
\usepackage{tikz}
\usetikzlibrary{patterns}

\usepackage{subcaption}
\usepackage{float}

\usepackage{bbm}
\usepackage[ruled,linesnumbered,vlined]{algorithm2e}
\paperwidth 8.5in \paperheight 11in
\typearea{14}

\usepackage{comment,multicol}

\usepackage[ruled,linesnumbered,vlined]{algorithm2e}

\makeatletter
\setlength{\parindent}{0pt}
\addtolength{\partopsep}{-2mm}
\setlength{\parskip}{5pt plus 1pt}
\addtolength{\abovedisplayskip}{-3mm}
\addtolength{\textheight}{20pt}
\makeatother
%
%
\usepackage{multirow}
\usepackage{wrapfig}
\usepackage{color,colortbl}
\usepackage{float}
\usepackage{amsthm}
\usepackage{amsmath}
\usepackage{amssymb}
\usepackage{graphicx}
\usepackage{xspace}
\usepackage{nicefrac}
\usepackage[colorinlistoftodos,prependcaption,textsize=tiny]{todonotes}

\definecolor{forestgreen}{rgb}{0.13, 0.55, 0.13}

\SetCommentSty{mycommfont}

{\makeatletter
	\gdef\xxxmark{%
		\expandafter\ifx\csname @mpargs\endcsname\relax 
		\expandafter\ifx\csname @captype\endcsname\relax 
		\marginpar{xxx}
		\else
		xxx 
		\fi
		\else
		xxx 
		\fi}
	\gdef\xxx{\@ifnextchar[\xxx@lab\xxx@nolab}
	\long\gdef\xxx@lab[#1]#2{{\bf [\xxxmark #2 ---{\sc #1}]}}
	\long\gdef\xxx@nolab#1{{\bf [\xxxmark #1]}}
}

\usepackage{dsfont}
\usepackage{thmtools}
\usepackage{thm-restate}
\declaretheorem[name=Lemma]{lemma}

\definecolor{Darkblue}{rgb}{0,0,0.4}
\definecolor{Brown}{cmyk}{0,0.61,1.,0.60}
\definecolor{Purple}{cmyk}{0.45,0.86,0,0}

\usepackage[colorlinks,linkcolor=Darkblue,filecolor=blue,citecolor=blue,urlcolor=Darkblue,pagebackref]{hyperref}
\usepackage[nameinlink]{cleveref}

\setlength{\columnseprule}{0.5pt}

\newcommand{\initOneLiners}{%
	\setlength{\itemsep}{0pt}
	\setlength{\parsep }{0pt}
	\setlength{\topsep }{0pt}
}
\newenvironment{OneLiners}[1][\ensuremath{\bullet}]
{\begin{list}
		{#1}
		{\initOneLiners}}
	{\end{list}}

\newcommand{\initTwoLiners}{%
	\setlength{\itemsep}{1pt}
	\setlength{\parsep }{0pt}
	\setlength{\topsep }{0pt}
}

\newtheorem{theorem}{Theorem}
\newtheorem{corollary}{Corollary}
\newtheorem{remark}{Remark}
\newtheorem{claim}{Claim}

\newtheorem{conjecture}{Conjecture}
\newtheorem{fact}{Fact}

\numberwithin{equation}{section}

\newcommand{\namedref}[2]{\hyperref[#2]{#1~\ref*{#2}}}

\newcommand{\wt}{\widetilde}

\newcommand{\R}{\mathbb{R}}

\newcommand{\poly}{{\rm poly}}
\newcommand{\polylog}{{\rm polylog}}

\newcommand{\etal}{{et al. \xspace}}

\newcommand{\eps}{\epsilon}

\usepackage{pdfsync}

\title{Graph Spanners by Sketching in Dynamic Streams and the Simultaneous Communication Model}
\author{Arnold Filtser\thanks{Supported by the Simons Foundation.}\\Columbia University \and Michael Kapralov\thanks{Supported in part by ERC Starting Grant 759471.}\\EPFL \and Navid Nouri\thanks{Supported by ERC Starting Grant 759471}\\EPFL}

\begin{document}
	\maketitle
	\thispagestyle{empty}
	\begin{abstract}
Graph sketching is a powerful technique introduced by the seminal work of Ahn, Guha and McGregor'12 on connectivity in dynamic graph streams that has enjoyed considerable attention in the literature since then, and has led to near optimal dynamic streaming algorithms for many fundamental problems such as connectivity, cut and spectral sparsifiers and matchings. Interestingly, however, the sketching and dynamic streaming complexity of approximating the shortest path metric of a graph is still far from well-understood. Besides a direct $k$-pass implementation of classical spanner constructions (recently improved to  $\lfloor\frac k2\rfloor+1$-passes by Fernandez, Woodruff and Yasuda'20) the state of the art amounts to a $O(\log k)$-pass algorithm of Ahn, Guha and McGregor'12, and a $2$-pass algorithm of Kapralov and Woodruff'14. In particular, no single pass algorithm is known, and the optimal tradeoff between the number of passes, stretch and space complexity is open.

In this paper we introduce several new graph sketching techniques for approximating the shortest path metric of the input graph. We give the first {\em single pass} sketching algorithm for constructing graph spanners: we show how to obtain a $\widetilde{O}(n^{\frac{2}{3}})$-spanner using $\widetilde{O}(n)$ space, and in general a $\widetilde{O}(n^{\frac{2}{3}(1-\alpha)})$-spanner using $\widetilde{O}(n^{1+\alpha})$ space for every $\alpha\in [0, 1]$, a tradeoff that we think may be close optimal. We also give new spanner construction algorithms for any number of passes, simultaneously improving upon all prior work on this problem. Finally, we note that unlike the original sketching approach of Ahn, Guha and McGregor'12, none of the existing spanner constructions yield {\em simultaneous communication} protocols with low per player information. We give the first such protocols for the spanner problem that use a small number of rounds.

	\end{abstract}
	\newpage
	
	\tableofcontents
\thispagestyle{empty}
	\newpage
	\setcounter{page}{1}
	\pagenumbering{arabic}

\section{Introduction}
Graph sketching, introduced by~\cite{AGM12LinearMeasure} in an influential work on graph connectivity in dynamic streams has been a de facto standard approach to constructing algorithms for dynamic streams, where the algorithm must use a small amount of space to process a stream that contains both edge insertions and deletions. The main idea of~\cite{AGM12LinearMeasure} is to represent the input graph by its edge incident matrix, and applying classical linear sketching primitives to the columns of this matrix. This approach seamlessly extends to dynamic streams, as by linearity of the sketch one can simply subtract the updates for deleted edges from the summary being maintained: a surprising additional benefit is the fact that such a sketching solution is trivially parallelizable: since the sketch acts on the columns of the edge incidence matrix, the neighborhood of every vertex in the input graph is compressed independently. In particular, this yields efficient protocols in the {\em simultaneous communication} model, where every vertex knows its list of neighbors, and must communicate a small number of bits about this neighborhood to a coordinator, who then announces the answer. Surprisingly, several fundamental problems such as connectivity~\cite{AGM12LinearMeasure}, cut~\cite{AGM12Spanners} and spectral sparsification~\cite{DBLP:conf/approx/AhnGM13,DBLP:conf/focs/KapralovLMMS14,DBLP:conf/soda/KapralovMMMNST20} admit sketch based simultaneous communication protocols with only polylogarithmic communication overhead per vertex, which essentially matches existentially optimal bounds.\footnote{There is some overhead to using linear sketches, but it is only polylogarithmic in the number of vertices in the graph -- see~\cite{NelsonY19}. }
The situation is entirely different for the problem of approximating the shortest path metric of the input graph: it is not known whether existentially best possible space vs approximation quality tradeoffs can be achieved using a linear sketch. This motivates the main question that we study:

\begin{center}
\fbox{
\parbox{0.9\textwidth}{
\begin{center}
What are the optimal space/stretch/pass tradeoffs for approximating the shortest path metric using a linear sketch?
\end{center}
}
}
\end{center}

\paragraph{Sketching and dynamic streams.} Sketching is the most popular tool for designing algorithms for the dynamic streaming model. Sketching solutions have been recently constructed for many graph problems,
including spanning forest computation \cite{ahn2012analyzing}, cut and spectral sparsifiers \cite{DBLP:conf/approx/AhnGM13,DBLP:conf/focs/KapralovLMMS14,DBLP:conf/soda/KapralovMMMNST20},
spanner construction \cite{AGM12Spanners,KW14}, matching and matching size approximation \cite{AssadiKLY16,DBLP:conf/soda/AssadiKL17}, sketching the Laplacian \cite{10.1145/2840728.2840753,JS18} and many other problems. Also, results showing universality of sketching for this application are known, at least under some restrictions on the stream. The result of~\cite{DBLP:conf/stoc/LiNW14} shows such an equivalence under the assumption that the stream length is at least doubly exponential in the size of the graph. The assumption on the stream length was significantly relaxed for {\em binary} sketches, i.e., sketches over $\mathbb{GF}_2$, by~\cite{DBLP:conf/coco/HosseiniLY19,DBLP:conf/coco/KannanMSY18}. Very recently, it has been shown~\cite{DBLP:conf/stoc/KallaugherP20} that lower bounds on stream length are crucial for such universality results: the authors of~\cite{DBLP:conf/stoc/KallaugherP20} exhibit a problem with a sketching complexity, which is polynomial in the input size, that can be solved in polylogarithmic space on a short dynamic stream.

\paragraph{Spanners in the sketching model.}  A subgraph $H=(V, E)$ of a graph $G=(V, E)$ is a $t$-spanner of $G$ if for every pair $u, v\in V$ one has 
$$
d_G(u, v)\leq d_H(u, v)\leq t\cdot d_G(u, v)~,
$$
where $d_G$ stands for the shortest path metric of $G$ and $d_H$ for the shortest path metric of $H$. We assume in this paper that the input graph is unweighted, as one can reduce to this case using standard techniques at the expense of a small loss in space complexity.\footnote{Specifically,  one can  partition the input edges into geometric weight classes and run our sketch based algorithm on every class, paying a multiplicative loss in space bounded by the log of the ratio of the largest weight to the smallest weight. See also \cite{ES16,ADFSW19}.} For every integer $k\geq 1$, every graph $G=(V, E)$ with $n$ vertices admits a $(2k-1)$-spanner with $O(n^{1+1/k})$ edges, which is optimal assuming the Erd\H{o}s girth conjecture. 
The greedy spanner \cite{ADDJS93,FS20}, which is sequential by nature, obtain the optimal number of edges.
The  celebrated algorithm of Baswana and Sen~\cite{BS07} obtains $(2k-1)$-spanner with $\wt{O}(n^{1+1/k})$ edges. This algorithm consists of a sequence of $k$ clustering steps, and as observed by \cite{AGM12Spanners}, can be implemented in $k$ passes over the stream using the existentially optimal $\wt{O}(n^{1+1/k})$ space. A central question is therefore whether it is possible to achieve the existentially optimal tradeoff using fewer rounds of communication, and if not, what the optimal space vs stretch tradeoff is for a given number of round of communication. Prior to our work this problem was studied in~\cite{AGM12Spanners} and~\cite{KW14}. The former showed how to construct a $(k^{\log_{2}5}-1)$-spanner in $\log_2 k$ passes using space $\wt{O}(n^{1+1/k})$, and the latter showed how to construct a $(2^{k}-1)$-spanner in two passes and $\wt{O}(n^{1+1/k})$ space. In a single pass, the previously best known algorithm which uses $n^{1+o(1)}$ space is simply to construct a spanning tree, guaranteeing distortion $n-1$. 
Thus, our first question is:

\begin{center}
\fbox{
\parbox{0.9\textwidth}{
\begin{center}
In a single pass in the dynamic semi streaming model using $\wt{O}(n)$ space, is it possible to construct an $o(n)$ spanner?
\end{center}
}
}
\end{center}
We prove the following theorem in \Cref{sec:stream}, as a corollary we obtain a positive answer to the question above (as spectral sparsifier can be computed in a single dynamic stream pass \cite{DBLP:conf/focs/KapralovLMMS14}). 
\begin{restatable}{theorem}{thmSpectralSparsAreSpanners}\label{thm:SpectralSparsAreSpanners}
	Let $G=(V,E)$ be an undirected, unweighted graph. For a parameter $\eps\in(0,\frac{1}{18}]$, suppose that $H$ is a $(1\pm\eps)$-spectral sparsifier of $G$. Then $\widehat{H}$ is an $\tilde{O}(n^{\frac23})$-spanner of $G$, where $\widehat{H}$ is unweighted version of $H$. 
\end{restatable}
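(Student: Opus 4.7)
Fix any edge $(u,v)\in E(G)$; we show $d_{\widehat H}(u,v)\le \tilde O(n^{2/3})$. Since $(u,v)\in E(G)$ provides a unit-resistance path, $R_G(u,v)\le 1$, and since spectral sparsification preserves the pseudo-inverse of the Laplacian up to a $(1\pm O(\epsilon))$ factor, $R_H(u,v)\le 1+O(\epsilon)$.

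Suppose for contradiction that $d_{\widehat H}(u,v)>t$ with $t=\tilde\Omega(n^{2/3})$. Run BFS in $\widehat H$ from $u$; let $L_i$ be the $i$-th layer, $B_i=\bigcup_{j\le i}L_j$, and let $C_i\subseteq E(\widehat H)$ be the set of edges between $L_i$ and $L_{i+1}$. These are $t$ pairwise-disjoint $u$-$v$ edge cuts in $H$ (since $\widehat H$ and $H$ share the same edge set). The Nash--Williams inequality applied to the weighted graph $H$, combined with the cut-preservation consequence of the sparsifier ($w_H(C_i)\in(1\pm\epsilon)|\partial_G(B_i)|$, obtained by plugging $x=\mathbf 1_{B_i}$ into the quadratic-form guarantee), yields
\[
\sum_{i=0}^{t-1}\frac{1}{w_H(C_i)}\;\le\; R_H(u,v)\;\le\;1+O(\epsilon),
\]
where $w_H(C_i)=\sum_{e\in C_i}w_e$; in particular, a constant fraction of the $C_i$'s must have $w_H(C_i)=\Omega(t)$.

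To pass from here to the $\tilde O(n^{2/3})$ bound I would combine the above with the edge-count identity $\sum_i|C_i|\le|E(\widehat H)|=\tilde O(n)$ by a weight-class decomposition of $H$ at threshold $T\approx n^{1/3}$. A \emph{light} cut $C_i$ (all $H$-edges of weight $\le T$) with $w_H(C_i)=\Omega(t)$ must contain $|C_i|\ge\Omega(t/T)$ edges, so at most $\tilde O(nT/t)$ such light cuts can exist. A \emph{heavy} cut contains an $H$-edge of weight $>T$; there are at most $\min(|E(H)|,\,w(H)/T) = \min(\tilde O(n),\,O(m/T))$ such edges, and each lies in at most one BFS cut. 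Setting $T\approx n^{1/3}$ balances both counts to $\tilde O(n^{2/3})$ whenever $|E(H)|$ is not the binding constraint on the heavy side.

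The main obstacle is the dense regime ($m\gg n^{4/3}$), where the heavy-edge count saturates at $|E(H)|=\tilde O(n)$ and the naive balance does not close. To cover this case uniformly, one should exploit that $(u,v)\notin H$ combined with the sparsifier identity $R_H\approx R_G$ forces $R_G(u,v)=O(\epsilon)$ (because with $(u,v)$ absent in $H$, $R_H$ is necessarily at least $R_{G}/(1-R_G)$, which approximates $R_G$ only when $R_G$ is small), tightening the Nash--Williams RHS to $O(\epsilon)$; the hypothesis $\epsilon\le 1/18$ then provides the numerical slack needed to drive the two-case balance uniformly to the claimed $\tilde O(n^{2/3})$ stretch.
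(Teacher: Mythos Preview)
Your Nash--Williams/BFS setup is the right starting point and matches the paper: one lower-bounds $R_H(u,v)$ by $\sum_i 1/w_H(C_i)$ over the BFS layer cuts, and combines with $R_H\le(1+O(\epsilon))R_G\le 1+O(\epsilon)$. The divergence --- and the gap --- is in how you upper-bound the cut weights $w_H(C_i)$.

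Your light/heavy scheme uses only two global budgets: $\sum_i|C_i|\le|E(H)|=\tilde O(n)$ and $\sum_{e\in H}w_e\approx m$. Balancing these at $T\approx n^{1/3}$ gives $t\lesssim \tilde O(nT/t)+\min(\tilde O(n),m/T)$, which does yield $t=\tilde O(n^{2/3})$ only when $m=\tilde O(n^{4/3})$; for $m=n^{1+\delta}$ with $\delta>1/3$ the bound degrades (e.g.\ $m=n^2$ gives only $t\le n$), exactly as you noticed. Your proposed rescue does not close this: the inequality ``$R_H\ge R_G/(1-R_G)$ because $(u,v)\notin H$'' is not a consequence of the sparsifier guarantee (that inequality governs $R_{G-e}$, not $R_H$), and in any case tightening the Nash--Williams right-hand side from $O(1)$ to $O(\epsilon)$ is vacuous here since $\epsilon\le 1/18$ is a fixed constant --- it does not shrink the heavy-edge term $\min(\tilde O(n),m/T)$ at all.

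The idea you are missing is that one should bound $w_H(C_i)$ not via edge counts in $H$ but via the \emph{layer sizes} $|A_i|$, using that $G$ is unweighted: the number of $G$-edges between consecutive BFS layers satisfies $W_i^G\le|A_i|\,|A_{i+1}|$, and $\sum_i|A_i|=n$ forces most products $|A_i|\,|A_{i+1}|$ to be $O((n/t)^2)$. The non-obvious step is to transfer this to $W_i^H=w_H(C_i)$, since $W_i^G$ is \emph{not} a cut in $G$ (there can be $G$-edges skipping layers). The paper shows, via a four-cut inclusion--exclusion, that $|W_i^G-W_i^H|\le\epsilon(W_{i-1}^H+W_i^H+W_{i+1}^H)$; then an amplification argument rules out any index with $W_i^H\gg(n/t)^2$, because the three-term error recursion would force nearby $W_j^H$ to blow up geometrically in $1/\epsilon$, exceeding the maximum possible cut value. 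This yields $\sum_i 1/W_i^H=\tilde\Omega(t^3/n^2)$ uniformly in $m$, which is where the $n^{2/3}$ comes from.
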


\begin{restatable}{corollary}{corSpectralSparsAreSpanners}\label{cor:SpannerBySparsifiers}
	There exists an algorithm that for any $n$-vertex unweighted graph $G$, the edges of which arrive in a dynamic stream, using $\tilde{O}(n)$ space, constructs a spanner with $O(n)$ edges and stretch $\wt{O}( n^{\frac23})$ with high probability.
\end{restatable}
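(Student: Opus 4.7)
The plan is to reduce the corollary to \Cref{thm:SpectralSparsAreSpanners} by invoking an off-the-shelf dynamic-stream spectral sparsifier, and then doing a lightweight offline post-processing to drop the edge count from $\widetilde{O}(n)$ down to $O(n)$.

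\textbf{Step 1: maintain a spectral sparsifier in one pass.} Fix $\eps = 1/18$ (a constant small enough to satisfy the hypothesis of \Cref{thm:SpectralSparsAreSpanners}). By the single-pass dynamic-stream spectral sparsification result of Kapralov, Lee, Musco, Musco and Sidford~\cite{DBLP:conf/focs/KapralovLMMS14} (or any of the subsequent improvements), we can process the insertion/deletion stream for $G$ using $\widetilde{O}(n)$ bits of space and, at the end of the stream, produce a weighted subgraph $H$ with $\widetilde{O}(n)$ edges that is a $(1\pm\eps)$-spectral sparsifier of $G$ with high probability.

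\textbf{Step 2: extract an unweighted spanner.} Let $\widehat{H}$ be the unweighted version of $H$ (same edge set, all weights set to $1$). By \Cref{thm:SpectralSparsAreSpanners}, $\widehat{H}$ is an $\widetilde{O}(n^{2/3})$-spanner of $G$. At this point we already have a spanner of $G$ with $\widetilde{O}(n)$ edges and stretch $\widetilde{O}(n^{2/3})$; the only remaining gap is that the number of edges carries polylogarithmic factors that we need to shave.

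\textbf{Step 3: offline post-processing to $O(n)$ edges.} Because $\widehat{H}$ sits in memory and is no longer a streaming object, we run a standard offline $(2k-1)$-spanner algorithm (e.g., Baswana--Sen~\cite{BS07}, or the greedy spanner) on $\widehat{H}$ with parameter $k = \lceil \log_2 n \rceil$. This produces a subgraph $H' \subseteq \widehat{H}$ with $O(n^{1+1/k}) = O(n)$ edges such that for every pair $u,v$, $d_{H'}(u,v) \leq (2k-1)\cdot d_{\widehat{H}}(u,v) = O(\log n)\cdot d_{\widehat{H}}(u,v)$. Composing the two stretch guarantees (spanner-of-spanner), $H'$ is a spanner of $G$ with stretch $O(\log n)\cdot \widetilde{O}(n^{2/3}) = \widetilde{O}(n^{2/3})$ and $O(n)$ edges.

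\textbf{Where the work is.} The genuinely non-trivial content is entirely packaged into \Cref{thm:SpectralSparsAreSpanners}, which asserts the surprising fact that dropping weights from a constant-error spectral sparsifier still preserves distances up to a polynomial factor $\widetilde{O}(n^{2/3})$. Given that theorem, the corollary is essentially a plug-and-play argument: Step~1 is a direct citation, Step~2 is one line, and Step~3 is a textbook sparsification to absorb the polylogarithmic slack into the stretch while improving the edge count to the stated $O(n)$. The only thing to be careful about is that $\eps$ must be chosen as a constant $\leq 1/18$ (so the spectral-sparsifier space stays $\widetilde{O}(n)$ and the hypothesis of \Cref{thm:SpectralSparsAreSpanners} is met), and that the stretch after composition remains $\widetilde{O}(n^{2/3})$, which it does since $\log n$ is absorbed into the $\widetilde{O}(\cdot)$.
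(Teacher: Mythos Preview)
Your proposal is correct and follows essentially the same route as the paper: invoke the single-pass streaming spectral sparsifier of~\cite{DBLP:conf/focs/KapralovLMMS14} with the constant $\eps=\tfrac{1}{18}$, then apply \Cref{thm:SpectralSparsAreSpanners} to its unweighted version. The one difference is your Step~3: the paper simply asserts that the sparsifier of~\cite{DBLP:conf/focs/KapralovLMMS14} already has $O(n/\eps^2)=O(n)$ edges (no polylogs) and so needs no post-processing, whereas you cautiously assume only $\widetilde{O}(n)$ edges and then run an offline $O(\log n)$-spanner on $\widehat{H}$ to shave the polylogs; this spanner-of-a-spanner trick is exactly what the paper itself uses later (see the remark following \Cref{thm:OnePassTradeoff}), so your extra step is harmless and in the same spirit.
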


Additionally, for the same setting, using similar techniques, we prove stretch $\wt{O}(\sqrt{m})$ (see \Cref{thm:SpectralSparsAreSpannersEdges}).

One might think that the polynomial stretch is suboptimal, but we conjecture that this is close to best possible, and provide a candidate hard instance for a lower bound in \Cref{sec:lb-conj}. Specifically,
\begin{conjecture}\label{conj:1}
Any linear sketch from which one can recover an $n^{2/3-\Omega(1)}$-spanner  with probability at least $0.9$ requires $n^{1+\Omega(1)}$ space.
\end{conjecture}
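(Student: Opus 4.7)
The plan is to convert the candidate hard distribution from the referenced Section on the lower bound conjecture into a rigorous sketching lower bound by reducing from a canonical communication-complexity problem, most naturally a variant of Boolean Hidden Hypermatching (BHH), which has driven the majority of prior linear-sketching lower bounds for graph problems. One would design an input distribution in which each player's share determines a small matching of edges, and show that recovering any $n^{2/3-\Omega(1)}$-spanner of the combined graph lets a protocol read off the XOR of the players' inputs on a large set of coordinates. Because linear sketches can be simulated in the one-way (or simultaneous) model with communication equal to the sketch length (up to low-order factors, via the reduction of Li--Nguyen--Woodruff and its refinements), such a reduction forces a space lower bound matching the communication complexity of a suitably parameterized BHH instance, namely $n^{1+\Omega(1)}$.

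More concretely, I would start from a direct-sum construction in which $n^{\alpha}$ copies of a small ``gadget graph'' are planted into a single $n$-vertex host graph, each gadget hiding an independent bit. The gadgets would be chosen so that the distance between a designated pair of probe vertices in each gadget is either $O(1)$ or $\Theta(n^{2/3-\epsilon/2})$ depending on the hidden bit; any $n^{2/3-\epsilon}$-spanner must therefore distinguish the two cases at a $1-o(1)$ fraction of the gadgets. A standard direct-sum theorem for one-way communication, in the style of information-cost arguments (Bar-Yossef--Jayram--Kumar--Sivakumar, Jayram--Kumar--Sivakumar), would then lift the per-gadget hardness to a global $n^{1+\Omega(1)}$ bound, provided the gadgets behave as nearly independent sub-instances under the sketch.

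The principal obstacle is the tension captured by the Moore bound: a graph of girth $g$ on $n$ vertices has at most $O(n^{1+2/(g-2)})$ edges, so polynomially large girth alone cannot encode $n^{1+\Omega(1)}$ ``essential'' bits. Any valid hard instance must therefore be built from graphs that are only locally tree-like but globally dense, such as products of small gadgets or cluster-based constructions, where the stretch constraint forces the spanner to preserve a rich combinatorial structure despite the graph having $n^{1+\Omega(1)}$ edges. A second subtlety is the $0.1$ failure probability: linear sketches have real- or finite-field valued measurements that can mix information across planted gadgets in non-obvious ways, so the direct-sum step must be carried out in a robust information-theoretic framework rather than at the level of exact distinguishability. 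Overcoming this robustness barrier, rather than the gadget design itself, is what appears to leave the conjecture currently open, and I would expect a full proof to require a new hard-instance family that is simultaneously (i) dense enough to break the Moore-bound obstruction, (ii) locally structured enough to force the spanner to be nearly unique under the target stretch, and (iii) symmetric enough to admit a clean direct-sum decomposition against an adversarial linear sketch.
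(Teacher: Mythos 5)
This statement is a \emph{conjecture} in the paper: the authors do not prove it, and neither do you. What the paper offers is only a candidate hard input distribution (in the appendix on the conjectured hard instance): take the $d$-th power of an $n$-cycle under a uniformly random relabeling $\pi$, keep each of the $\Theta(nd)$ potential edges independently with probability $1/2$, and then swap one random pair of edges $(a,b),(c,d)\mapsto (a,c),(b,d)$; for $d=n^{1/3+\Omega(1)}$ any $n^{2/3-\Omega(1)}$-spanner must recover the two swapped edges, and the authors conjecture--without proof--that this forces $n^{1+\Omega(1)}$ sketch size. Your proposal is likewise a research plan rather than a proof: you name BHH, direct-sum/information-cost arguments, and the simulation of linear sketches by one-way communication, but you do not instantiate any gadget, prove any per-gadget lower bound, or carry out the direct sum, and you yourself concede that the robustness step ``is what appears to leave the conjecture currently open.'' So the gap is total on both sides; there is no proof to compare against.

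Beyond that, two concrete issues with your plan are worth flagging. First, it is quantitatively circular as stated: if each gadget must separate distance $O(1)$ from distance $\Theta(n^{2/3-\epsilon/2})$, each gadget occupies $\Omega(n^{2/3-\epsilon/2})$ vertices, so there are at most $\approx n^{1/3+\epsilon/2}$ gadgets, and a direct sum yields $n^{1+\Omega(1)}$ total only if each gadget already requires communication superlinear in its own vertex count--i.e., you need the original conjecture at a smaller scale before the direct sum can help. Standard BHH-style bounds give communication sublinear in the universe size, so extracting a superlinear-in-$n$ bound requires the hidden hypermatching to live on the edge set, a point your outline does not address. Second, your planted-gadget distribution is structurally quite different from the paper's single global permuted-cycle-power instance; the paper explicitly identifies the obstacle to analyzing its instance as the fact that every edge is shared by both endpoints (an NOF-like feature), which is orthogonal to the Moore-bound and robustness obstacles you emphasize. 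Neither set of obstacles has been overcome, so the statement remains a conjecture.
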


More generally, we give the following trade off between stretch and space in a single pass:
\begin{restatable}{corollary}{thmOnePassTradeoff}\label{thm:OnePassTradeoff}
	Consider an $n$-vertex unweighted graph $G$, the edges of which arrive in a dynamic stream. For every parameter $\alpha\in(0,1)$, there is an algorithm using $\wt{O}(n^{1+\alpha})$ space, constructs a spanner with stretch $\wt{O}( n^{\frac23(1-\alpha)})$ with high probability.
\end{restatable}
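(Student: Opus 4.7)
The plan is to obtain the tradeoff as a (stronger) corollary of the spectral-sparsifier-to-spanner connection from \Cref{thm:SpectralSparsAreSpanners}, by invoking a denser sparsifier. The intuition is that the base result uses a $(1\pm\eps)$-spectral sparsifier with $\widetilde{O}(n)$ edges at constant $\eps$; paying $\widetilde{O}(n^{1+\alpha})$ space lets us take $\eps$ polynomially small, producing a denser spectrally-closer sparsifier whose unweighted version should be a substantially better spanner.

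First I would run the sketching-based spectral sparsifier algorithm of~\cite{DBLP:conf/focs/KapralovLMMS14} at approximation parameter $\eps = \Theta(n^{-\alpha/2})$, rather than the constant used to deduce \Cref{cor:SpannerBySparsifiers}. This takes a single pass over the dynamic stream, uses $\widetilde{O}(n/\eps^2) = \widetilde{O}(n^{1+\alpha})$ space, and produces a $(1\pm\eps)$-spectral sparsifier $H$ of $G$ with $\widetilde{O}(n^{1+\alpha})$ reweighted edges. Since $\alpha>0$, we have $\eps \le 1/18$, so the hypothesis of \Cref{thm:SpectralSparsAreSpanners} is satisfied and $H$ is already an $\widetilde{O}(n^{2/3})$-spanner of $G$; the task is to argue the sharper bound.

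Next I would revisit the proof of \Cref{thm:SpectralSparsAreSpanners} and track where the edge count of $H$ enters the stretch analysis. The $\widetilde{O}(n^{2/3})$ bound there should arise from balancing effective-resistance control in $H$ (which governs short-path distances) against a counting bound on how many vertex pairs can simultaneously suffer large stretch (which scales with $|E(H)|$). I expect the sharpened bound to take the form $\widetilde{O}\bigl((n^2/|E(H)|)^{2/3}\bigr)$, specializing to $\widetilde{O}(n^{2/3})$ when $|E(H)| = \widetilde{O}(n)$ and to the claimed $\widetilde{O}(n^{2(1-\alpha)/3})$ when $|E(H)| = \widetilde{O}(n^{1+\alpha})$. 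The final output is the unweighted subgraph $\widehat{H}$, having $\widetilde{O}(n^{1+\alpha})$ edges and the desired stretch with high probability.

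The main obstacle is carrying out this re-analysis: one has to verify that the argument underlying \Cref{thm:SpectralSparsAreSpanners} extends cleanly to the denser regime, rather than relying on $|E(H)| = O(n)$ in a way specific to the base case. If the edge count enters merely through a balancing step, the generalization is transparent; if not, a natural fallback is a clustering reduction: sample $\widetilde{O}(n^{1-\alpha})$ landmark vertices, sketch short cluster trees to them using the extra $n^\alpha$ factor of space, contract to $\widetilde{O}(n^{1-\alpha})$ supernodes, and apply \Cref{cor:SpannerBySparsifiers} at the reduced scale, so that the contracted metric is spanned with stretch $\widetilde{O}((n^{1-\alpha})^{2/3})$ while the intra-cluster radii are kept within the same budget.
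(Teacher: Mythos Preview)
Your primary approach has a genuine gap. Tracking the dependence on $\eps$ through the proof of \Cref{thm:SpectralSparsAreSpanners} does not yield a polynomial improvement. In \Cref{claim:sumWH} the only place $\eps$ enters is through the auxiliary parameter $\alpha = 10\log_{1/(6\eps)} n^2 = \Theta(\log n / \log(1/\eps))$, and the final bound is $s \le O(\alpha^{2/3} n^{2/3})$. With $\eps$ constant this is $\widetilde{O}(n^{2/3})$; taking $\eps = n^{-\Theta(1)}$ merely drives $\alpha$ down to a constant, shaving the $\log^{2/3} n$ factor but nothing more. The $n^{2/3}$ itself comes from the constraint $\sum_i |A_i| = n$ on the BFS layers of $\widehat{H}$ combined with $W_i^G \le |A_i|\,|A_{i+1}|$; neither of these sees $|E(H)|$, so your conjectured form $\widetilde{O}\bigl((n^2/|E(H)|)^{2/3}\bigr)$ does not emerge. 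Indeed, rerunning the tightness construction of \Cref{lem:tightnessn} with layer width $a \approx n^{(1+\alpha)/3}$ shows that a single effective-resistance-sampled sparsifier at $\eps = n^{-\alpha/2}$ can still have stretch $\widetilde{\Omega}(n^{(2-\alpha)/3})$, which is strictly worse than the target $n^{2(1-\alpha)/3}$.

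The paper's proof is quite different and much simpler: it is a black-box reduction (\Cref{thm:meta-space-stretch}). One fixes, before the stream, a family $\mathcal{P}$ of $\widetilde{O}(n^{2\alpha})$ vertex subsets, each of size $O(n^{1-\alpha})$, such that every pair $u,v\in V$ lies together in some $P\in\mathcal{P}$ (random sampling suffices). Then one runs the base one-pass algorithm of \Cref{cor:SpannerBySparsifiers} independently on each induced subgraph $G[P]$ and outputs the union of the resulting spanners. Space is $\sum_P \widetilde{O}(|P|) = \widetilde{O}(n^{1+\alpha})$, and for any edge $(u,v)$ the set $P$ containing both endpoints gives $d_H(u,v)\le \widetilde{O}(|P|^{2/3}) = \widetilde{O}(n^{2(1-\alpha)/3})$. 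Your fallback gestures in this direction, but the contraction and ``cluster trees to landmarks'' are unnecessary and not obviously implementable in one pass; the key move is simply to run the base algorithm on many small induced subgraphs that together cover all vertex pairs.
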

Similarly, for the same setting, we prove stretch $\wt{O}(\sqrt{m}\cdot n^{-\alpha})$ (see \Cref{thm:TradeOffOnePassEdges}).

Next, we consider the case when we are allowed to take more than one pass over the stream and ask the following question:
\begin{center}
\fbox{
\parbox{0.9\textwidth}{
\begin{center}
	For positive integers $k$ and $s$, what is the minimal $f_n(k,s)$ such that an $f_n(k, s)$-stretch spanner can be constructed using $s$ passes over a dynamic stream of updates to the input $n$-vertex graph using $\tilde{O}(n^{1+\frac1k})$ space?
\end{center}
}
}
\end{center}

We present two results, which together improve upon all prior work on the problem. At a high level both results are based on the idea of repeatedly contracting low diameter subgraphs and running a recursive spanner construction on the resulting supergraph. The main idea of the analysis is to carefully balance two effects: {\bf (a)} loss in stretch due to the contraction process and {\bf (b)} the reduction in the number of nodes in the supergraph. Since the number of nodes in the supergraphs obtained through the contraction process is reduced, we can afford to construct better spanners on them and still fit within the original stretch budget. A careful balancing of these two phenomena gives our results. Our first result uses a construction based on  a clustering primitive implicit in~\cite{KW14} (see \Cref{lem:KW_clustering}) and gives the best known tradeoff in at most $\log k$ passes:

\begin{restatable}{theorem}{thmKWInPasses}\label{thm:KWInPasses}
	For every real $k\in [1,\log n]$, and integer $g\in [1,\log k]$, there is a $g+1$ pass dynamic stream algorithm that given an unweighted, undirected $n$-vertex graph $G=(V,E)$, uses $\tilde{O}(n^{1+\frac1k})$ space, and computes w.h.p. a spanner $H$ with   $\tilde{O}(n^{1+\frac1k})$ edges and stretch $2\cdot(2^{\left\lceil (\frac{k+1}{2})^{\nicefrac{1}{g}}\right\rceil }-1)^{g}-1< 2^{g\cdot k^{\nicefrac{1}{g}}}\cdot2^{g+1}$.	
\end{restatable}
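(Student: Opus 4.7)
My plan is an iterative contraction scheme with $g+1$ passes: $g$ successive cluster-and-contract passes followed by one final spanner-construction pass. Set $k' := \lceil((k+1)/2)^{1/g}\rceil$. Starting with $G_0 := G$, in pass $i \in \{1,\dots,g\}$ I would invoke the KW-style clustering primitive of Lemma~\ref{lem:KW_clustering} on $G_{i-1}$ with parameters chosen so that in one pass it outputs (i) a partition of $V(G_{i-1})$ into clusters whose Steiner trees -- added to the output spanner $H$ -- give every intra-cluster $G_{i-1}$-edge a detour of at most $2^{k'}-1$ steps, and (ii) a small set of inter-cluster edges (also added to $H$). I then form $G_i$ by contracting each cluster to a supernode. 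In pass $g+1$, I would run a $(2k'-1)$-stretch Baswana--Sen style spanner on the much smaller supergraph $G_g$, un-contract, and append the result to $H$.

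For the stretch I would telescope. Inductively, every super-edge of $G_i$ corresponds, after un-contracting through levels $i, i-1, \dots, 1$, to a walk of length at most $(2^{k'}-1)^i$ in $H$. Applied to an arbitrary edge $uv\in E(G)\setminus H$, either $u$ and $v$ are identified in some $G_i$ with $i\le g$ (yielding a detour of length $\le (2^{k'}-1)^g$ in $H$), or they remain distinct through all $g$ contractions, in which case the final-pass spanner supplies a $G_g$-path of length $\le 2k'-1$ whose un-contracted image has length at most $2(2^{k'}-1)^g-1$ in $H$. Bounding $k' \le k^{1/g}+1$ then yields the looser closed form $2^{g\cdot k^{1/g}}\cdot 2^{g+1}$ stated in the theorem.

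For space I would appeal to Lemma~\ref{lem:KW_clustering} to bound each pass's sketch size and each level's spanner contribution. Since the vertex counts $n_i = |V(G_i)|$ shrink geometrically under the clustering, both the per-pass sketch space and the accumulated spanner edge count fit into $\tilde O(n^{1+1/k})$ by a geometric-series argument, with the specific choice $k'=\lceil((k+1)/2)^{1/g}\rceil$ arising precisely from balancing $k'$ against $g$ and against the shrinkage rate so as to match the overall budget.

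The main obstacle I anticipate is the dynamic-stream implementation of the contractions: each $G_i$ is not a linear image of $G$ but depends on the clusterings produced in prior passes, so one cannot pre-commit to a single linear sketch covering all $g+1$ passes. The resolution is to maintain, at pass $i$, a fresh linear sketch of $G$'s edge-incidence matrix that, given the pass-$(i-1)$ clustering as side information, decodes into enough information to simulate Lemma~\ref{lem:KW_clustering} on the virtual graph $G_{i-1}$. Verifying that these adaptive decodings compose across $g+1$ passes without inflating the failure probability beyond $1/n^{\Omega(1)}$, and that the arithmetic of the chosen $k'$ simultaneously satisfies both the stretch and space constraints at every level, is the technical core of the proof.
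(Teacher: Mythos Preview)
Your overall plan—iterate the KW clustering of Lemma~\ref{lem:KW_clustering} $g$ times, contract after each, then finish in one more pass—is exactly what the paper does. But your final pass and its stretch accounting are inconsistent.

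You propose a $(2k'-1)$-stretch Baswana--Sen spanner on $G_g$ in the $(g{+}1)$-st pass. First, such a spanner cannot be built in a single pass (it needs $\Theta(k')$ passes). Second, even granted it, your stretch bookkeeping is off: a $G_g$-path of length $\ell$ un-contracts to at most $(\ell{+}1)D_g+\ell$ in $H$, where $D_g=(2^{k'}-1)^g-1$ is the level-$g$ cluster diameter; with $\ell=2k'-1$ this gives roughly $2k'\cdot(2^{k'}-1)^g$, not the claimed $2(2^{k'}-1)^g-1$. The claimed bound corresponds to $\ell=1$, and indeed that is what the paper's final pass does: it simply recovers one representative edge between \emph{every} pair of surviving level-$g$ clusters. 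This is affordable precisely because the paper's level-dependent sampling probabilities $p_i=n^{-(r+1)^{i-1}/k}$ (with $r=k'-1$; a fixed $p$ would not suffice) drive $|\mathcal{P}_g|$ down to $\tilde O(n^{(1+1/k)/2})$, so that $|\mathcal{P}_g|^2=\tilde O(n^{1+1/k})$; the choice $k'=\lceil((k+1)/2)^{1/g}\rceil$ is exactly what makes $(r{+}1)^g\ge(k{+}1)/2$ and hence this bound hold. A further detail you gloss over: Lemma~\ref{lem:KW_clustering} is a two-pass primitive (partition after pass one, subgraph $H$ only after pass two); the $g{+}1$ pass count comes from overlapping the second pass of level $i$ with the first pass of level $i{+}1$, which is possible because $\mathcal{P}_i$ alone suffices to define $\mathcal{G}_i$.
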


Using the same algorithm, while replacing the aforementioned clustering primitive from \Cref{lem:KW_clustering} with a clustering primitive from~\cite{BS07} (see \Cref{lem:BS_clusters}), we obtain the following result, which provides the best known tradeoff for more than $\log_k$ passes: 
\begin{restatable}{theorem}{thmBSInPasses}\label{thm:BSInPasses}
	For every real $k\in [1,\log n]$, and integer $g\in [1,\log k]$, there is a $g\cdot\left(\left\lceil (\frac{k+1}{2})^{\nicefrac{1}{g}}\right\rceil -1\right)+1<g\cdot k^{\nicefrac{1}{g}}+1$ pass dynamic stream algorithm that given an unweighted, undirected $n$-vertex graph $G=(V,E)$, uses $\tilde{O}(n^{1+\frac1k})$ space, and computes w.h.p. a spanner $H$ with  $\tilde{O}(n^{1+\frac1k})$ edges and stretch $2\cdot\left(2\cdot\left\lceil (\frac{k+1}{2})^{\nicefrac{1}{g}}\right\rceil -1\right)^{g}-1\approx2^{g}\cdot\left(k+1\right)$.	
\end{restatable}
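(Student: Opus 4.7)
The plan is to reuse the recursive contract-and-recurse scheme from the proof of Theorem~\ref{thm:KWInPasses}, but to swap out the single-pass clustering primitive of Lemma~\ref{lem:KW_clustering} for the Baswana--Sen primitive of Lemma~\ref{lem:BS_clusters}, which spends $r-1$ passes in exchange for producing clusters of radius only $r-1$ (rather than $2^{r}-1$). Throughout I write $r:=\lceil((k+1)/2)^{1/g}\rceil$, so that $r^{g}\ge(k+1)/2$ and $r-1\le((k+1)/2)^{1/g}\le k^{1/g}$.

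Starting from $G_{0}:=G$, for $j=0,1,\dots,g-1$ I would apply Lemma~\ref{lem:BS_clusters} to the current contracted graph $G_{j}$: this uses $r-1$ stream passes, adds $\wt{O}(n^{1+1/k})$ edges to the spanner, and produces a partition of $V(G_{j})$ into clusters of radius at most $r-1$ in $G_{j}$. The next super-graph $G_{j+1}$ is then obtained by contracting every cluster to a single super-vertex and retaining the inter-cluster edges. After $g$ such levels a single additional pass on $G_{g}$ wraps up the remaining edges, exactly as in the proof of Theorem~\ref{thm:KWInPasses}. The output spanner $H$ is the union of all edges added across the $g+1$ phases; the total number of passes is $g(r-1)+1\le g\cdot k^{1/g}+1$, and both the $\wt{O}(n^{1+1/k})$ space and edge bounds follow from summing the per-round guarantees of Lemma~\ref{lem:BS_clusters} over $g+1=O(\log n)$ rounds.

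The heart of the argument is the stretch bound, which I would prove by induction on $j$. Let $\sigma_{j}$ denote a worst-case upper bound, counted in original $G$-edges, on the spanner-distance between the two endpoints of any edge of $G_{j}$. Trivially $\sigma_{0}=1$. For $j\to j+1$: any edge of $G_{j+1}$ arises from an inter-cluster edge $(x,y)$ of $G_{j}$, and Lemma~\ref{lem:BS_clusters} guarantees that $x,y$ are joined by a path of at most $2r-1$ edges of $G_{j}$ that all lie in the spanner; since each $G_{j}$-edge is itself covered by at most $\sigma_{j}$ original edges, this gives $\sigma_{j+1}\le(2r-1)\sigma_{j}$, and hence $\sigma_{g}\le(2r-1)^{g}$. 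For an arbitrary edge $(u,v)\in E$ of $G$ I would then locate the first level $j^{*}\le g$ at which $u$ and $v$ fall into the same cluster, and combine the two radial intra-cluster spanner paths at that level with the observation that $(u,v)$ is itself a single $G$-edge to conclude that the stretch is at most $2(2r-1)^{g}-1$.

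The main obstacle I anticipate is the constant-factor bookkeeping in the stretch induction: matching the target $2(2r-1)^{g}-1$ exactly, rather than settling for an inflated $O(r^{g})$ bound. The cleanest way to handle this is to set up $\sigma_{j}$ to track cluster diameters (in $G$-edges) directly, and at the final step to save one edge by noting that the original $G$-edge $(u,v)$ itself closes the cycle at level $j^{*}$. Everything else --- the pass count, the $\wt{O}(n^{1+1/k})$ space, and the translation of super-graph BS clusters into bounded-length spanner paths in $G$ --- follows routinely once the guarantees of Lemma~\ref{lem:BS_clusters} are in hand.
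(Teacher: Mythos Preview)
Your high-level plan---reuse the recursive contract-and-recurse framework of Theorem~\ref{thm:KWInPasses} but swap in the Baswana--Sen primitive of Lemma~\ref{lem:BS_clusters}---is exactly the paper's approach, and your final pass count, space bound, and stretch target are all correct. However, the stretch argument as written has a real gap, stemming from a misreading of Lemma~\ref{lem:BS_clusters}.

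Lemma~\ref{lem:BS_clusters} returns a \emph{partial} partition $\mathcal{P}$: some super-nodes of $G_j$ are dropped, and it is precisely for edges incident to those dropped nodes that the lemma guarantees $d_H(u,v)\le 2i-1$. For an inter-cluster edge $(x,y)$ with both $x,y\in\cup\mathcal{P}$---which is exactly the kind of edge that survives into $G_{j+1}$---the lemma provides no spanner path at all at level $j$; such edges are deferred to later levels. So your inductive justification ``Lemma~\ref{lem:BS_clusters} guarantees that $x,y$ are joined by a path of at most $2r-1$ edges of $G_j$ that all lie in the spanner'' is not what the lemma says, and the recurrence $\sigma_{j+1}\le(2r-1)\sigma_j$ is left unsupported. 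Relatedly, your final step (``locate the first level $j^*\le g$ at which $u,v$ fall into the same cluster'') ignores both the case where one of $u,v$ becomes unclustered at some intermediate level and the case where they stay in distinct clusters through level $g$. The paper instead tracks the $G$-diameter $D_j$ of level-$j$ clusters via $D_{j}\le 2(r-1)+(2r-1)D_{j-1}$ (your $r$), giving $D_j=(2r-1)^j-1$, and then does a three-way case split on the original edge $(u,v)$: one endpoint unclustered at some level (use the lemma's $2i-1$ bound there), both survive to level $g$ (the extra last pass adds one inter-cluster edge, giving $1+2D_g=2(2r-1)^g-1$), or both land in the same cluster (bounded by $D_g$).

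A minor bookkeeping point: with clustering parameter $i=r-1$ (your $r$), Lemma~\ref{lem:BS_clusters} is an $r$-pass procedure, not $(r-1)$-pass. The paper gets your total $g(r-1)+1$ by observing that the partition $\mathcal{P}$ is already known after $r-1$ passes, so the final pass of each iteration (which completes $H_j$) can be merged with the first pass of the next.
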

The proofs of \Cref{thm:KWInPasses} and \Cref{thm:BSInPasses} are presented in \Cref{sec:spanner-sketch}. We present our improvements over prior work in Table~\ref{tab:results} below, where the results of \Cref{thm:KWInPasses} and \Cref{thm:BSInPasses} are present via several corollaries, presented in \Cref{sec:corollaries}.

\begin{figure}[h]
	\setlength{\columnseprule}{0pt}
	\begin{multicols}{2}
		\renewcommand{\arraystretch}{1.3}	
	\scalebox{0.8}{
		\begin{tabular}{|l|l|l|l|}
		\hline
		\textbf{\#Passes}        & \textbf{Space}             & \textbf{Stretch}                            & \textbf{Reference}                 \\ \hline		
		$1$                      & $\tilde{O}(n)$             & $\tilde{O}(n^{\nicefrac23})$                    & \Cref{cor:SpannerBySparsifiers} \\ \hline
		$1$                      & $\tilde{O}(n^{1+\frac1k})$             & $\tilde{O}(n^{\nicefrac23(1-\frac1k)})$                    & \Cref{thm:OnePassTradeoff} \\ \hline\hline
		
		$2$                      & $\tilde{O}(n^{1+\nicefrac1k})$ & $2^k-1$                                     & \cite{KW14}                         \\ \hline
		$2$                      & $\tilde{O}(n^{1+\nicefrac1k})$ & $2^{\frac{k+3}{2}}-3$                                     & \Cref{cor:BetterKW}                        \\ \hline
		$g+1$                    & $\tilde{O}(n^{1+\nicefrac1k})$ & $2^{g\cdot k^{\nicefrac{1}{g}}}\cdot2^{g+1}$ & \Cref{thm:KWInPasses}           \\ \hline\hline
		
		$\log k$                 & $\tilde{O}(n^{1+\nicefrac1k})$ & $k^{\log5}-1$                             & \cite{AGM12Spanners}                \\ \hline
		$\log (k+1)$                 & $\tilde{O}(n^{1+\nicefrac1k})$ & $2\cdot k^{\log 3}-1$                             & \Cref{cor:logKpasses}                  \\ \hline\hline
		$g\cdot k^{\nicefrac{1}{g}}+1$ & $\tilde{O}(n^{1+\nicefrac1k})$ & $\approx2^{g}\cdot\left(k+1\right)$       & \Cref{thm:BSInPasses}             \\ \hline
		
		$k$                      & $\tilde{O}(n^{1+\nicefrac1k})$ & $2k-1$                                      & \cite{BS07}          \\ \hline
		$\lfloor\frac{k}{2}\rfloor+1$  & $\tilde{O}(n^{1+\nicefrac1k})$ & $2k-1$&\Cref{cor:halfBS}, \cite{FWY20}\\\hline

\end{tabular}} \par 
		\includegraphics[scale=.4]{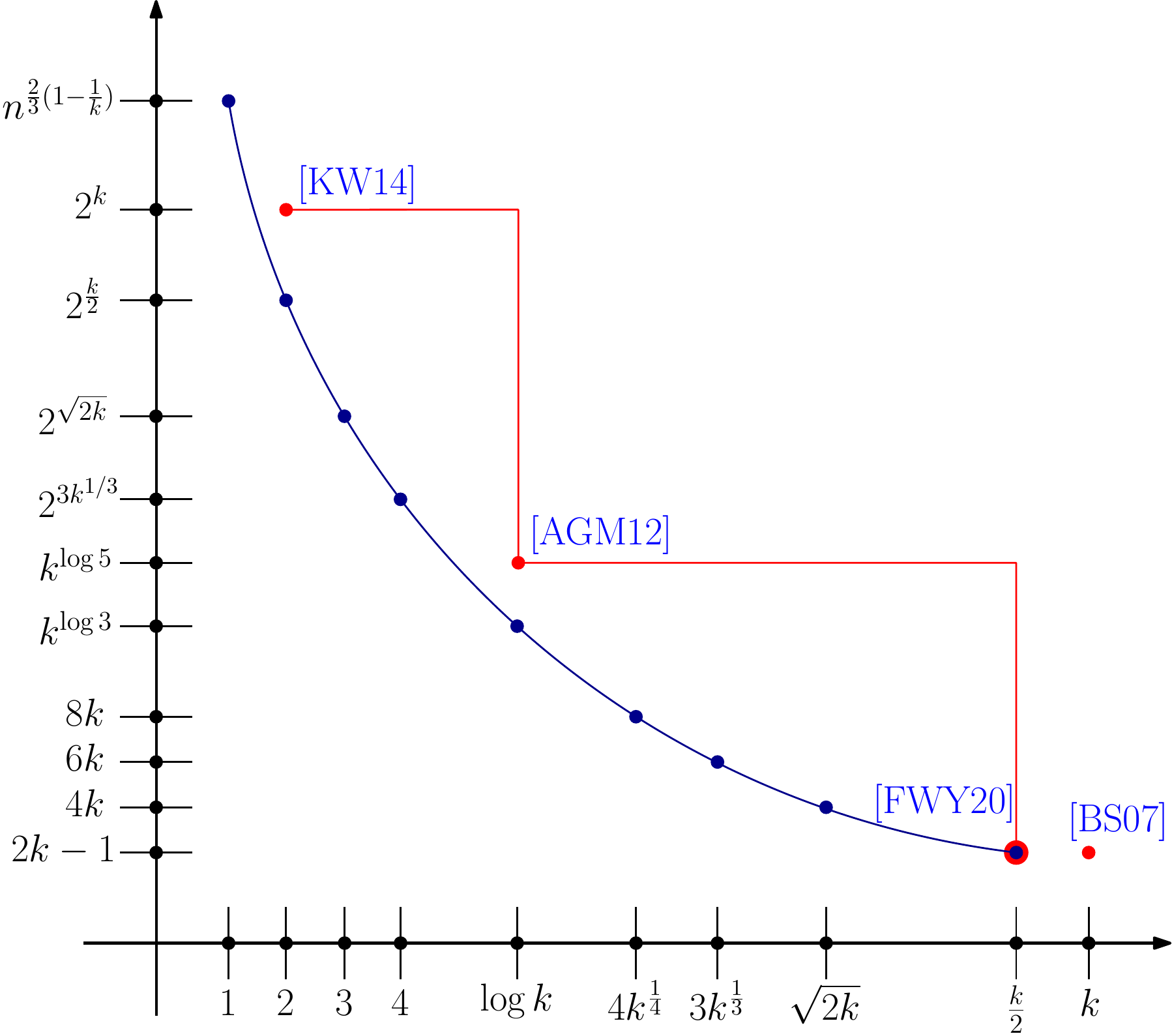}\par 
	\end{multicols}
	\vspace{-10pt}
	\caption{Trade-offs of various algorithms between stretch to number of passes. On the left summery of previous and current results. 
	In \Cref{thm:KWInPasses} and \Cref{thm:BSInPasses} $g$ can be chosen to be any integer in $[1,\log k]$.
	On the right in blue is a plot of the results in this paper, while the previous results depicted in red. All the results are for algorithms using $\tilde{O}(n^{1+\frac1k})$ space. The $Y$ axis represents the stretch, while the $X$ axis is the number of passes.
	Some second order terms are neglected.
}
	\label{tab:results}

\end{figure}

\paragraph{Simultaneous communication model.}  We also consider the related simultaneous communication model\footnote{This model has also been referred to as {\bf distributed sketching} in the literature (see e.g., \cite{NelsonY19}).}, which we now define. In the {\em simultaneous communication model} every vertex of the input graph $G=(V, E), |V|=n,$ knows its list of neighbors (so that every $e=(u, v)\in E$ is known to  both $u$ and $v$), and all vertices have a source of shared randomness. Communication proceeds in rounds, where in every round the players simultaneously post short messages on a common board for everyone to see (note that equivalently, one could consider a coordinator who receives all the messages in a given round, and then posts a message of unbounded length on the board). Note that a given player's message in any given round may only depend on their input and other players' messages in previous rounds. The content of the board at the end of the communication protocol must reveal the answer with high constant probability. The {\em cost} of a protocol in the simultaneous communication model is the length of the longest message communicated by any player.

Sketching algorithms for dynamic connectivity and cut/spectral approximations based on the idea of applying a sketch to the edge incidence matrix of the input graph~\cite{AGM12LinearMeasure,DBLP:conf/focs/KapralovLMMS14,DBLP:conf/soda/KapralovMMMNST20} immediately yield efficient single pass simultaneous communication protocols with only polylogarithmic message length. We note, however, that existing sketch based algorithms for spanner construction (except for our result in \Cref{cor:SpannerBySparsifiers} and the trivial $k$-pass implementation of the algorithm of Baswana and Sen~\cite{BS07}) {\bf do not} yield low communication protocols. This is because they achieve reductions in the number of rounds by performing some form of leader election and amortizing communication over all vertices. See \Cref{rem:ImpossibleKWinSCM} in \Cref{sec:KW} for more details.  To illustrate the difference between dynamic streaming and simultaneous communication model, consider the following artificial problem. Suppose that we are given a graph with all but $\sqrt{n}$ isolated vertices, and the task is to recover the induced subgraph. Then using sparse recovery we can recover all the edges between the vertices using a single pass over a dynamic stream of updates. However, it is clear from information theoretic considerations that a typical vertex will need to communicate $\Omega(\sqrt{n})$ bits of information to solve this problem. 
	
The main result of \Cref{sec:filtering} is the following theorem. 
	
\begin{restatable}{theorem}{thmfiltering}\label{thm:filtering}
		For any integer $g\ge1$, there is an algorithm (see \Cref{alg:filttering}) that in $g$ rounds of communication outputs a spanner with stretch $\min\left\{\tilde{O}(n^{\frac{g+1}{2g+1}}),~(12+o(1))\cdot n^{2/g}\cdot \log n\right\}$. 
\end{restatable}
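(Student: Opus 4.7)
The plan is to design an iterative $g$-round simultaneous-communication protocol in which every vertex broadcasts only polylogarithmically many bits per round, using shared randomness to coordinate local decisions on which incident edges to place in the spanner. Between rounds we maintain a partition of vertices into clusters of geometrically growing radius, with each vertex learning its own cluster ID from the broadcasts of the previous round. We then give two separate analyses of the same algorithm, yielding the two bounds whose minimum appears in the theorem.

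First I would handle the base case $g=1$ by invoking the sketch-based spectral-sparsifier protocol of \cite{DBLP:conf/focs/KapralovLMMS14}: it is a one-round simultaneous protocol with polylogarithmic per-vertex message length, and by \Cref{thm:SpectralSparsAreSpanners} the unweighted support of the resulting sparsifier is already an $\tilde O(n^{2/3})$-spanner, matching $(g+1)/(2g+1)=2/3$ at $g=1$. For general $g$ I would filter edges iteratively: in each round, every vertex sends a bounded number of randomly-sampled incident edges at carefully chosen rates, together with its current cluster ID. Vertices then use the board to grow clusters, either by latching onto a sampled neighbor living in a sampled cluster, or by contributing a small number of representative edges per distinct adjacent cluster to the spanner.

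For the bound $\tilde O(n^{(g+1)/(2g+1)})$ I would argue inductively: after producing clusters via one round of the algorithm we contract and recurse on the supergraph for the remaining $g-1$ rounds, balancing the per-round stretch contribution against the multiplicative decrease in the number of super-vertices. Solving the resulting recurrence $T(g)=R\cdot T(g-1)$ under the constraint that the stretch produced in each round is at most a fixed power of the current super-vertex count yields the exponent $(g+1)/(2g+1)$. For the bound $(12+o(1))\, n^{2/g}\log n$ I would re-analyze the same algorithm through a Baswana--Sen-style lens: in each round every current cluster is independently sampled at rate $n^{-1/g}$, clusters latch onto sampled neighbors (growing radius by the previous round's diameter), and any cluster without a sampled neighbor adds one spanner edge per distinct adjacent cluster. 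After $g$ rounds only $O(1)$ clusters survive with high probability, and a standard Chernoff/union-bound argument pins the final cluster radius, and hence the spanner stretch, at the claimed quantity.

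The main obstacle is implementing the cluster-growth step under the polylogarithmic per-vertex communication budget: in the standard sequential Baswana--Sen a vertex must enumerate all of its neighboring clusters to find a sampled one to join, which would cost $\Omega(n)$ bits in the worst case. The fix is to exploit shared randomness so that each vertex, having seen the broadcast cluster IDs, can locally simulate whether each of its incident clusters was sampled without any extra communication; only the vertex's chosen latch-edge (encoded as a single cluster ID) then needs to be broadcast. A secondary subtlety is the sharp constant $12+o(1)$ in the second bound, which will require a careful union-bound over the $O(n^2)$ cluster pairs across $g$ rounds, with the $\log n$ factor tracing back to the high-probability guarantee that every non-isolated cluster has a sampled cluster in its neighborhood in each round.
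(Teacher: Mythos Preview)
Your proposal analyzes the wrong algorithm. The theorem is specifically about \Cref{alg:filttering}, which is not a clustering/contraction scheme: in each of the $g$ rounds it computes a spectral sparsifier $H_i$ of the graph $G_i$ of currently unsatisfied edges, adds $\widehat{H}_i$ to the spanner, and deletes every edge whose endpoints are now within hop-distance $t$ in the accumulated spanner. The proof must show that for the right $t$ the set $E_{g+1}$ is empty. The paper's two bounds are two ways of controlling $|E_{i+1}|$ in terms of $|E_i|$. For $\tilde O(n^{(g+1)/(2g+1)})$, the key input is \Cref{cor:stretcheffres}/\Cref{cor:stretcheffresedge}: any edge surviving to $E_{i+1}$ has $d_{\widehat H_i}(u,v)>t$, hence $R^{G_i}_{uv}=\tilde\Omega(t^2/|E_i|)$ (and $\tilde\Omega(t^3/n^2)$ at $i=1$); combined with $\sum_{e\in E_i}R_e^{G_i}\le n-1$ this gives $|E_i|=\tilde O(n^{i+1}/t^{2i-1})$, and choosing $t$ so that $|E_g|\le t^2$ yields the exponent $(g+1)/(2g+1)$. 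For $(12+o(1))n^{2/g}\log n$, the paper performs (in the analysis only) a ball-growing low-diameter decomposition of the weighted sparsifier $H_i$ with parameter $\phi=\tfrac13 n^{-2/g}$: clusters have hop-diameter $\le t=(4+o(1))(\ln n)/\phi$ and at most a $\phi$-fraction of the $H_i$-weight is inter-cluster; since $H_i$ is a cut sparsifier of $G_i$, $|E_{i+1}|\le 3\phi|E_i|$, so $g$ rounds empty the edge set.

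Your clustering-with-contraction plan is essentially the machinery of \Cref{thm:KWInPasses}/\Cref{thm:BSInPasses}, not this theorem, and it does not recover either bound. A Baswana--Sen scheme with sampling rate $n^{-1/g}$ produces cluster diameters growing like $(2r+1)^g$ (cf.\ the stretch analysis in \Cref{thm:BSInPasses}), i.e., polynomial in $g$, not $n^{2/g}\log n$; the $n^{2/g}$ factor arises specifically from the $O((\log n)/\phi)$ hop-radius of ball growing at cut parameter $\phi=n^{-2/g}$, which has no analogue in your outline. Likewise, your recurrence $T(g)=R\cdot T(g-1)$ from contraction does not naturally produce $(g+1)/(2g+1)$: that exponent comes from the interplay between the $t^3/n^2$ and $t^2/m$ effective-resistance lower bounds and the global constraint $\sum_e R_e\le n-1$, none of which feature in your argument.
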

Note that when $g=1$, the above theorem gives a $\wt{O}(n^{2/3})$ approximation using polylogarithmic communication per vertex. We think that the $n^{2/3}$ approximation is likely best possible in polylogarithmic communication per vertex, and the same candidate hard instance from \Cref{sec:lb-conj} that we propose for \Cref{conj:1} can probably be used to obtain a matching lower bound. Analyzing the instance appears challenging due to the fact that every edge is shared by the two players -- exactly the feature of our model that underlies our algorithmic results (this sharing is crucial for both connectivity and spectral approximation via sketches). This model bears some resemblance to the number-on-the-forehead  (NOF) model in communication complexity (see, for example,~\cite{KallaugherMPV19}, where a connection of this form was made formal, resulting in conditional hardness results for subgraph counting in data streams).

The proof of this theorem is presented in \Cref{sec:filtering}.

Additionally, in \Cref{subsec:SimModelTradeoff} we first provide a trade off between size of the communication per player and stretch in one round of communication.

\begin{restatable}{theorem}{thmSCMtradeoffoneround}\label{thm:SCmodel-1round}
	There is an algorithm that in $1$ round of communication, where each player communicates $\tilde{O}(n^\alpha)$ bits, outputs a spanner with stretch $$\min\left\{\wt{O}(n^{(1-\alpha)\frac{2}{3}}),~\wt{O}\left(\sqrt{m} \cdot n^{-\alpha}\right)\right\}~.$$ 
\end{restatable}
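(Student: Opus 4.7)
The plan is to port the single-pass dynamic-stream algorithms underlying \Cref{thm:OnePassTradeoff} and \Cref{thm:TradeOffOnePassEdges} to the simultaneous communication setting by exploiting the linear-sketch/AGM~\cite{AGM12LinearMeasure} structure that underlies them. Concretely, both streaming algorithms build on the spectral-sparsifier sketch of~\cite{DBLP:conf/focs/KapralovLMMS14} applied to the signed vertex--edge incidence matrix $B\in\{-1,0,+1\}^{V\times E}$ (fixed under a canonical orientation of the edges), using $\tilde O(n^{1+\alpha})$ space in total, and then extract a spanner via \Cref{thm:SpectralSparsAreSpanners} and \Cref{thm:SpectralSparsAreSpannersEdges}. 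Combining the two stretch bounds gives the $\min\{\tilde O(n^{(1-\alpha)2/3}),\tilde O(\sqrt m\cdot n^{-\alpha})\}$ guarantee.

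The conversion to the simultaneous model is then immediate. First, using the shared random tape, all players agree on the sketching matrix $\Pi$ of dimensions $\tilde O(n^\alpha)\times |E|$. Second, each vertex $v$ knows its neighborhood, and therefore knows the row $x_v\in\{-1,0,+1\}^E$ of $B$ associated with~$v$; it locally computes $\Pi x_v$, a vector of $\tilde O(n^\alpha)$ bits (standard discretization as in~\cite{AGM12LinearMeasure,NelsonY19} keeps each entry at polylog bits). Third, in the single allowed round every player sends $\Pi x_v$ to the coordinator. By linearity the coordinator now holds $\Pi B^\top$, i.e.\ the exact sketch that the streaming algorithm from \Cref{thm:OnePassTradeoff}/\Cref{thm:TradeOffOnePassEdges} would have produced, so it can run the recovery procedure verbatim to obtain a $(1\pm\eps)$-spectral sparsifier $H$ and then read off the unweighted spanner $\widehat H$ promised by \Cref{thm:SpectralSparsAreSpanners} and \Cref{thm:SpectralSparsAreSpannersEdges}.

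The main obstacle, and really the only point that needs checking, is that the trade-off construction used in the proofs of \Cref{thm:OnePassTradeoff} and \Cref{thm:TradeOffOnePassEdges} truly respects the row decomposition of $B$ and does not sneak in a second pass over the input in disguise. This is the case because the extra $\tilde O(n^\alpha)$ multiplicative factor in space is purchased by increasing the width of the spectral-sparsifier sketch (more repetitions/higher precision of the same $\ell_2$ heavy-hitter primitive), a modification that keeps the sketch a linear map applied independently to each row $x_v$. Consequently the total sketch still splits into $n$ pieces of size $\tilde O(n^\alpha)$ -- one per vertex -- and all further processing occurs at the coordinator, giving a one-round SCM protocol with per-player communication $\tilde O(n^\alpha)$ and the claimed stretch.
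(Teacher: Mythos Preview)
Your proposal works for the first bound $\wt{O}(n^{(1-\alpha)\frac23})$: the partition-based construction from \Cref{thm:meta-space-stretch}/\Cref{thm:OnePassTradeoff} runs a polylog-size sparsifier sketch on each of $\wt{O}(n^{2\alpha})$ induced subgraphs, and since every vertex belongs to $\wt{O}(n^{\alpha})$ of them, per-player communication is $\wt{O}(n^{\alpha})$. This is exactly what the paper does for that half.

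Your argument fails, however, for the second bound $\wt{O}(\sqrt m\cdot n^{-\alpha})$. Your description of the trade-off (``the extra $\wt{O}(n^\alpha)$ factor is purchased by increasing the width of the spectral-sparsifier sketch'') is not what \Cref{thm:TradeOffOnePassEdges} does. That proof uses the extra space in two ways: (i) sparsifiers on $\wt{O}(n^{2\alpha})$ random induced subgraphs, and (ii) a \emph{global} $\wt{O}(n^{1+\alpha})$-sparse recovery of the whole edge set, invoked in the regime $m\le n^{1+\alpha}$ to recover the entire graph. Step (ii) does \emph{not} decompose into $\wt{O}(n^\alpha)$ bits per player: a graph with $m\le n^{1+\alpha}$ edges may still contain a vertex of degree $\gg n^\alpha$, and no $\wt{O}(n^\alpha)$-bit message from that vertex can encode its full neighborhood. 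This is precisely the obstruction the paper singles out (``sparse recovery is impossible in the simultaneous communication model'').

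The paper therefore replaces step (ii) by a new ingredient you are missing: a \emph{low-degree peeling} primitive (\Cref{claim:ssparse}, \Cref{alg:peeling}). Each vertex sends an $n^\alpha$-sparse recovery sketch of its own neighborhood together with its degree. The coordinator then repeatedly picks a vertex of current degree $\le n^\alpha$, recovers all its incident edges from its sketch, and subtracts those edges from the remaining sketches using linearity. This yields a partition $V=V_1\cup V_2$ where all edges touching $V_1$ are known exactly, while $G[V_2]$ has minimum degree $\ge n^\alpha$, hence $|E(G[V_2])|\ge |V_2|\cdot n^\alpha$. On $G[V_2]$ the assumption $m>n\cdot p^{-1}$ needed in the analysis of \Cref{thm:TradeOffOnePassEdges} is now guaranteed, and the rest of that argument (random induced subgraphs plus \Cref{thm:SpectralSparsAreSpannersEdges}) carries over to give the $\wt{O}(\sqrt m\cdot n^{-\alpha})$ stretch.
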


Then, we also prove a similar trade off when more than one round of communication is allowed. 

\begin{restatable}{theorem}{thmSCMtradeoff}\label{thm:SCmodel} 
		For any integer $g\ge 1$, there is an algorithm that in  $g$ rounds of communication, where each player communicates $\tilde{O}(n^\alpha)$ bits, outputs a spanner with stretch $$\min\left\{(12+o(1))\cdot n^{(1-\alpha)\cdot \frac2g}\cdot \log n~,~\wt{O}\left(n^{\frac{(g+1)(1-\alpha)}{2g+1}}\right)\right\}~.$$ 
\end{restatable}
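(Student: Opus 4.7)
The plan is to generalize the algorithm underlying Theorem~\ref{thm:filtering} — which handles the polylogarithmic-communication case $\alpha=0$ — by letting each player exploit its $\tilde{O}(n^\alpha)$ communication budget per round to effectively reduce the problem to one on $n^{1-\alpha}$ vertices. The two terms of the $\min$ in Theorem~\ref{thm:SCmodel} are obtained from the two terms in Theorem~\ref{thm:filtering} by substituting $n \mapsto n^{1-\alpha}$, which strongly suggests a reduction of precisely this flavor. For the $g=1$ case this will also recover the first bound of Theorem~\ref{thm:SCmodel-1round}.

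I would first use a fraction of the initial round's budget to construct a Baswana--Sen style clustering of $G$ via $\ell_0$-samplers: each vertex is independently designated a center with probability $n^{-(1-\alpha)}$, giving a set of $\tilde{O}(n^{1-\alpha})$ centers with high probability, and each non-center broadcasts $\tilde{O}(n^\alpha)$ $\ell_0$-samples of its edge-incidence vector, which collectively allow the coordinator to attach every vertex to a nearby center along an $O(1)$-hop path with high probability. Contracting these clusters yields a virtual multigraph $G'$ on $n' = \tilde{O}(n^{1-\alpha})$ super-nodes of constant radius; every original vertex learns its super-node assignment from the coordinator's subsequent message posted on the common board.

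Next, I would execute a distributed simulation of the $g$-round filtering algorithm of Theorem~\ref{thm:filtering} on $G'$. Each super-node is represented by the $\tilde{O}(n^\alpha)$ original vertices it contains; since the sketches used by the filtering algorithm are linear functionals of the edge incidence vector, a sketch of a super-node's outgoing super-edges is simply the sum of the corresponding sketches computed independently by its member players, so the work of Theorem~\ref{thm:filtering}'s protocol splits evenly across cluster members and each original player's per-round load remains $\tilde{O}(n^\alpha)$ bits. The simulation outputs a spanner $H'$ of $G'$ with stretch $\min\{(12+o(1))(n')^{2/g}\log n', \tilde{O}((n')^{(g+1)/(2g+1)})\}$. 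Lifting $H'$ back to $G$ by adjoining a BFS tree within every cluster yields the final spanner $H$, whose stretch multiplies by the $O(1)$ cluster radius and therefore matches the claimed bound after substituting $n' = n^{1-\alpha}$.

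The main obstacle, and the step requiring the most care, is the distributed simulation of Theorem~\ref{thm:filtering} on $G'$ within the per-player budget of $\tilde{O}(n^\alpha)$ bits per round. In particular, one must show that each super-node's contribution to every round can be assembled from $\tilde{O}(1)$-bit contributions by its members — this follows from the linearity of the $\ell_0$-samplers and from the fact that every super-edge of $G'$ is incident to some original member that knows one of its endpoints — and that the clustering initialization can be folded into the first round (rather than requiring an extra round) by piggybacking the $\tilde{O}(n^\alpha)$ clustering sketches on the first filtering message. A secondary subtlety is that the centers used in the initial contraction are themselves chosen at random, so one must handle the (low-probability) event that a non-center has no center within $O(1)$ hops by falling back to a trivial single-vertex cluster; this only blows up the effective $n'$ by an $\tilde{O}(1)$ factor and is absorbed by the $\tilde{O}(\cdot)$ notation. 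Once these bookkeeping details are verified, the stretch analysis follows from the verbatim application of Theorem~\ref{thm:filtering} after the substitution $n \mapsto n^{1-\alpha}$.
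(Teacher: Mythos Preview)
Your contraction-based approach is different from the paper's and has a genuine gap.

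The paper does not contract clusters. Instead it fixes (by random sampling) a family $\mathcal{P}$ of $\tilde{O}(n^{2\alpha})$ vertex subsets, each of size $O(n^{1-\alpha})$, such that every pair $i,j\in[n]$ lies in some common $P\in\mathcal{P}$. It then runs the filtering algorithm of Theorem~\ref{thm:filtering} independently on each induced subgraph $G[A_P]$, in parallel. Since every $G[A_P]$ is a genuine simple unweighted graph on $O(n^{1-\alpha})$ vertices, Theorem~\ref{thm:filtering} applies verbatim and gives the claimed stretch; since each vertex lies in $\tilde{O}(n^\alpha)$ of the subsets and contributes a polylogarithmic sketch per subgraph per round, the per-player per-round communication is $\tilde{O}(n^\alpha)$.

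Your approach breaks at the step where you invoke Theorem~\ref{thm:filtering} on the cluster graph $G'$. Theorem~\ref{thm:filtering}, and the underlying Theorems~\ref{thm:SpectralSparsAreSpanners} and~\ref{thm:SpectralSparsAreSpannersEdges}, are proved only for \emph{unweighted simple} graphs: the effective-resistance branch relies on $W_i^{G}\le |A_i|\cdot|A_{i+1}|$ (Claim~\ref{claim:sumWH}), and the low-diameter branch relies on the starting bound $|E_1|<n^2$ with $n$ the vertex count. Your $G'$, however, is a multigraph --- between two super-nodes of size $\Theta(n^\alpha)$ there can be up to $\Theta(n^{2\alpha})$ parallel edges --- and the linear sketches you aggregate across cluster members produce a sparsifier of this multigraph, not of its simple skeleton. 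With multiplicities, $W_i^{G'}$ can exceed $|A_i|\cdot|A_{i+1}|$ by a factor $n^{2\alpha}$, and $|E'_1|$ can be as large as $m\le n^2$ rather than $(n')^2=n^{2(1-\alpha)}$; both analyses then fail to yield the claimed stretch in $g$ rounds. Passing from the multigraph to its simple skeleton is not a linear operation on edge-incidence vectors, so you cannot do it inside one round of sketching.

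There are also smaller issues: sampling centers with probability $n^{-(1-\alpha)}$ gives $\Theta(n^\alpha)$ centers, not $\Theta(n^{1-\alpha})$; and the ``fallback to a singleton cluster'' for vertices with no nearby center can produce $\Omega(n)$ clusters on graphs where most vertices have small degree, so $n'$ is not $\tilde{O}(n^{1-\alpha})$ without a more careful peeling argument (cf.\ Lemma~\ref{claim:ssparse}).
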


\paragraph{Parallel work.} Very recently, in a paper about the message-passing model, Fernandez \etal \cite{FWY20} implemented Baswana-Sen \cite{BS07} algorithm in $\lfloor\frac k2\rfloor+1$ passes in the semi-streaming model. This is the same as our \Cref{cor:halfBS} (which follows from \Cref{thm:BSInPasses} by setting $g=1$). While writing this paper, the authors were not aware of \cite{FWY20} result.

The idea of recursively constructing a spanner by contracting clusters, which is the main idea leading to our \Cref{thm:KWInPasses,thm:BSInPasses}, was found and used concurrently and independently from us by Biswas \etal \cite{BDGMN20} in the context of the massive parallel computation (MPC) model. 

\paragraph{Related work.} Streaming algorithms are well-studied with too many results to list and we refer the reader to \cite{McGregor14,McGregor17} for a survey of streaming algorithms. The idea of linear graph sketching was introduced in a seminal paper of Ahn, Guha, and McGregror \cite{ahn2012analyzing}. An extension of the sketching approach to hypergraphs were presented in~\cite{GuhaMT15}.  The simultaneous communication model has also been used for lower bounding the performance of sketching algorithms -- see, e.g. ~\cite{AssadiKLY16,KallaugherKP18}.

Spanners are a fundamental combinatorial object. They have been extensively studied and have found numerous algorithmic applications. We refer to the survey \cite{ABD+19} for an overview.
The most relevant related work is on insertion only streams \cite{Elkin11,Baswana08} where the focus is on minimizing the processing time of the stream, and dynamic algorithms, where the goal is to efficiently maintain a spanner while edges are continuously inserted and deleted \cite{Elkin11,BKS12,BFH19}. 

\section{Preliminaries}\label{sec:prelims}
All the logarithms in the paper are in base $2$. 
We use $\tilde{O}$ notation to suppress constants and poly-logarithmic factors in $n$, that is $\widetilde{O}(f)=f\cdot\polylog(n)$.

We consider undirected, graphs $G=(V,E)$, with a weight function
$w: E \to \R_{\ge 0}$. If we say that a graph is unweighted, we mean that all the edges have unit weight.
$\widehat{G}=(V,E,\mathbbm{1}_E)$ denotes the unweighted version of $G$, i.e. the graph $G$ where all edge weights are changed to $1$.
Sometimes we abuse notation and write $G$ instead of $E$.
Given two subsets $X,Y\subseteq V$, $E_{G}(X,Y)$ is the set of edges from $X$ to $Y$, $w_G(X,Y)$ denotes the total weight of edges in $E_{G}(X,Y)$ (number if $G$ is unweighted).
We sometimes abuse notation and write instead $E_{G}(X\times Y)$ and $w_{G}(X\times Y)$ (respectively).
For a subset of vertices $A\subseteq V$, let $G[A]$ denote the induced graph on $A$.

Let $d_{G}$ denote the shortest path metric in $G$.
A subgraph $H$ of $G$ is a $t$-\emph{spanner} of $G$ if for every $u,v\in V$, $d_H(u,v)\le t\cdot d_G(u,v)$ (note that as $H$ is a subgraph of $G$, necessarily $d_G(u,v)\le d_H(u,v)$).
Following the triangle inequality, in order to prove that $H$ is a $t$-spanner of $G$ it is enough to show that for every edge $(u,v)\in E$, $d_H(u,v)\le t\cdot d_G(u,v)$.

For an unweighted graph $G=(V,E)$, such that $|V|=n$ and $|E|=m$, let $B_G\in \R^{m\times n}$ denote the vertex edge incidence matrix. The Laplacian matrix of $G$ is defined as $L_G:=B_G^\top B_G$. Similarly, for a weighted graph $H=(V,E,w)$, we let $W\in \R^{m\times m}$ be the diagonal matrix of the edge weights. The Laplacian of the graph $H$ is defined as $L_H:=B_H^\top W B_H$. 
$H\preceq G$ denotes that for every $\vec{x}\in\R^n$, $\vec{x}^t L_H\vec{x}\le \vec{x}^t L_G\vec{x}$.
We say that a graph $H$ is $(1\pm \epsilon)$-\emph{spectral sparsifier} of a graph $G$, if $$(1-\epsilon)H\preceq G\preceq (1+\epsilon) H~.$$ 
\begin{fact}\label{fact:cutsp}
	Suppose that a graph $H$ is a $(1\pm \epsilon)$-spectral sparsifier of a graph $G$, then $H$ is a $(1\pm \epsilon)$-cut sparsifier of $G$, i.e., for every set of vertices $S\subset V$, we have
	\begin{align*}
	(1-\epsilon)\cdot w_H(S,V\setminus S)~\le~ w_G(S,V\setminus S) ~\le~ (1+\epsilon)\cdot w_G(S,V\setminus S)~.
	\end{align*}
\end{fact}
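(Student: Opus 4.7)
The strategy is to derive the cut inequality as the restriction of the spectral inequality to the particular family of $\{0,1\}$-valued test vectors, namely the indicator vectors $\chi_S$ of subsets $S \subseteq V$. This works because the weight of any cut is precisely a Laplacian quadratic form evaluated at such an indicator vector, so everything the spectral definition says about all $\vec{x} \in \R^n$ automatically says about all cuts.

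The first step I would record is the standard identity
\[
\vec{x}^{\top} L_G \vec{x} \;=\; \sum_{(u,v)\in E} w_G(u,v)\,\bigl(\vec{x}(u)-\vec{x}(v)\bigr)^2,
\]
which follows directly from $L_G = B_G^{\top} W_G B_G$ by expanding the sum over the rows of $B_G$, and the analogous identity for $L_H$. Then, given any $S \subseteq V$, I plug $\vec{x} = \chi_S$ into this identity: since $(\chi_S(u)-\chi_S(v))^2$ equals $1$ exactly when the edge $(u,v)$ crosses the cut $(S, V\setminus S)$ and equals $0$ otherwise, the sum collapses to $\chi_S^{\top} L_G \chi_S = w_G(S, V\setminus S)$, and symmetrically $\chi_S^{\top} L_H \chi_S = w_H(S, V\setminus S)$.

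The final step is to invoke the spectral sparsifier hypothesis $(1-\epsilon)L_H \preceq L_G \preceq (1+\epsilon)L_H$ at the specific test vector $\vec{x} = \chi_S$ and translate both sides through the identities of the previous step, which yields the desired sandwich $(1-\epsilon)\, w_H(S, V\setminus S) \le w_G(S, V\setminus S) \le (1+\epsilon)\, w_H(S, V\setminus S)$. There is really no obstacle to overcome: this is a textbook derivation, and the only conceptual content is the observation that cut weights live inside the family of Laplacian quadratic forms, so being a cut sparsifier is strictly weaker than being a spectral sparsifier.
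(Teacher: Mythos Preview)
Your argument is correct and is the standard textbook derivation. The paper does not actually supply a proof of this statement: it is recorded as a \emph{Fact} in the preliminaries and used without justification, so there is nothing to compare against. Your approach---evaluating the Laplacian quadratic form at the indicator vector $\chi_S$ and reading off the cut weight---is exactly the canonical way one proves this implication.
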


For any Laplacian matrix $L_G$, we denote its Moore-Penrose
pseudoinverse by $L_G^{+}$. For any pair of vertices $u,v\in V$, we denote their indicator  vector by $b_{uv} = \chi_u - \chi_v$, where $\chi_u \in \R^{n}$ is the indicator vector of $u$, i.e., the entry corresponding to $u$ is $+1$ and all other entries are zero. Also, for any edge $e=(u,v)$, we define its indicator vector as $b_e:=b_{uv}$. We also define effective resistance of a pair of vertices $u,v\in V$ as
\begin{align*}
R_{uv}^G := b_{uv}^\top L^+_G b_{uv}. 
\end{align*}

\begin{fact}\label{fact:EffSparse}
	Given a $(1\pm \epsilon)$-spectral sparsifier $H$ of a $G$,  for every $u,v\in V$ it holds that
	\begin{align*}
	(1-\epsilon)R_{uv}^G~\le~ R_{uv}^{H} ~\le~ (1+\epsilon)R_{uv}^G.
	\end{align*}
\end{fact}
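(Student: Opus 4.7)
The plan is to lift the defining Loewner inequality of a spectral sparsifier, namely $(1-\epsilon)\, L_H \preceq L_G \preceq (1+\epsilon)\, L_H$, to a corresponding inequality on the Moore--Penrose pseudoinverses, and then evaluate the resulting quadratic forms at the vector $b_{uv}$. Since the effective resistance is by definition the quadratic form $b_{uv}^\top L^+ b_{uv}$, this is exactly the comparison we need.

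The key ingredient is the standard Loewner order reversal for the pseudoinverse: if $A$ and $B$ are positive semidefinite matrices with the same null space and $A \preceq B$, then $B^+ \preceq A^+$. Before invoking this, I will verify the common-null-space hypothesis. The two inequalities $(1-\epsilon)\, L_H \preceq L_G$ and $L_G \preceq (1+\epsilon)\, L_H$ together force every vector $x$ with $x^\top L_G x = 0$ to satisfy $x^\top L_H x = 0$ as well (and vice versa), so $L_G$ and $L_H$ have identical null spaces. Applying the order reversal to each half of the sparsifier sandwich, together with the scaling identity $(cA)^+ = \tfrac{1}{c} A^+$ for $c>0$, then yields
$$\frac{1}{1+\epsilon}\, L_H^+ \;\preceq\; L_G^+ \;\preceq\; \frac{1}{1-\epsilon}\, L_H^+.$$

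Finally, plugging $b_{uv}$ into both inequalities and using $R_{uv}^G = b_{uv}^\top L_G^+ b_{uv}$ (and the analogous identity for $H$) produces $\frac{1}{1+\epsilon} R_{uv}^H \le R_{uv}^G \le \frac{1}{1-\epsilon} R_{uv}^H$, which rearranges to the claimed bound $(1-\epsilon) R_{uv}^G \le R_{uv}^H \le (1+\epsilon) R_{uv}^G$. There is essentially no hard step here; the only mild subtlety is that Laplacians are singular, so one must work with pseudoinverses rather than ordinary inverses, and this is precisely why the common-null-space check at the start is needed in order to justify the order-reversal step.
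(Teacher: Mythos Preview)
Your argument is correct and is exactly the standard derivation of this fact. The paper does not supply a proof at all (it is stated as a ``Fact'' from the literature), so there is nothing to compare against; your proof via Loewner order reversal on the pseudoinverses, after checking that $L_G$ and $L_H$ share a null space, is precisely how one justifies it.
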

The following fact is a standard fact about effective resistances (see e.g., \cite{DBLP:conf/stoc/SpielmanS08})
\begin{fact}\label{fact:sumeff}
	In every $n$ vertex graph $G=(V,E,w)$ it holds that
	$\sum_{e\in E}w_eR_e^G ~\le~ n-1$.\footnote{If graph $G$ is connected, then the inequality is satisfied by equality.}

\end{fact}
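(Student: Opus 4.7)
The plan is to prove the identity $\sum_{e\in E}w_eR_e^G = n-c$, where $c$ is the number of connected components of $G$, which immediately implies the claimed inequality (and the footnote for connected $G$). The whole argument rests on expressing the weighted sum of effective resistances as the trace of $L_G L_G^+$, which is just the rank of the projection onto the image of the Laplacian.

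First, I would recall the definition $R_e^G = b_e^\top L_G^+ b_e$ and use the cyclic property of the trace to rewrite a single term as
\begin{equation*}
w_e R_e^G \;=\; w_e\,b_e^\top L_G^+ b_e \;=\; \mathrm{tr}\!\left(w_e\, b_e b_e^\top L_G^+\right).
\end{equation*}
Summing over $e\in E$ and pulling the trace outside the sum, linearity gives
\begin{equation*}
\sum_{e\in E} w_e R_e^G \;=\; \mathrm{tr}\!\left(\Bigl(\sum_{e\in E} w_e b_e b_e^\top\Bigr) L_G^+\right) \;=\; \mathrm{tr}(L_G L_G^+),
\end{equation*}
where I used the standard expansion $L_G = B_G^\top W B_G = \sum_{e\in E} w_e b_e b_e^\top$ already recalled in the preliminaries.

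Next I would identify $L_G L_G^+$ as the orthogonal projection onto the column space of $L_G$. This is one of the defining properties of the Moore--Penrose pseudoinverse, so it can be quoted without a computation. Since the trace of an orthogonal projection is its rank, I get $\mathrm{tr}(L_G L_G^+) = \mathrm{rank}(L_G)$. The final ingredient is the well-known fact that for the Laplacian of an undirected graph on $n$ vertices with $c$ connected components, $\mathrm{rank}(L_G) = n - c$: the kernel is spanned by the indicator vectors of the connected components. Hence
\begin{equation*}
\sum_{e\in E} w_e R_e^G \;=\; n - c \;\le\; n - 1,
\end{equation*}
with equality precisely when $c=1$, i.e.\ when $G$ is connected, matching the footnote.

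There is no real obstacle here: each step is a one-line invocation of a standard fact (cyclic trace, Laplacian as a weighted sum of $b_e b_e^\top$, $L L^+$ as projection onto $\mathrm{range}(L)$, and $\dim\ker L_G = c$). The only small care needed is making sure the pseudoinverse identities are applied to vectors in $\mathrm{range}(L_G)$, which is automatic because each $b_e = \chi_u - \chi_v$ is orthogonal to every connected-component indicator and thus lies in $\mathrm{range}(L_G)$; this is what guarantees the effective resistance is well-defined in the first place.
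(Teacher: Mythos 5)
Your proof is correct, and it is exactly the standard trace argument ($\sum_e w_e R_e^G = \mathrm{tr}(L_G L_G^+) = \mathrm{rank}(L_G) = n-c$) that the paper itself does not reprove but instead defers to the cited reference of Spielman and Srivastava. Every step checks out, including the refinement to $n-c$ that explains the footnote about equality for connected graphs.
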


\paragraph{Dynamic streams.} In dynamic streams, there is a fixed set $V$ of $n$ vertices, unweighted edges arrive in a streaming fashion, where they are both inserted and deleted.

\emph{$\ell_0$-samplers.}:
Given integer vector in $\R^n$ in a dynamic stream, using $s\cdot\polylog(n)$ space, we can sample $s$ different non-zero entries. In particular if the vector is $s$-sparse, we can reconstruct it.
Furthermore, given a stream of edges in an $n$-vertex graph $G$,
using $s\cdot \polylog(n)$ samplers per vertex, we can create a subgraph
$\tilde{G}$ of $H$ where each vertex has either at least $s$ edges, or has all its incident edges from $G$.
This samplers are linear, therefore if we sum up the samplers of $S$
vertices, we can sample an outgoing edge.

Consider a vector $\vec{v}\in\mathbb{R}^n$, given a subset $A\subseteq[n]$ of coordinates, we denote by $\vec{v}[A]$ the restriction of $\vec{v}$ to $A$.
\begin{lemma}\label{lem:SubsetSampling}
	Consider a vector $\vec{v}\in\mathbb{R}^n$ that arrives in a dynamic stream via coordinate updates. The coordinates $[n]$ are partitioned into subsets $A_1,A_2,\dots,A_r$ (the space required to represent this partition is negligible).
	Let $\mathcal{I}=\left\{i\mid\vec{v}[A_i]\ne\vec{0}\right\}$ be the indices of the coordinate sets  on which $\vec{v}$ is not zero.
	Given $A_1,A_2,\ldots, A_r$ and a parameter $s>0$, and a guarantee that 	
	$\left|\mathcal{I}\right|\le s$, using $s\cdot\polylog(n)$ space, one can design a sketching algorithm recovering a set $S\subseteq [n]$ such that
	\begin{itemize}
		\item For every $j\in S$, $\vec{v}_j\ne0$.
		\item For every $i\in\mathcal{I}$, $A_i\cap S\ne\emptyset$.
	\end{itemize}
\end{lemma}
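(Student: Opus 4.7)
The plan is to reduce this block-level sampling problem to standard $\ell_0$-sampling via hashing at the block level, combined with independent amplification. The key observation is that an $\ell_0$-sampler (say of Jowhari--Sa\u{g}lam--Tardos) requires only $\polylog(n)$ space regardless of the support size of the underlying vector and is linear, so we can afford to maintain many of them in parallel under a dynamic stream.

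Formally, let $c\ge 2$ be a constant and let $L=\Theta(\log n)$. Draw $L$ independent pairwise-independent hash functions $h_1,\ldots,h_L\colon [r]\to[cs]$ (each storable in $\polylog(n)$ bits). For each $j\in[L]$ and each bucket $b\in[cs]$, consider the linear projection $\vec{v}^{(j,b)}\in\mathbb{R}^n$ defined by $\vec{v}^{(j,b)}_k=\vec{v}_k$ if $h_j(i(k))=b$ and $0$ otherwise, where $i(k)$ is the index of the unique block containing coordinate $k$. Maintain a standard $\ell_0$-sampler on each $\vec{v}^{(j,b)}$. A single coordinate update to $\vec{v}_k$ touches, for each $j$, only the sampler of bucket $h_j(i(k))$, so the data structure is maintainable in a dynamic stream. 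The total space is $L\cdot cs\cdot\polylog(n)=s\cdot\polylog(n)$. At decoding time, each sampler returns either a nonzero coordinate of its sub-vector (which is automatically a nonzero coordinate of $\vec{v}$) or \textsf{FAIL}; take $S$ to be the union of all returned coordinates.

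The first guarantee ($\vec{v}_j\neq 0$ for every $j\in S$) is immediate from the definition of $S$: an index is placed into $S$ only when some $\ell_0$-sampler returns it, and $\ell_0$-samplers only return nonzero coordinates of their input vector. For the second guarantee, fix any $i^*\in\mathcal{I}$ and call $A_{i^*}$ \emph{isolated} under $h_j$ if $h_j(i')\neq h_j(i^*)$ for every $i'\in\mathcal{I}\setminus\{i^*\}$. By pairwise independence and a union bound, $\Pr[A_{i^*}\text{ is isolated under }h_j]\ge 1-(|\mathcal{I}|-1)/(cs)\ge 1-1/c\ge 1/2$, so the probability that $A_{i^*}$ fails to be isolated under every $h_j$ is at most $2^{-L}=1/\poly(n)$. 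When $A_{i^*}$ is isolated under some $h_j$ with bucket $b=h_j(i^*)$, the sub-vector $\vec{v}^{(j,b)}$ is nonzero and supported entirely inside $A_{i^*}$, so its $\ell_0$-sampler returns a coordinate in $A_{i^*}\cap\{k:\vec{v}_k\neq 0\}$ with probability $1-1/\poly(n)$. A final union bound over the at-most-$s\le n$ elements of $\mathcal{I}$ shows that $S$ hits every non-empty block with high probability.

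The main obstacle is the space bound $s\cdot\polylog(n)$: since $r$ may be as large as $n$, we cannot afford a dedicated sampler per block. The trick is to hash the $\le s$ non-empty blocks into $\Theta(s)$ buckets, which already isolates any fixed block with constant probability, and then boost this to high probability simultaneously for every $i^*\in\mathcal{I}$ by taking $O(\log n)$ independent copies. Because each sub-sampler is itself only $\polylog(n)$ space, the amplified construction stays within the $s\cdot\polylog(n)$ budget.
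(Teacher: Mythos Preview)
Your proof is correct but takes a genuinely different route from the paper. The paper does not hash blocks into buckets; instead it \emph{composes two linear sketches}. It (virtually) prepares an $\ell_0$-sampler sketch $\vec{b}_i=\Pi_i\,\vec{v}[A_i]$ for every block $i\in[r]$ and lets $\vec{b}$ be the concatenation of all the $\vec{b}_i$'s. This intermediate vector has dimension $r\cdot\polylog(n)$ and cannot itself be stored, but since $|\mathcal{I}|\le s$ it is $s\cdot\polylog(n)$-sparse at the end of the stream, so an $\tilde{O}(s)$-sparse-recovery sketch applied to $\vec{b}$ suffices. The whole construction is then a single linear map $\vec{w}=\Pi_1\Pi_2\,\vec{v}$, where $\Pi_2$ encodes the per-block $\ell_0$-sampler sketches and $\Pi_1$ the sparse recovery. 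Your hashing-and-isolation argument is arguably more elementary, avoiding the sketch-of-a-sketch composition and the need to track entry magnitudes in the intermediate vector; the paper's nested construction, on the other hand, yields a clean one-shot linear projection and makes the ``few nonzero sub-sketches'' structure explicit, a pattern that can be reused verbatim with other inner primitives.
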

The proof uses a technique commonly used in sketching literature, and is given in \Cref{sec:omitted-2} for completeness. 
\begin{lemma} \label{lem:RecoverSingleEdge}[Edge recovery]
	Consider an unweighted, undirected graph $G=(V,E)$ that is received in a dynamic stream. Given $A,B\subseteq V$ such that $A\cap B = \emptyset$, one can design a sketching algorithm that using $\polylog(n)$ space in a single pass over the stream, with probability $1/\poly(n)$, can either recover an edge between $A$ to $B$, or declare that there is no such edge.\\
	Further, provided that there are at most $m$ edges in $A\times B$, using $m\cdot \polylog(n)$ space, with probability $1/\poly(n)$ we can recover them all.
\end{lemma}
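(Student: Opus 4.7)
The plan is to combine the standard AGM vertex--edge incidence sketch with the subset-sampling primitive of Lemma~\ref{lem:SubsetSampling}. Fix an arbitrary total order on $V$ and, for each $v \in V$, define the signed incidence vector $\chi_v \in \R^{\binom{V}{2}}$ by setting $\chi_v[(u,w)] = +\mathbbm{1}[(u,w)\in E]$ if $v$ is the smaller endpoint of $(u,w)$, $-\mathbbm{1}[(u,w)\in E]$ if $v$ is the larger endpoint, and $0$ otherwise. For any $A\subseteq V$, let $\vec{v}_A = \sum_{v\in A}\chi_v$; a direct case check shows the nonzero coordinates of $\vec{v}_A$ are precisely the edges of $G$ with exactly one endpoint in $A$, each carrying value $\pm 1$. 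I would maintain a per-vertex linear sketch of $\chi_v$ obtained from Lemma~\ref{lem:SubsetSampling}, and at query time sum these sketches over $v\in A$ to obtain, by linearity, a sketch of $\vec{v}_A$.

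For the single-edge case, I would invoke Lemma~\ref{lem:SubsetSampling} on $\vec{v}_A$ with the two-part partition $P_1 = A\times B$ and $P_2 = \binom{V}{2}\setminus(A\times B)$, and parameter $s=2$. Since $A\cap B=\emptyset$, every nonzero coordinate of $\vec{v}_A$ lies either in $P_1$ (if its non-$A$ endpoint is in $B$) or in $P_2$ (otherwise), so $|\mathcal{I}|\le 2$. The lemma returns a set $S$ which, w.h.p., intersects every nonempty $P_i$; in particular, if any edge of $E\cap(A\times B)$ exists then $S\cap(A\times B)\ne\emptyset$. The algorithm outputs any such edge in $S\cap(A\times B)$, or declares no such edge if $S\cap(A\times B)=\emptyset$. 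The total space is $\polylog(n)$ per vertex.

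For the multi-edge case, given the promise $|E\cap(A\times B)|\le m$, I would refine the partition so that every pair in $A\times B$ forms its own singleton part, while all remaining pair-coordinates are collected into one additional part. The number of nonempty parts is at most $m$ (one per existing $A\times B$ edge) plus at most one (for the remaining edges leaving $A$), so $|\mathcal{I}|\le m+1$. Applying Lemma~\ref{lem:SubsetSampling} with $s=m+1$ returns a set $S$ that, w.h.p., contains at least one representative of every nonempty part; since each existing $A\times B$ edge is the unique member of its singleton part, all such edges appear in $S$. Returning $S\cap(A\times B)$ recovers all $A\times B$ edges using $m\cdot\polylog(n)$ space.

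The main subtlety is that the partition in each application of Lemma~\ref{lem:SubsetSampling} depends on the query sets $A$ and $B$, which are typically not known while the stream is being processed. This is overcome by the fact that the subset-sampler underlying Lemma~\ref{lem:SubsetSampling} is a \emph{linear} sketch whose construction depends only on the sparsity parameter $s$ and on the ambient dimension, not on the choice of partition; the partition enters only at recovery time. Consequently the per-vertex sketches can be maintained obliviously during the stream and combined with any partition once $A$ and $B$ are revealed. A union bound over the polynomially many possible invocations lifts the per-query success probability to the $1-1/\poly(n)$ guarantee stated in the lemma.
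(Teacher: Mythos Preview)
Your proposal has two genuine gaps. First, maintaining a sketch of $\chi_v$ for every $v\in V$ costs $n\cdot\polylog(n)$ bits in total (and $n\cdot m\cdot\polylog(n)$ in the multi-edge part), whereas the lemma asserts $\polylog(n)$ total space for a single $(A,B)$; you acknowledge the cost is ``$\polylog(n)$ per vertex,'' which does not meet the stated bound. The sets $A,B$ in the lemma are given as input to the algorithm design, i.e.\ known before the pass, so there is no need to prepare for arbitrary query-time sets. Second, your assertion that the sketch underlying Lemma~\ref{lem:SubsetSampling} is oblivious to the partition is false for the construction in this paper: that proof builds one $\ell_0$-sampler per part $A_i$ and then sparse-recovers across the concatenation, so the sketching matrix $\Pi_2$ already encodes the partition. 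You therefore cannot sum partition-agnostic sketches over $v\in A$ and only afterwards choose the partition $\{A\times B,\ \text{rest}\}$; Lemma~\ref{lem:SubsetSampling} as stated gives you no such object.

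The paper's argument is far simpler and sidesteps both issues. Since $A$ and $B$ are given, one directly maintains the indicator vector $\vec{b}\in\{0,1\}^{A\times B}$ of edges with one endpoint in each set, ignoring all other stream updates. Applying a single $\ell_0$-sampler (Fact~\ref{fact:ellzero}) to $\vec{b}$ yields one $A$--$B$ edge in $\polylog(n)$ space; applying $m$-sparse recovery (Fact~\ref{fact:s-sparse}) to the same $\vec{b}$ yields all of them in $m\cdot\polylog(n)$ space. No signed incidence vectors, no subset sampling, and no per-vertex storage are needed.
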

The proof is using the same techniques as in the proof of \Cref{lem:SubsetSampling} and is deferred to \Cref{sec:omitted-2}.

\section{Technical Overview}
We consider an $n$ vertex unweighted graph $G=(V,E)$.
\paragraph{Spectral sparsifiers are spanners (\Cref{sec:stream}).}
\begin{wrapfigure}{r}{0.35\textwidth}
	\begin{center}
		\vspace{-20pt}
		\includegraphics[width=0.9\textwidth]{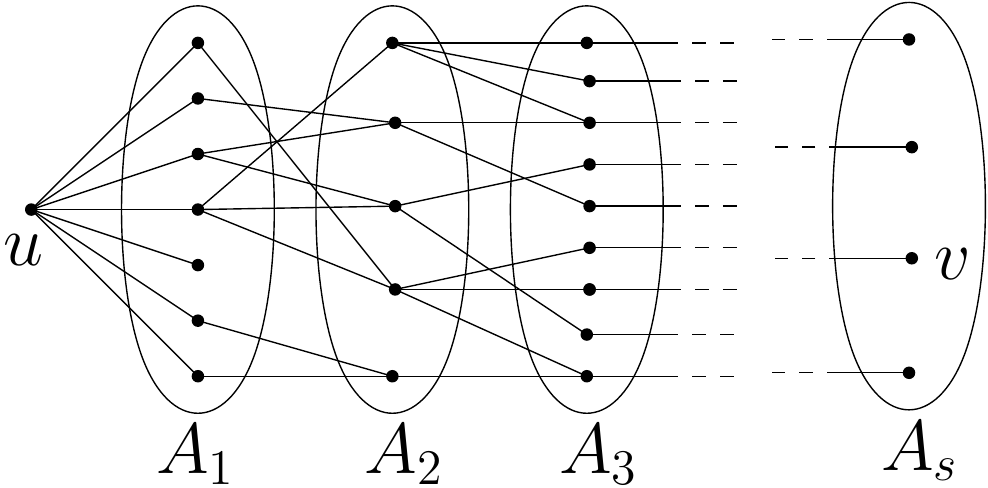}
		\vspace{-5pt}
	\end{center}
	\vspace{-10pt}
\end{wrapfigure}
The technical part of the paper begins by proving the following fact: consider a spectral sparsifier $H$ of $G$.
Consider an edge $(u,v)\in E$. Denote the distance between its endpoints in $\widehat{H}$ by $d_{\widehat{H}}(u,v)=s$. Divide the vertices $V$ into the BFS layers w.r.t. $u$ in $\widehat{H}$. That is, $A_i$ is the set of all vertices at distance $i$ from $u$ in $\widehat{H}$. In particular $v\in A_s$. See illustration on the right. 
Let $W_i^G=w_G(A_i\times A_{i+1})$ be the total weight of the edges in $E_G(A_i, A_{i+1})$. Similarly $W_i^H=w_H(A_i\times A_{i+1})$. Let $H'$ be the graph created from $H$ by contracting all the vertices in each set $A_i$ into a single vertex. The rough intuition is the following:
\begin{equation}
1\overset{(a)}{\ge}R_{u,v}^{G}\overset{(b)}{\gtrsim}R_{u,v}^{H}\overset{(c)}{\ge}R_{u,v}^{H'}\overset{(d)}{=}\sum_{i=0}^{s-1}\frac{1}{W_{i}^{H}}\overset{(*)}{\approx}\sum_{i=0}^{s-1}\frac{1}{W_{i}^{G}}\overset{(e)}{\ge}\sum_{i=0}^{s-1}\frac{1}{|A_{i}||A_{i+1}|}\overset{(f)}{\ge}\Omega\left(\frac{s^{3}}{n^{2}}\right)~.\label{eq:intuition}
\end{equation}
Here (a) follows as the effective resistance between the endpoints of an edge is at most $1$. 
(b) as $H$ is a spectral sparsifier of $G$.
(c) as the effective resistance can only reduce by contracting vertices.
(d) as $H'$ is a path graph.
(e) as $G$ is unweighted and thus $W_i^G$ is bounded by the number of edges in $A_i\times A_{i+1}$.
And (f)  as $\sum_i|A_i|\le n$ and the function $\sum_{i=0}^{s-1}\frac{1}{|A_{i}||A_{i+1}|}$ is minimized when $|A_i|=\Omega(\frac{n}{s})$ for all $i$.
The tricky part is the rough equality (*). Note that if \Cref{eq:intuition} holds, it will follow that $s=O(n^{\frac23})$, implying the desired stretch.

While $H$ is a spectral sparsifier of $G$, $W_i^G$ does not represent the size of a cut in $G$. This is as there might be edges in $G$ crossing from $A_i$ to $\cup_{j> i+1}A_j$, or from $A_{i+1}$ to $\cup_{j<i}A_j$. Thus a priori there is no reason to expect that $W_i^H$ will approximate $W_i^G$.
Interestingly, we were able to show that $W_{i}^{G}=W_{i}^{H}\pm\epsilon\cdot(W_{i-1}^{H}+W_{i}^{H}+W_{i+1}^{H})$. That is, while we are not able to bound $|W_i^G-W_i^H|$ using the standard  factor $\epsilon\cdot W_{i}^{H}$, we can bound this error once we take into account also the former and later cuts in the BFS order!
We use this fact to show that for most of the indices $i$, $W_i^H\le |A_i||A_{i+1}|$. The desired bound follows. See proof of \Cref{thm:SpectralSparsAreSpanners} for more details. 

Next, using similar analysis we show that in case where the graph $G$ has $m$ edges, the stretch of $\widehat{H}$ is bounded by $O(\sqrt{m})$ (see \Cref{thm:SpectralSparsAreSpannersEdges}). 
Suppose that $d_{\widehat{H}}(u,v)=s$. Intuitively, following \Cref{eq:intuition}, as $\sum_iW_i^G\le m$, it follows that $1\ge R_{u,v}^{G}\gtrsim\sum_{i=0}^{s-1}\frac{1}{W_{i}^{G}}=\Omega\left(\frac{s^{2}}{m}\right)$ (as $\sum_{i=0}^{s-1}\frac{1}{W_{i}^{G}}$ us minimized when all $W_i^G$'s are equal), implying $s=O(\sqrt{m})$.
Both bounds ($O(n^{\frac23})$ and $O(\sqrt{m})$) are tight. 
Essentially, we construct the exact instance tightening all the inequalities in \Cref{eq:intuition}. That is a graph with $\widetilde{\Theta}(n^{\frac23})$ layers, each one containing $\widetilde{\Theta}(n^{\frac13})$ vertices, and all possible edges between layers (see \Cref{subsec:LowerBound}).

In \Cref{sec:streamingtradeoff}, we show that using $\widetilde{O}(n^{1+\alpha})$ space (instead of  $\widetilde{O}(n)$), the stretch can be reduced to $$\min\{\widetilde{O}(n^{\frac23(1-\alpha)}),\widetilde{O}(\sqrt{m}\cdot n^{-\alpha})\}.$$ The idea is the following: randomly partition the graph $G$ into $\widetilde{O}(n^{2\alpha})$ induced subgraphs $G_1,G_2,\dots$, such that each $G_i$ contains $O(n^{1-\alpha})$ vertices, and every pair of vertices $u,v$ belong to some $G_i$. Furthermore, the (expected) number of edges in each $G_i$ is $m\cdot n^{-2\alpha}$. Next, we construct a spectral sparsifier for each graph $G_i$ and take their union as our spanner. The stretch gurantee follows (see \Cref{thm:meta-space-stretch}, \Cref{thm:OnePassTradeoff} and \Cref{thm:TradeOffOnePassEdges}).

\paragraph{Simultaneous communication model (\Cref{sec:CongestedMultiplayerModel}).}
In a single pass, one can construct a spectral sparsifier and therefore obtain the exact same results as in the streaming model.
However, as opposed to streaming, no known approach can reduce the stretch in less than logarithmic number of rounds.
We propose a natural peeling algorithm (see \Cref{alg:filttering}). Denote $G_1=G$. Given a desired stretch parameter $t$, the algorithm computes a spectral sparsifier $H_1$, and removes all the satisfied edges $(u,v)\in E$ where $d_{\widehat{H}_1}(u,v)\le t$, to obtain a graph $G_2$. Generally, in the $i$'th round the algorithm computes a spectral sparsifier $H_i$ for the graph $G_i$, and removes all the satisfied edges to obtain $G_{i+1}$. This procedure continues until all the edges are satisfied (that is $G_{i+1}=\emptyset$).
The resulting spanner is $\widehat{H}=\cup_i\widehat{H}_i$ the union of (the unweighted version of) all the constructed sparsifiers.
Notably, for every parameter $t\ge 1$ the algorithm will eventually halt, and return a $t$-spanner. The arising question is, how many rounds are required to satisfy a specific parameter $t$?

We show that this procedure will halt after $g$ steps for $$t\ge\min\{\tilde{O}(n^{\frac{g+1}{2g+1}})~,~(12+o(1))\cdot n^{2/g}\cdot \log n\}$$
(see \Cref{thm:filtering}).
Interestingly, in $g=\log n$ rounds we can obtain stretch $O(\log n)$, which is asymptotically optimal.
That is, we present a completely new construction  for a $O(\log n)$-spanner with $\tilde{O}(n)$ edges.
Interestingly, there are constructions of spectral sparsifiers which are based on taking a union of poly-logarithmically many $O(\log n)$-stretch spanners (see \cite{KP12,KX16}). In a sense, here we obtain the opposite direction. That is, by taking a union of $\log n$ sparsifiers, one can construct an $O(\log n)$ stretch spanner. That is, sparsifiers and spanners are much more related from what one may initially expect.

To show that the algorithm halts in $g$ round for a specific $t$, we bound the number of edges in $G_i$, which eventually will lead us to conclusion that $G_{g+1}=\emptyset$:
\begin{itemize}
	\item  Set $t=\tilde{O}(n^{\frac{g+1}{2g+1}})$. Here the analysis is based on the effective resistance. 
	Using \Cref{eq:intuition}, one can see that after the first round, $G_2$ will contain only edges with effective resistance at least $\Omega(\frac{t^3}{n^2})$ (in $G$). As the sum of all effective resistances is bounded by $n-1$, we conclude $|G_2|\le \Omega(\frac{n^3}{t^3})$.
	In general, following the $O(\sqrt{m})$ upper bound on stretch, one can show that $G_{i+1}$ contain only edges with effective resistance  $\Omega(\frac{t^2}{|G_i|})$, implying $|G_{i+1}|\le \frac{n}{t^2}|G_i|$. $t$ is chosen so that $|G_g|\le t^2$, hence a spectral sparsifier will have stretch at most $\sqrt{|G_g|}=t$ for all the edge, implying $G_{g+1}=\emptyset$.
	\item Set $t=O( n^{2/g}\cdot \log n)$. Here the analysis is based on low diameter decomposition. In general, for a weighted graph $H$ and parameter $\phi=n^{-2/g}$, we construct a partition  $\mathcal{C}$ of the vertices, such that each cluster $C\in\mathcal{C}$ has hop-diameter $O(\frac{\log n}{\phi})=t$ (i.e. w.r.t. $\widehat{H}$), and the overall fraction of the weight of inter-cluster edges is bounded by $\phi$. Following our peeling algorithm, when this clustering is preformed w.r.t. $H_i$, $G_{i+1}$ will contain only inter-cluster edges from $G_i$. As $H_i$ is a spectral sparsifier of $G_i$, the size of all cuts are preserved. It follows that $|G_{i+1}|\lesssim\phi\cdot |G_{i}|$. In particular, in $\log_{\frac{1}{\phi}}|G|\le g$ rounds, no edges will remain.\\
	Interestingly, for this analysis to go through it is enough that each $H_i$ will be a cut sparsifier of $G_i$, rather than a spectral sparsifier. Oppositely, a single cut sparsifier $H$ of $G$ can have stretch $\tilde{\Omega}(n)$ (see \Cref{rem:CutSparsifier}).
\end{itemize}

Next, similarly to the streaming case, we show that if each player can communicate a message of size $\widetilde{O}(n^\alpha)$ in each round, then we can construct a spanner with stretch $\min\{\widetilde{O}(n^{\frac23(1-\alpha)}),\widetilde{O}(\sqrt{m}\cdot n^{-\alpha})\}$ in a single round, or stretch $\min\left\{(12+o(1))\cdot n^{(1-\alpha)\cdot \frac2g}\cdot \log n~,~\wt{O}\left(n^{\frac{(g+1)(1-\alpha)}{2g+1}}\right)\right\}$ in $g$ rounds (see \Cref{thm:SCmodel-1round} and \Cref{thm:SCmodel}).
The approach is the same as in the streaming case, and for the most part, the analysis follows the same lines.
However, the single round $\widetilde{O}(\sqrt{m}\cdot n^{-\alpha})$ bound is somewhat more involved. Specifically, in the streaming version we've made the assumption that $m\le n^{1+\alpha}$, as otherwise, using sparse recovery we can restore the entire graph. Unfortunately, sparse recovery is impossible here. Instead, we show that in a single communication round we can partition the vertex set $V$ into $V_1,V_2$, such that all the incident edges of $V_1$ are restored, while the minimum degree in $G[V_2]$ is at least $n^{\alpha}$. The rest of the analysis goes through.

\paragraph{Pass-stretch trade-off (\Cref{sec:spanner-sketch}).}
Fix the allowed space of the algorithm to be $\widetilde{O}(n^{1+\frac1k})$. Both \cite{BS07} and \cite{KW14} algorithms are based on clustering. Specifically, they have $k$ clustering phases, where in the $i$'th phase there are about $n^{1-\frac ik}$ clusters. Eventually, after $k-1$ phases the number of clusters is $n^{\frac 1k}$, and an edge from every vertex to every cluster could be added to the spanner. 
In \cite{BS07}, each clustering phase takes a single dynamic stream pass, while the diameter of each $i$-level cluster is bounded by $2i$. 
On the other hand, in \cite{KW14} all the clusters are constructed in a single dynamic stream pass, while the diameter of each $i$-level clusters is only bounded by $2^{i+1}-2$. 

Our basic approach is the following: execute either \cite{BS07} or \cite{KW14} clustering procedure for some $i$ steps. Then, construct a super graph $\mathcal{G}$ by contracting each cluster into a single vertex, and (recursively) compute a spanner $\mathcal{H}$ for the super graph $\mathcal{G}$ with stretch $k'<k$. Eventually, for each super edge in $\mathcal{H}$, we will add a representative edge into the resulting spanner $H$. 
The basic insight, is that while the usage of a cluster graph instead of the actual graph increases the stretch by a multiplicative factor of the clusters diameter, we are able to compute a spanner with stretch $k'$ considerably smaller than $k$, and thus somewhat compensating for the loss in the stretch.

This phenomena has opposite effects when applying it on either \cite{BS07} or \cite{KW14} clustering schemes. 
Specifically, applying this idea on \cite{BS07} for $g$ recursive steps, we will obtain stretch $2^g\cdot k$ (compared with $2k-1$ in \cite{BS07}) while reducing the number of passes to $g\cdot k^{\nicefrac1g}$ (compared with $k$ in \cite{BS07}). That is we get a polynomial reduction in the number of passes, while paying a constant increase in stretch.
From the other hand, applying this idea on \cite{KW14} for $g$ recursive steps, we will obtain stretch $2^{g\cdot k^{\nicefrac1g}}$ (compared to $2^k-1$ in \cite{KW14}) while reducing the number of passes to $g+1$ (compared with $2$ in \cite{KW14}). Thus for each additional pass, we get an exponential reduction in the stretch.

Interestingly, the idea of recursively constructing a spanner by contracting clusters was found and used concurrently and independently from us by Biswas \etal \cite{BDGMN20} in the context of the massive parallel computation (MPC) model. They applied it only on \cite{BS07} algorithm in order to construct a spanner in small number of rounds.

\section{Spectral Sparsifiers are Spanners}\label{sec:stream}

In this section, we show that spectral sparsifiers can be used to achieve low stretch spanners in one pass over the stream. Our algorithm works as follows: first, given a graph $G=(V,E)$, it generates a (possibly weighted) spectral sparsifier $H$ of $G$, using the sketches which can be stored in $\tilde{O}(n)$ space \cite{DBLP:conf/focs/KapralovLMMS14,DBLP:journals/corr/abs-1903-12150, DBLP:conf/soda/KapralovMMMNST20}. Then, the weights of all edges are set to be equal to $1$. We show that the resulting graph $\widehat{H}$ is a $\tilde{O}(n^{\frac{2}{3}})$-spanner of the original graph.

\thmSpectralSparsAreSpanners*

As \cite{DBLP:conf/focs/KapralovLMMS14} constructed $(1\pm\eps)$-spectral sparsifier with $O(\frac{n}{\epsilon^2})$ edges in a dynamic stream, by fixing $\eps=\frac{1}{18}$, we conclude:
\corSpectralSparsAreSpanners*
\begin{proof}[{\bf Proof of \Cref{thm:SpectralSparsAreSpanners}}]
	By triangle inequality, it is enough to prove that for every edges $(u,v)\in E$, it holds that $d_{\widehat{H}}(u,v)=\tilde{O}(n^{\frac23})$.
	Our proof strategy is as follows: consider a pair of vertices
	$u,v\in V$ such that $d_{\widehat{H}}(u,v)=s$. We will prove that $R_{u,v}^{G}\ge\tilde{\Omega}(\frac{s^{3}}{n^{2}})$. As for every pair of neighboring vertices it holds that  $R_{u,v}^{G}\le 1$, the theorem will follow.
	
	Consider a pair of vertices
	$v,u\in V$ such that $d_{\widehat{H}}(v,u)=s$. We partition $V$
	to sets $A_{0},A_{1},\dots,A_{s}$ where for $i<s$, $A_{i}=\{z\in V\mid d_{\widehat{H}}(v,z)=i\}$ are
	all the vertices at distance $i$ from $v$ in $\widehat{H}$. $A_{s}=\{z\in V\mid d_{\widehat{H}}(v,z)\ge s\}$
	are all the vertices at distance at least $s$.
	Let $W_{i}^{H}=w_H(A_i\times A_{i+1})$ be the total weight in $H$ (the weighted sparsifier) of all the edges between
	$A_{i}$ to $A_{i+1}$. Similarly, set $W_{i}^{G}=w_G(A_i\times A_{i+1})$. We somewhat abused notation here, we treat non-existing edges as having weight $0$, while all the edges in the unweighted graph $G$ have unit weight.
	For simplicity of notation set also $W_{-1}^{H}=W_{-1}^{G}=W_{s}^{H}=W_{s}^{G}=0$.
	Note that while $W_i^H$ denotes the size if a cut in $H$, it does not correspond to a cut in $G$ (as e.g. there might be edges from $A_i$ to $A_{i+2}$). Thus, a priori there should not be a resemblance between $W_i^G$ to $W_i^H$.
	Nevertheless, we show that  $W_{i}^{H}$ approximates $W_{i}^{G}$. However, the approximation will depend also on $W_{i-1}^{H},W_{i+1}^{H}$ rather than only on $W_{i}^{H}$.
	\begin{claim}
		\label{claim:CutHG}For every $i$, $W_{i}^{H}-\epsilon\cdot(W_{i-1}^{H}+W_{i}^H+W_{i+1}^{H})\le W_{i}^{G}\le W_{i}^{H}+\epsilon\cdot(W_{i-1}^{H}+W_{i}^{H}+W_{i+1}^{H})$.
	\end{claim}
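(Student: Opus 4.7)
My plan is to express $W_i^G$ \emph{exactly} as a linear combination of three ``ordinary'' cut values in $G$ and then invoke the cut sparsifier guarantee (Fact~\ref{fact:cutsp}, which follows from $H$ being a spectral sparsifier) on each of them separately.

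First I will exploit the fact that, since the $A_j$'s are BFS layers of $\widehat H$ from $v$, every $H$-edge has its two endpoints either in a common layer or in two consecutive layers. Consequently the three cut weights in $H$ can be read off immediately from the $W_j^H$'s:
\begin{align*}
w_H(A_i,V\setminus A_i) &= W_{i-1}^{H}+W_i^{H},\\
w_H(A_{i+1},V\setminus A_{i+1}) &= W_i^{H}+W_{i+1}^{H},\\
w_H(A_i\cup A_{i+1},V\setminus(A_i\cup A_{i+1})) &= W_{i-1}^{H}+W_{i+1}^{H},
\end{align*}
because edges lying inside $A_i$, inside $A_{i+1}$, or between $A_i$ and $A_{i+1}$ do not cross the respective cut. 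The analogous identities \emph{fail} in $G$ since $G$-edges may skip layers; this is exactly the obstacle we must overcome.

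Next I will establish the \emph{exact} identity, valid for every graph $G'$ on the vertex set $V$,
\[
w_{G'}(A_i,A_{i+1})\;=\;\tfrac12\Bigl(w_{G'}(A_i,V\setminus A_i)+w_{G'}(A_{i+1},V\setminus A_{i+1})-w_{G'}(A_i\cup A_{i+1},V\setminus(A_i\cup A_{i+1}))\Bigr),
\]
by checking the contribution of a single edge to the right-hand side based on which layers contain its two endpoints. An $(A_i,A_{i+1})$-edge contributes $\tfrac12(1+1-0)=1$; an edge with exactly one endpoint in $A_i$ (resp.\ $A_{i+1}$) and the other outside $A_i\cup A_{i+1}$ contributes $\tfrac12(1+0-1)=0$ (resp.\ $\tfrac12(0+1-1)=0$); and edges with both endpoints inside a single one of $A_i,A_{i+1}$ or with both endpoints outside $A_i\cup A_{i+1}$ contribute $0$. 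This little inclusion-exclusion identity is the only non-routine step.

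Finally I will apply Fact~\ref{fact:cutsp} with $S=A_i$, $S=A_{i+1}$, and $S=A_i\cup A_{i+1}$, which gives $|w_G(S,V\setminus S)-w_H(S,V\setminus S)|\le\eps\cdot w_H(S,V\setminus S)$ in each case, and substitute the three $H$-cut values computed in the first step. Plugging these three estimates into the identity above with $G'=G$, the main term evaluates to $\tfrac12(2W_i^{H})=W_i^{H}$, and the triangle inequality bounds the total additive error by
\[
\tfrac12\eps\bigl((W_{i-1}^{H}+W_i^{H})+(W_i^{H}+W_{i+1}^{H})+(W_{i-1}^{H}+W_{i+1}^{H})\bigr)\;=\;\eps\bigl(W_{i-1}^{H}+W_i^{H}+W_{i+1}^{H}\bigr),
\]
exactly matching the claimed two-sided bound on $W_i^G$.
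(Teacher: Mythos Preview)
Your proof is correct and arrives at exactly the same bound, but via a genuinely different decomposition than the paper's. The paper uses \emph{four} cuts --- the three ``horizontal'' cuts $(A_{\le i},A_{\ge i+1})$, $(A_{<i},A_{\ge i})$, $(A_{\le i+1},A_{>i+1})$ together with the ``crossing'' cut $(A_{<i}\cup A_{i+1},\,A_i\cup A_{>i+1})$ --- introduces six auxiliary weights $a,\dots,f$ (and their $H$-counterparts), and then sums the four resulting two-sided inequalities with appropriate signs so that the unwanted terms $b,c,d,e,f$ cancel. You instead use only \emph{three} cuts, the vertex-boundaries of $A_i$, of $A_{i+1}$, and of $A_i\cup A_{i+1}$, together with the standard identity $w(S,T)=\tfrac12\bigl(w(S,\bar S)+w(T,\bar T)-w(S\cup T,\overline{S\cup T})\bigr)$ valid for disjoint $S,T$. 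Your route is a bit cleaner: it needs one fewer cut, avoids the auxiliary variables, and makes transparent why the error is exactly $\eps(W_{i-1}^H+W_i^H+W_{i+1}^H)$ --- each of $W_{i-1}^H,W_i^H,W_{i+1}^H$ appears in precisely two of the three $H$-cut values, and the factor $\tfrac12$ from the identity brings the total back down. The paper's approach, on the other hand, makes the role of ``layer-skipping'' $G$-edges (the quantities $b,c,d$) more explicit, which some readers may find illuminating.
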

\begin{proof}[Proof of \Cref{claim:CutHG}]
	For a fixed $i$, set
	\[
	\renewcommand{\arraystretch}{2}
	\begin{array}{ll}
	A_{<i}=A_{0}\cup\dots\cup A_{i-1} \qquad\qquad\qquad& A_{>i+1}=A_{i+2}\cup\dots\cup A_{s}\\
	A_{\le i}=A_{0}\cup\dots\cup A_{i} \qquad\qquad\qquad& A_{\ge i+1}=A_{i+1}\cup\dots\cup A_{s}
	\end{array}
	\]
	In addition we denote the weight of several edge sets as follows, (see \Cref{fig:CutWith4sets} for illustration)
	\begin{equation}
	\renewcommand{\arraystretch}{2}
	\begin{array}{lll}
	a=w_{G}(A_{i}\times A_{i+1}) \qquad\qquad\qquad& b=w_{G}(A_{i}\times A_{>i+1})\qquad\qquad\qquad& c=w_{G}(A_{<i}\times A_{i+1})\\
	d=w_{G}(A_{<i}\times A_{>i+1}) \qquad\qquad\qquad& e=w_{G}(A_{<i}\times A_{i}) \qquad\qquad\qquad& f=w_{G}(A_{i+1}\times A_{>i+1})
	\end{array}\label{eq:Defabcdef}
	\end{equation}
	\begin{figure}[t]
		\centering{\includegraphics[scale=.65]{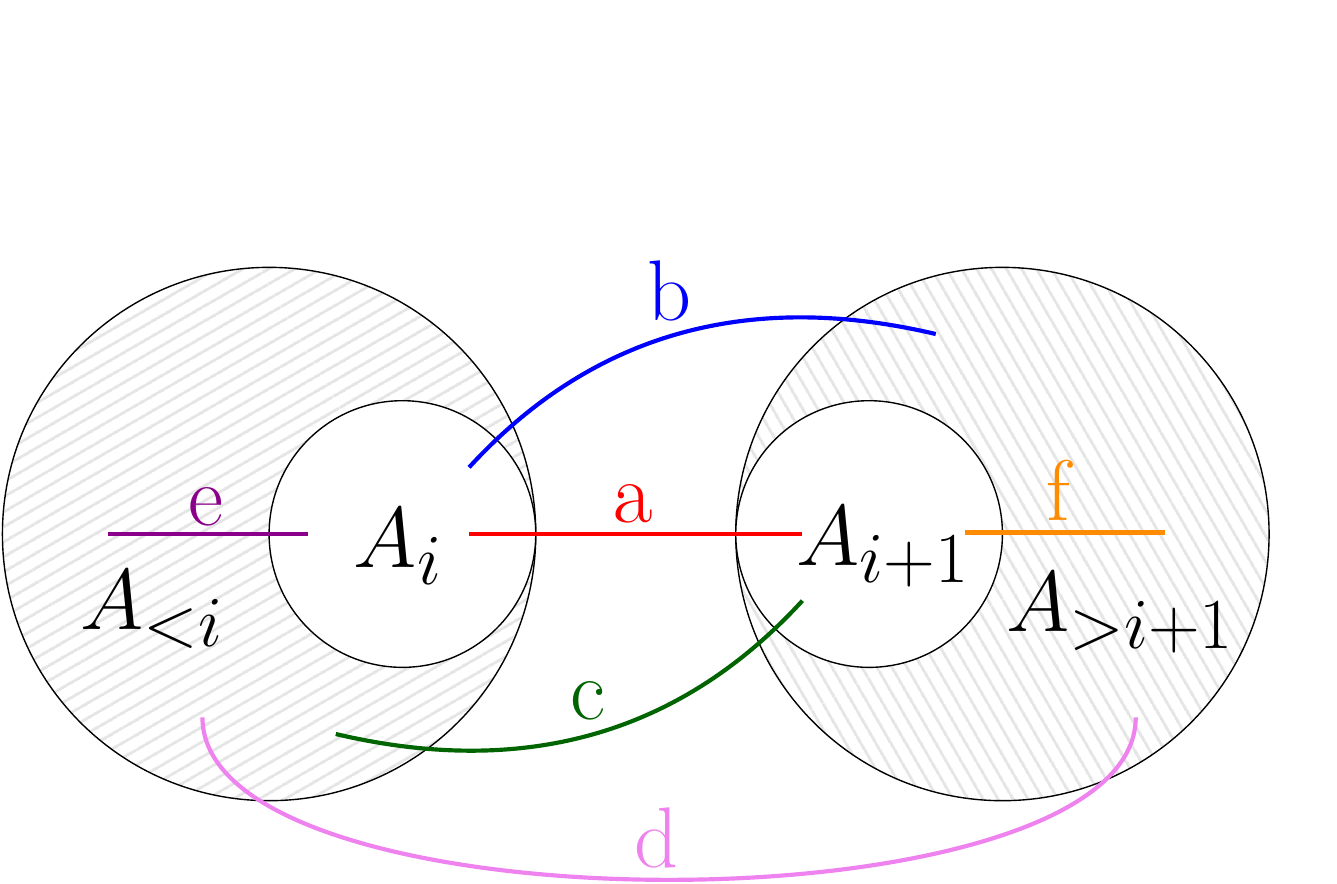}}
		\caption{\label{fig:CutWith4sets}\small An illustration of the diffferent edges sets, the weight of which is denoted in \Cref{eq:Defabcdef}. Note that $a=W_i^G$, $e=W_{i-1}^G$, and $f=W_{i+1}^G$.}
	\end{figure}
	Similarly by replacing $w_{G}$ with $w_{H}$ in \Cref{eq:Defabcdef}, we obtain the values $a',b',c',d',e',f'$ (e.g. $a'=w_{H}(A_{i}\times A_{i+1})$). 
	Note that by the definition of the sets $A_{0},\dots,A_{s}$, it holds that $b'=c'=d'=0$. 
	Using this notation, \Cref{claim:CutHG} states that $a'-\epsilon\cdot(a'+e'+f')\le a\le a'+\epsilon\cdot(a'+e'+f')$.

	Note that any $(1\pm \eps)$-spectral sparsifier is a $(1\pm \eps)$-cut sparsifier. Thus, as $H$ is a $(1\pm \eps)$-spectral sparsifier of $G$, it preserves weights of
	all the cuts up to $\epsilon$ error factors. We derive the following inequalities:
	\begin{tabular}{lclcll}
		$(1-\epsilon)a'$       & $\le$ & $a+b+c+d$ & $\le$ & $(1+\epsilon)a'$      &By $(A_{\le i},A_{\ge i+1})$-cut   \\
		$(1-\epsilon)f'$       & $\le$ & $b+d+f$   & $\le$ & $(1+\epsilon)f'$       &By $(A_{\le i+1},A_{> i+1})$-cut  \\
		$(1-\epsilon)e'$       & $\le$ & $c+d+e$   & $\le$ & $(1+\epsilon)e'$   &By $(A_{<i},A_{\ge i})$-cut      \\
		$(1-\epsilon)(a'+e'+f')$ & $\le$ & $a+d+e+f$ & $\le$ & $(1+\epsilon)(a'+e'+f')$ &By $(A_{<i}\cup A_{i+1},A_i\cup A_{>i+1})$-cut
	\end{tabular}
	
	Or equivalently
	\[
	\begin{array}{ccccc}
	(1-\epsilon)a' & \le & a+b+c+d & \le & (1+\epsilon)a'\\
	-(1+\epsilon)f' & \le & -b-d-f & \le & -(1-\epsilon)f'\\
	-(1+\epsilon)e' & \le & -c-d-e & \le & -(1-\epsilon)e'\\
	(1-\epsilon)(a'+e'+f') & \le & a+d+e+f & \le & (1+\epsilon)(a'+e'+f')
	\end{array}
	\]
	By summing up these 4 inequalities, and dividing by $2$, we get
	\[
	a'-\epsilon\cdot(a'+e'+f')\le a\le a'+\epsilon\cdot(a'+e'+f')~.
	\]
	The claim now follows.
\end{proof}

	Our next goal is to bound $\sum_{i=0}^{s-1}\frac{1}{W_{i}^{H}}$, as this quantity lower-bounds the resistance between $u$ and $v$ in $H$. Since $\sum_{i=0}^{s}|A_i|=n$ and $W_{i}^{G}\le |A_i|\cdot|A_{i+1}|$, one can bound $\sum_{i=0}^{s-1}\frac{1}{W_{i}^{G}}$  by $\Omega\left(\frac{s^{3}}{n^{2}}\right)$. However relating this quantity to the effective resistances in $G$ is not as straightforward as one might expect.
		\begin{claim}
		\label{claim:sumWH}$\sum_{i=0}^{s-1}\frac{1}{W_{i}^{H}}\ge\Omega\left(\frac{s^{3}}{n^{2}}\cdot\frac{\log^{2}\frac{1}{\epsilon}}{\log^{2}n}\right)$.
	\end{claim}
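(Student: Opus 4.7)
The plan is two-step: (i) upgrade the three-term bound of the previous claim into a \emph{smoothed} bound of the form $W_i^H \le C\sum_j \psi^{|i-j|} W_j^G$ with a geometric decay factor $\psi<1$, and then (ii) apply an averaging-plus-power-mean argument over consecutive intervals of length $L=\Theta(\log n/\log(1/\epsilon))$. The $\log^2(1/\epsilon)/\log^2 n$ factor in the target bound will emerge exactly as $1/L^2$, with $L$ being the window length needed to make the geometric tail in step (i) negligible.

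For step (i), set $\phi:=\epsilon/(1-\epsilon)\le 1/17$ and rewrite the previous claim as the tridiagonal inequality $(I-\phi(S+S^\top))\vec W^H \le \vec W^G/(1-\epsilon)$, where $S$ is the shift operator on $\{0,1,\dots,s-1\}$ and we set $W_{-1}^H=W_s^H=0$. Since $\phi<1/2$, the matrix $M:=I-\phi(S+S^\top)$ is strictly diagonally dominant, and its Neumann inverse $M^{-1}=\sum_{k\ge 0}(\phi(S+S^\top))^k$ has only nonnegative entries with $(M^{-1})_{ij}\le C\cdot(2\phi)^{|i-j|}$ (since the number of walks of length $k$ on the path is at most $2^k$). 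Setting $\psi:=2\phi$, this yields
\begin{equation*}
W_i^H \;\le\; C_1\sum_{j=0}^{s-1}\psi^{|i-j|}\,W_j^G \quad\text{for some absolute constant }C_1.
\end{equation*}

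For step (ii), choose $L:=\lceil c\log n/\log(1/\psi)\rceil=\Theta(\log n/\log(1/\epsilon))$ for a sufficiently large constant $c$, and partition $\{0,\dots,s-1\}$ into $\lceil s/L\rceil$ consecutive intervals $I_1,I_2,\dots$ of length $\le L$. Let $\hat I_k$ be the $L$-neighborhood of $I_k$ and set $\hat q_k:=\sum_{j\in\hat I_k}p_j$, where $p_j:=|A_j|$. Since $\sum_{i\in I_k}\psi^{|i-j|}=O(1)$ uniformly in $j$ while the tail from $j\notin\hat I_k$ contributes only $n^{O(1)}\cdot\psi^L=o(1)$, summing the smoothed bound over $i\in I_k$ and using $W_j^G\le p_jp_{j+1}$ together with $\sum_{j\in\hat I_k}p_jp_{j+1}\le \hat q_k^2$ (Cauchy-Schwarz) gives $\sum_{i\in I_k}W_i^H \le O(\hat q_k^2)$. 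By averaging, some $i_k^*\in I_k$ satisfies $W_{i_k^*}^H\le O(\hat q_k^2/L)$. Note that since the shortest $v$-$u$ path in $\widehat H$ visits every layer, $p_j\ge 1$ for $0\le j\le s$, so $\hat q_k\ge L$ and the $o(1)$ additive error is dominated.

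Finally, each $p_j$ lies in $O(1)$ expanded intervals $\hat I_k$, so $\sum_k\hat q_k=O(n)$. The power-mean inequality applied to the $\lceil s/L\rceil$ values $(\hat q_k)$ with bounded sum yields $\sum_k 1/\hat q_k^2\ge\Omega((s/L)^3/n^2)$, and therefore
\begin{equation*}
\sum_{i=0}^{s-1}\frac{1}{W_i^H}\;\ge\;\sum_k \frac{1}{W_{i_k^*}^H}\;\ge\;\Omega(L)\cdot\sum_k\frac{1}{\hat q_k^2}\;\ge\;\Omega\!\left(\frac{s^3}{L^2n^2}\right)\;=\;\Omega\!\left(\frac{s^3}{n^2}\cdot\frac{\log^2(1/\epsilon)}{\log^2 n}\right),
\end{equation*}
which is the claimed bound. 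The main obstacle is the smoothing step: the three-term bound alone does not rule out long chains of indices where $W_i^H$ is much larger than $p_ip_{i+1}$, but viewing the bound as a linear system and inverting via a Neumann series produces precisely the $\psi^{|i-j|}$ decay that makes the per-interval averaging tight.
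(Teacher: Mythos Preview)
Your proof is correct and takes a genuinely different route from the paper's.

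The paper argues by contradiction and geometric blow-up: it first uses pigeonhole on the layer sizes $a_i=|A_i|$ to find a set $\tilde I$ of at least $s/2$ indices such that every $a_j$ within a window of width $\alpha/10$ (with $\alpha=\Theta(\log n/\log(1/\epsilon))$) satisfies $a_j\le \alpha n/s$. For any such $i_0$ it then assumes $W_{i_0}^H>2(\alpha n/s)^2$ and iterates Claim~\ref{claim:CutHG} to exhibit a nearby index $i_j$ with $W_{i_j}^H>(6\epsilon)^{-j}(\alpha n/s)^2$; after $\alpha/10$ steps this exceeds $n^2$, contradicting the sparsifier property. Summing $1/W_i^H$ over $\tilde I$ gives the bound.

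Your approach replaces this inductive blow-up by a single linear-algebraic step: viewing Claim~\ref{claim:CutHG} as the tridiagonal system $(I-\phi(S+S^\top))\vec W^H\le \vec W^G/(1-\epsilon)$ and inverting via the Neumann series yields directly the smoothed bound $W_i^H\le C\sum_j\psi^{|i-j|}W_j^G$ with $\psi=2\epsilon/(1-\epsilon)$. You then group into length-$L$ intervals, average, and finish with the power-mean inequality applied to the aggregated layer sizes $\hat q_k$. The two proofs use the \emph{same} window length $L\asymp\alpha$, but yours packages the propagation of the three-term inequality cleanly as an exponentially decaying kernel, and the power-mean step on $\sum_k\hat q_k=O(n)$ is a smoother substitute for the paper's pigeonhole on $\sum_i a_i=n$. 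Your smoothed inequality is also a slightly sharper intermediate statement and would generalize mechanically if Claim~\ref{claim:CutHG} involved a wider band; the paper's argument is more elementary but tied to the specific three-term form.

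Two minor points worth tightening in a write-up: (i) the bound $\sum_{j\in\hat I_k}p_jp_{j+1}\le \hat q_k^2$ needs $p_{b+1}$ for the right endpoint $b$ of $\hat I_k$, so either include that index in $\hat q_k$ or enlarge $\hat I_k$ by one (this does not affect $\sum_k\hat q_k=O(n)$); (ii) the last interval $I_k$ may have fewer than $L$ elements, but dropping it only loses one of the $\lceil s/L\rceil$ summands and does not affect the asymptotics.
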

	\begin{proof}[Proof of \Cref{claim:sumWH}]
		For all $i\in[s]$, set $a_i=|A_i|$.
		Set
		\begin{align}\label{eq:alpha}
		\alpha:=10\log_{\frac{1}{6\epsilon}}n^{2}~,
		\end{align}
		and $$I:=\left\{ i\in[s]\mid a_{i}\le\frac{\alpha n}{s}\right\}.$$
		It holds that $|I|\ge\left(1-\frac{1}{\alpha}\right)s+1$, as otherwise
		there are at least  $\frac{s}{\alpha}$ indices $i$ for which $a_{i}>\frac{\alpha n}{s}$,
		implying $\sum_{i}a_{i}>n$, a contradiction, since $A_{0},\dots,A_{s}$
		forms a partition of $V$. Set $$\wt{I}:=\left\{ i\mid\text{such that }\forall j~\text{such that }|i-j|\le\frac{\alpha}{10},~\text{ it holds that }j\in I\right\}. $$
		Note that, since there are less than $\frac{s}{\alpha}$ indices $i$ such that $i\notin I$, then there are less than $\frac{s}{\alpha}\cdot\frac{2\alpha}{10}\le\frac{s}{5}$
		indices out of $\wt{I}$, implying 
		\begin{align}\label{eq:Itilde}
		\left|\wt{I}\right|\ge\frac{s}{2}~.
		\end{align}
		
		Fix an index $i_0\in\wt{I}$, we argue that $W_{i_0}^{H}\le2\left(\frac{\alpha n}{s}\right)^{2}$. 
		For every index $j\in \left[i_0-\frac{\alpha}{10},i_0+\frac{\alpha}{10}-1\right]$, it holds that $W_{j}^{G}=|E_G(A_j,A_{j+1})|\le a_{j}\cdot a_{j+1}\le\left(\frac{\alpha n}{s}\right)^{2}$. 
		Assume for the sake of contradiction that $W_{i_0}^{H}>2\left(\frac{\alpha n}{s}\right)^{2}$.
		We prove by induction that for $1\le j\le\frac{\alpha}{10}$, there is an index $i_j$ such that $|i_j-i_0|\le j$ and $W_{i_{j}}^{H}>\frac{1}{(6\epsilon)^{j}}\left(\frac{\alpha n}{s}\right)^{2}$.
		For the base case, by \Cref{claim:CutHG},
		\[
		W_{i_{0}-1}^{H}+W_{i_{0}}^{H}+W_{i_{0}+1}^{H}\ge\frac{1}{\epsilon}\left(W_{i_{0}}^{H}-W_{i_{0}}^{G}\right)>\frac{1}{\epsilon}\left(2\left(\frac{\alpha n}{s}\right)^{2}-\left(\frac{\alpha n}{s}\right)^{2}\right)=\frac{1}{\epsilon}\left(\frac{\alpha n}{s}\right)^{2}~.
		\]
		The we can choose $i_{1}\in\left\{ i_0-1,i_0,i_0+1\right\} $ such that $W_{i_{1}}^{H}>\frac{1}{3\epsilon}\left(\frac{\alpha n}{s}\right)^{2}>\frac{1}{6\epsilon}\left(\frac{\alpha n}{s}\right)^{2}$.\\
		For the induction step, suppose that there is an index $i_j$ such that $|i_j-i_0|\le j<\frac{\alpha}{10}$ and ${W_{i_{j}}^{H}>\frac{1}{(6\epsilon)^{j}}\left(\frac{\alpha n}{s}\right)^{2}}$.
		As $|i_j-i_0|\le \frac{\alpha}{10}-1$, it follows that  $W_{i_j}^{G}\le\left(\frac{\alpha n}{s}\right)^{2}$. Hence
		\[
		W_{i_{j}-1}^{H}+W_{i_{j}}^{H}+W_{i_{j}+1}^{H}\ge\frac{1}{\epsilon}\left(W_{i_{j}}^{H}-W_{i_{j}}^{G}\right)\ge\frac{1}{\epsilon}\left(\frac{1}{(6\epsilon)^{j}}\left(\frac{\alpha n}{s}\right)^{2}-\left(\frac{\alpha n}{s}\right)^{2}\right)>\frac{1}{2\epsilon}\cdot\frac{1}{(6\epsilon)^{j}}\left(\frac{\alpha n}{s}\right)^{2}~.
		\]
		Thus there is an index $i_{j+1}\in\left\{ i_{j}-1,i_j,i_{j}+1\right\}$ such that $W_{i_{j+1}}^{H}>\frac{1}{(6\epsilon)^{j+1}}\left(\frac{\alpha n}{s}\right)^{2}$,
		as required.
		
		We conclude that,
		\begin{align*}
		W_{i_{\frac{\alpha}{10}}}^{H} & ~>~\left(6\epsilon\right)^{-\frac{\alpha}{10}}\left(\frac{\alpha n}{s}\right)^{2}~\overset{(\ref{eq:alpha})}{\ge}~ n^{2}\left(\frac{\alpha n}{s}\right)^{2}~\ge~ n^{2}~,
		\end{align*}
		where the last inequality follows as $s\le n$. This is a contradiction, as $H$
		is an $(1\pm\epsilon)$ spectral sparsifier of the unweighted graph $G$,
		where the maximal size of a cut is $\frac{n^{2}}{4}$. We conclude
		that for every $i\in\wt{I}$, it holds that $W_{i}^{H}\le2\left(\frac{\alpha n}{s}\right)^{2}$.
		The claim now follows as 
		\begin{align}
		\sum_{i=0}^{s-1}\frac{1}{W_{i}^{H}}&\ge\left|\wt{I}\right|\cdot\frac{1}{2}\left(\frac{\alpha n}{s}\right)^{-2}\nonumber\\
		&\ge\frac{s^{3}}{4\alpha^{2}n^{2}}&&\text{By \Cref{eq:Itilde}}\nonumber\\
		&=\Omega\left(\frac{s^{3}}{n^{2}}\cdot\frac{\log^{2}\frac{1}{\epsilon}}{\log^{2}n}\right)\label{eq:WHsum}&&\text{By \Cref{eq:alpha}}
		\end{align}
	\end{proof}
	
We are now ready to prove the theorem. Construct an auxiliary graph $H'$ from $H$, by  contracting all the vertices inside each set $A_{i}$, and keeping multiple edges. Note that by this operation, the effective resistance between $u$ and $v$ cannot increase.
The graph $H'$ is a path graph consisting of $s$ vertices, where the conductance between the $i$'th vertex to the $i+1$'th is $W_{i}^H$. Using \Cref{claim:sumWH}, we conclude
\begin{align}
	(1+\epsilon)R_{u,v}^{G} & \ge  R_{u,v}^{H}&&\text{By \Cref{fact:EffSparse}}\nonumber\\
	&\ge R_{u,v}^{H'}&&\text{As explained above}\nonumber\\
	&=\sum_{i=0}^{s-1}\frac{1}{W_{i}^{H}}&&\text{Since $H'$ is a path graph}\nonumber\\
	&=\Omega\left(\frac{s^{3}}{n^{2}}\cdot\frac{\log^{2}\frac{1}{\epsilon}}{\log^{2}n}\right)~\label{eq:RGbound}&&\text{By \Cref{eq:WHsum}}
\end{align}
As $u,v$ are neighbors in the unweighted graph $G$, it necessarily holds that $R_{u,v}^{G}\le 1$, implying that $s=O\left(\left(n^{2}\cdot\frac{\log^{2}n}{\log^{2}\frac{1}{\epsilon}}\right)^{\frac{1}{3}}\right)=\wt{O}\left(n^{\frac{2}{3}}\right)$.
\end{proof}

We state the following corollary, based on the last part of the proof of \Cref{thm:SpectralSparsAreSpanners}. 
\begin{corollary}\label{cor:stretcheffres}
	Let $G=(V,E)$ be an unweighted undirected graph, and let $H$ be a $(1\pm\epsilon)$-spectral sparsifier of $G$ for some small enough constant $\epsilon$. Also, let $\widehat{H}$ denote the unweighted $H$. If for a pair of vertices $u,v\in V$ we have $s:=d_{\widehat{H}}(u,v)$, then $$R_{u,v}^{G} = \wt{\Omega}\left(\frac{s^3}{n^2}\right),$$
	and 
	$$R_{u,v}^{H} = \wt{\Omega}\left(\frac{s^3}{n^2}\right).$$
\end{corollary}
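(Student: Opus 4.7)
The plan is to observe that the chain of inequalities established during the proof of \Cref{thm:SpectralSparsAreSpanners} never actually uses the assumption that $(u,v)\in E$: that assumption is invoked only at the very last step in order to bound $R_{u,v}^G\le 1$ and then solve for $s$. Extracting the rest of the argument verbatim gives both bounds claimed in the corollary, for an arbitrary pair $u,v$ with $d_{\widehat{H}}(u,v)=s$.

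Concretely, I would fix such a pair $u,v$ and reintroduce the BFS-layer decomposition $A_0,A_1,\dots,A_s$ of $V$ with respect to $u$ in $\widehat{H}$, together with the cut weights $W_i^H=w_H(A_i\times A_{i+1})$ and $W_i^G=w_G(A_i\times A_{i+1})$, exactly as in the theorem's proof. The two central claims used there, namely \Cref{claim:CutHG} (the ``error absorption'' $W_i^G=W_i^H\pm\epsilon(W_{i-1}^H+W_i^H+W_{i+1}^H)$) and \Cref{claim:sumWH} (the resulting bound $\sum_{i=0}^{s-1}\frac{1}{W_i^H}=\wt\Omega(s^3/n^2)$), depend only on the BFS structure of the layers and on the spectral sparsifier property of $H$, so they apply unchanged.

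For the bound on $R_{u,v}^H$, I would contract each layer $A_i$ in $H$ to a single super-node, obtaining a weighted path graph $H'$ on $s+1$ vertices in which the conductance between consecutive super-nodes is $W_i^H$. Since contracting vertices can only decrease effective resistance,
\[
R_{u,v}^H\;\ge\;R_{u,v}^{H'}\;=\;\sum_{i=0}^{s-1}\frac{1}{W_i^H}\;=\;\wt\Omega\!\left(\frac{s^3}{n^2}\right),
\]
which is the second inequality. For the first, I would then apply \Cref{fact:EffSparse} in the form $R_{u,v}^H\le(1+\epsilon)R_{u,v}^G$ to conclude $R_{u,v}^G\ge\frac{1}{1+\epsilon}R_{u,v}^H=\wt\Omega(s^3/n^2)$.

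There is no genuine obstacle here: the delicate work — the three-cut linear combination used to prove \Cref{claim:CutHG}, and the inductive amplification argument that promotes a single large cut to an exponentially large one in \Cref{claim:sumWH} — has already been done in the theorem's proof. The only thing to verify is that nowhere in that chain is the edge hypothesis $(u,v)\in E$ actually used, which is clear by inspection: that hypothesis enters only when the theorem invokes $R_{u,v}^G\le 1$ to solve for $s$, a step the corollary simply does not perform.
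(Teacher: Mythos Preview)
Your proposal is correct and matches the paper's own justification essentially verbatim: the paper simply states that the corollary follows ``based on the last part of the proof of \Cref{thm:SpectralSparsAreSpanners},'' i.e., the chain in \eqref{eq:RGbound}, and you have spelled out precisely that chain together with the observation that the edge hypothesis $(u,v)\in E$ is only invoked afterward to bound $R_{u,v}^G\le 1$.
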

\subsection{Sparse graphs}

Suppose we are guaranteed that the graph $G$ we receive in the dynamic stream has eventually at most $m$ edges. In \Cref{thm:SpectralSparsAreSpannersEdges} we show that the distortion guarantee of a sparsifier is at most $\wt{O}(\sqrt{m})$, and thus together with \Cref{thm:SpectralSparsAreSpanners} it is $\wt{O}(\min\{\sqrt{m},n^{\nicefrac{2}{3}}\})$. Later, in \Cref {sec:CongestedMultiplayerModel} we will use this to obtain a two pass algorithm in the simultaneous communication model with distortion $\wt{O}(n^{\nicefrac35})$.

\begin{theorem}\label{thm:SpectralSparsAreSpannersEdges}
	Let $G=(V,E)$ be an undirected, unweighted such that $|V|=n$ and $|E|=m$. For a parameter $\eps\in(0,\frac{1}{18}]$, suppose that $H$ is a $(1\pm\eps)$-spectral sparsifier of $G$. Then $\widehat{H}$ is an $\wt{O}(\sqrt{m})$-spanner of $G$, where $\widehat{H}$ is the unweighted version of $H$.
\end{theorem}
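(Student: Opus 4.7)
The argument follows the same template as the proof of \Cref{thm:SpectralSparsAreSpanners}, but replaces the layer-size counting (based on $\sum_i |A_i|\le n$) with an edge-count argument (based on $\sum_i W_i^G \le m$). Fix an edge $(u,v)\in E$ with $d_{\widehat{H}}(u,v)=s$, and partition $V$ into the BFS layers $A_0,\dots,A_s$ of $\widehat{H}$ from $u$. Using the same notation $W_i^G,W_i^H$ for the weight of edges between $A_i$ and $A_{i+1}$ in $G$ and $H$ respectively, \Cref{claim:CutHG} still applies verbatim: $|W_i^G-W_i^H|\le\epsilon(W_{i-1}^H+W_i^H+W_{i+1}^H)$. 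As before, the goal is to lower bound $\sum_{i=0}^{s-1}\frac{1}{W_i^H}$, which lower bounds $R_{u,v}^H$ (and hence, up to a $(1+\epsilon)$ factor, $R_{u,v}^G$) by the path contraction argument used at the end of the proof of \Cref{thm:SpectralSparsAreSpanners}.

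The key quantitative change is the following analogue of \Cref{claim:sumWH}: set $\alpha=\Theta(\log n/\log\tfrac{1}{\epsilon})$ and define
$$I:=\Bigl\{\,i\in[s]\,\Big|\,W_i^G\le \tfrac{\alpha m}{s}\,\Bigr\}.$$
Since the edge sets $E_G(A_i,A_{i+1})$ are disjoint, $\sum_i W_i^G\le m$, so at most $s/\alpha$ indices violate the defining inequality and thus $|I|\ge(1-1/\alpha)s$. Defining $\widetilde{I}$ as the indices whose $(\alpha/10)$-window lies in $I$ gives $|\widetilde{I}|\ge s/2$ exactly as in the proof of \Cref{claim:sumWH}.

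The next step is to show $W_{i_0}^H\le 2\alpha m/s$ for every $i_0\in\widetilde{I}$. Assume for contradiction that $W_{i_0}^H>2\alpha m/s$; then iterating \Cref{claim:CutHG} together with the bound $W_{i_j}^G\le\alpha m/s$ (valid since $|i_j-i_0|\le\alpha/10$ keeps $i_j\in I$) produces indices $i_j$ with $|i_j-i_0|\le j$ and $W_{i_j}^H>(6\epsilon)^{-j}\cdot\tfrac{\alpha m}{s}$. After $j=\alpha/10$ steps, the choice of $\alpha$ yields $W_{i_{\alpha/10}}^H>2m\ge(1+\epsilon)m$, contradicting the cut-sparsifier guarantee (\Cref{fact:cutsp}) applied to the cut $(A_{\le i_{\alpha/10}},A_{>i_{\alpha/10}})$, since every cut in the $m$-edge unweighted graph $G$ has weight at most $m$.

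Combining $|\widetilde{I}|\ge s/2$ with $W_i^H\le 2\alpha m/s$ on $\widetilde{I}$ gives
$$\sum_{i=0}^{s-1}\frac{1}{W_i^H}\ge|\widetilde{I}|\cdot\frac{s}{2\alpha m}\ge\frac{s^2}{4\alpha m}=\widetilde{\Omega}\!\left(\frac{s^2}{m}\right).$$
Contracting each $A_i$ in $H$ to a point turns it into a weighted path, so $R_{u,v}^H\ge\sum_i 1/W_i^H=\widetilde{\Omega}(s^2/m)$, and by \Cref{fact:EffSparse}, $R_{u,v}^G=\widetilde{\Omega}(s^2/m)$. Since $(u,v)$ is an edge in the unweighted graph $G$, $R_{u,v}^G\le 1$, forcing $s=\widetilde{O}(\sqrt{m})$. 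The main subtlety, exactly as in \Cref{thm:SpectralSparsAreSpanners}, is that $W_i^G$ does not correspond to any cut in $G$, so we cannot directly compare $W_i^G$ and $W_i^H$ via \Cref{fact:cutsp}; the resolution is again the asymmetric, three-term error bound of \Cref{claim:CutHG} together with the amplification argument above.
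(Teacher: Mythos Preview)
Your proof plan is correct and follows essentially the same approach as the paper's own proof: both replace the vertex-layer counting $\sum_i |A_i|\le n$ by the edge-count identity $\sum_i W_i^G\le m$ to define the good index set $I$, apply \Cref{claim:CutHG} in the same amplification/contradiction argument to bound $W_i^H$ on $\widetilde{I}$, and finish via the path contraction lower bound on $R_{u,v}^H$. The only cosmetic differences are in the exact normalization of $\alpha$ and the phrasing of the final contradiction, which are equivalent.
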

The proof follows similar lines to the proof of \Cref{thm:SpectralSparsAreSpanners} and is deferred to \Cref{sec:omitted-3}. \Cref{thm:SpectralSparsAreSpannersEdges} implies a streaming algorithm using space $\tilde{O}(n)$ that constructs a spanner with stretch $\tilde{O}(\sqrt{m})$. Notice that the number of edges $m$, does not need to be known in advance.

Similar to \Cref{cor:stretcheffres}, using the last part of the proof of \Cref{thm:SpectralSparsAreSpannersEdges}, we conclude the following:
\begin{corollary}\label{cor:stretcheffresedge}
	Let $G=(V,E)$ be an unweighted undirected graph with $m=|E|$, and let $H$ be a $(1\pm\epsilon)$-spectral sparsifier of $G$ for some small enough constant $\epsilon$. Also, let $\widehat{H}$ denote the unweighted $H$. If for a pair of vertices $u,v\in V$ we have $s:=d_{\widehat{H}}(u,v)$, then $$R_{u,v}^{G} = \wt{\Omega}\left(\frac{s^2}{m}\right),$$
	and 
	$$R_{u,v}^{H} = \wt{\Omega}\left(\frac{s^2}{m}\right).$$
\end{corollary}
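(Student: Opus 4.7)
The plan is to extract the key intermediate inequality from the (deferred) proof of \Cref{thm:SpectralSparsAreSpannersEdges}, exactly in the same way that \Cref{cor:stretcheffres} was obtained from the proof of \Cref{thm:SpectralSparsAreSpanners}. In that earlier argument, once one has $s=d_{\widehat{H}}(u,v)$, one partitions $V$ into BFS layers $A_0,\dots,A_s$ w.r.t.~$u$ in $\widehat{H}$ and tracks the layer-to-layer edge weights $W_i^G=w_G(A_i\times A_{i+1})$ and $W_i^H=w_H(A_i\times A_{i+1})$. \Cref{claim:CutHG} gives the ``triple-window'' approximation $|W_i^H-W_i^G|\le\epsilon(W_{i-1}^H+W_i^H+W_{i+1}^H)$, and the final displayed inequality of the proof of \Cref{thm:SpectralSparsAreSpanners} shows $(1+\epsilon)R_{u,v}^G\ge R_{u,v}^H\ge R_{u,v}^{H'}=\sum_{i=0}^{s-1}\frac{1}{W_i^H}$, where $H'$ is the path obtained by contracting each $A_i$.

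Thus the task reduces to proving $\sum_{i=0}^{s-1}\tfrac{1}{W_i^H}=\widetilde\Omega(s^2/m)$, which is the sparse-graph analogue of \Cref{claim:sumWH}. Here the trivial input bound is not $\sum_i|A_i|\le n$ but rather $\sum_i W_i^G\le m$, since every edge of $G$ contributes to at most one of the consecutive-layer cuts. Setting $\alpha:=10\log_{1/(6\epsilon)}n^2$ and $I:=\{i:W_i^G\le \alpha m/s\}$, an averaging argument gives $|I|\ge(1-1/\alpha)s$, and one defines $\widetilde I$ to be the indices whose $\alpha/10$-neighborhood lies entirely in $I$, so that $|\widetilde I|\ge s/2$.

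For $i_0\in\widetilde I$, I would then mimic the inductive contradiction argument of \Cref{claim:sumWH}: assume $W_{i_0}^H>2\alpha m/s$, and use \Cref{claim:CutHG} at $i_0$ together with the fact that $W_j^G\le\alpha m/s$ for all $j$ within $\alpha/10$ of $i_0$ to produce, inductively, indices $i_j$ with $|i_j-i_0|\le j$ and $W_{i_j}^H>(6\epsilon)^{-j}(\alpha m/s)$. After $\alpha/10$ steps the cut weight exceeds $n^2$ (using $m\le n^2$ and the choice of $\alpha$), contradicting that $H$ is a $(1\pm\epsilon)$-sparsifier of an unweighted graph whose cuts are at most $n^2/4$. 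Hence $W_i^H\le 2\alpha m/s$ for every $i\in\widetilde I$, yielding $\sum_{i=0}^{s-1}\tfrac{1}{W_i^H}\ge \tfrac{s}{2}\cdot\tfrac{s}{2\alpha m}=\widetilde\Omega(s^2/m)$.

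Combining this with $R_{u,v}^H\ge R_{u,v}^{H'}=\sum_i 1/W_i^H$ gives the second displayed bound, and then $R_{u,v}^G\ge (1-\epsilon)R_{u,v}^H$ (from \Cref{fact:EffSparse}) gives the first. The main obstacle is cosmetic rather than conceptual: one must verify that the $\alpha/10$-neighborhood argument still works when the per-layer cap is $\alpha m/s$ (a size bound on a cut in $G$) rather than the product $(\alpha n/s)^2$ used in \Cref{claim:sumWH}, and in particular that $\alpha m/s\le n^2$ so the terminal contradiction with cuts of size $>n^2/4$ in the sparsifier is genuine; this holds whenever $s\ge 1$ and $m\le n^2$, which is always. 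Everything else is a direct transcription of the proof of \Cref{thm:SpectralSparsAreSpanners} with $m$ playing the role of $n^2/s$ in the per-layer cap.
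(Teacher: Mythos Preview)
Your proposal is correct and mirrors the paper's proof of \Cref{thm:SpectralSparsAreSpannersEdges} almost line for line: same BFS layering, same index set $I=\{i:W_i^G\le\alpha m/s\}$ and its thickened version $\widetilde I$, same inductive blow-up via \Cref{claim:CutHG}, and the same concluding chain $R_{u,v}^G\gtrsim R_{u,v}^H\ge\sum_i 1/W_i^H=\widetilde\Omega(s^2/m)$. The only cosmetic differences are that the paper picks $\alpha$ so that $(6\epsilon)^{-\alpha/10}\ge 2s/\alpha$, making the terminal contradiction read $W_{i_{\alpha/10}}^H>2m$ directly against the bound $(1+\epsilon)m$ on any cut of $H$, rather than detouring through $n^2$.

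One small slip in your last paragraph: the check you state is backwards. For the terminal contradiction you need $\alpha m/s$ bounded \emph{below} by a constant (so that $(6\epsilon)^{-\alpha/10}\cdot\alpha m/s$ actually exceeds the maximal cut weight), not bounded above by $n^2$. This is still easy, since $s\le m$ within the connected component containing $u,v$ and hence $\alpha m/s\ge\alpha>1$; but ``$\alpha m/s\le n^2$'' is neither what is needed nor what you use.
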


\subsection{Tightness of \Cref{thm:SpectralSparsAreSpanners} and \Cref{thm:SpectralSparsAreSpannersEdges}}\label{subsec:LowerBound}
In this section, we show that the stretch guarantees in \Cref{thm:SpectralSparsAreSpanners} and \Cref{thm:SpectralSparsAreSpannersEdges} are tight up to polylogarithmic factors. 
\begin{lemma}[Tightness of \Cref{thm:SpectralSparsAreSpanners}]\label{lem:tightnessn}
	For every large enough $n$, there exists an unweighted $n$ vertex graph $G$, and a spectral sparsifier $H$ of $G$ such that $\widehat{H}$ has stretch $\wt{\Omega}(n^{2/3})$ w.r.t. $G$.
\end{lemma}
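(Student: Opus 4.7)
The plan is to construct an $n$-vertex graph $G$ and an explicit spectral sparsifier $H$ of $G$ that jointly saturate all the inequalities underlying the $\widetilde O(n^{2/3})$ upper bound in \Cref{thm:SpectralSparsAreSpanners}. Choose $s = \Theta(n^{2/3})$ and $k = \Theta(n^{1/3})$ with $sk \le n$ and $s/k^2 \le \epsilon = 1/18$ (for instance $s = \lfloor n^{2/3}/3\rfloor$ and $k = \lfloor n/s\rfloor$ yield $s/k^2 \le 1/27$). Partition $V$ into layers $A_0, A_1, \ldots, A_{s-1}$ of size $k$ each (dumping any leftover vertices into $A_0$), pick distinguished $u \in A_0$ and $v \in A_{s-1}$, and let $G$ be the unweighted graph consisting of the complete bipartite graph $K_{k,k}$ between every consecutive pair of layers $A_i, A_{i+1}$, together with one extra ``shortcut'' edge $(u,v)$.

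The candidate sparsifier is $H := G \setminus \{(u,v)\}$ with unit edge weights. To verify that $H$ is a $(1\pm\epsilon)$-spectral sparsifier of $G$, observe that $L_G = L_H + b_{uv}b_{uv}^\top$, so $L_H \preceq L_G$ is immediate; the reverse $L_G \preceq (1+\epsilon) L_H$ is equivalent to $b_{uv}b_{uv}^\top \preceq \epsilon L_H$, which by the variational characterization of effective resistance reduces to $R_{u,v}^H \le \epsilon$. To establish this upper bound I would exhibit an explicit unit $u$-to-$v$ flow in $H$ and apply Thomson's principle. Concretely, send $1/k$ along each of the $k$ edges from $u$ into $A_1$; for each $i \in \{1,\ldots,s-3\}$ spread the flow uniformly across the $k^2$ edges between $A_i$ and $A_{i+1}$ (each such edge carries $1/k^2$); and finally concentrate the flow on the $k$ edges from $A_{s-2}$ to $v$ (each carrying $1/k$). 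A direct check shows this flow is conservative at every internal vertex, and its total energy is $2/k + (s-3)/k^2 \le 2/k + s/k^2$, which by the parameter choice is at most $\epsilon$ for all sufficiently large $n$. Hence $R_{u,v}^H \le \epsilon$ and $H$ is a valid $(1\pm\epsilon)$-spectral sparsifier of $G$.

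Finally, since $(u,v) \in G$ we have $d_G(u,v) = 1$, whereas any $u$-$v$ path in $\widehat H = H$ must cross each of the $s-1$ layer boundaries (edges of $H$ only connect consecutive layers), so $d_{\widehat H}(u,v) \ge s - 1 = \Omega(n^{2/3})$. Consequently the stretch of $\widehat H$ with respect to $G$ is at least $s - 1 = \Omega(n^{2/3})$, which is in particular $\widetilde\Omega(n^{2/3})$ as required (in fact with no logarithmic loss). The main obstacle is the spectral-sparsifier verification, which boils down to the single effective-resistance inequality $R_{u,v}^H \le \epsilon$; the explicit unit flow above handles it cleanly. Note that a pure contraction / Rayleigh-monotonicity argument would only yield a \emph{lower} bound on $R_{u,v}^H$ (contracting layers can only decrease effective resistance), which is the wrong direction here, so the dual flow viewpoint via Thomson's principle is essential. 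Everything else -- choosing constants so that $sk \le n$ and $s/k^2 \le \epsilon$, and the layer-hop lower bound on $d_{\widehat H}(u,v)$ -- is routine.
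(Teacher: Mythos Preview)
Your proof is correct and uses essentially the same layered construction as the paper (complete bipartite graphs between $\Theta(n^{2/3})$ consecutive layers of size $\Theta(n^{1/3})$, plus one shortcut edge $(u,v)$). The difference is in how the sparsifier is produced and verified.

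The paper does not take $H=G\setminus\{(u,v)\}$ directly. Instead it computes $R_{uv}^G$ exactly via the symmetric unit flow, chooses layer sizes $cn^{1/3}$ with $c=\log n$ so that $R_{uv}^G=O(1/\log^3 n)$, and then invokes Spielman--Srivastava effective-resistance sampling: since the sampling probability $p_{(u,v)}=\epsilon^{-2}R_{uv}^G\log n=o(1)$, with constant probability the random sparsifier omits $(u,v)$, yielding stretch $\widetilde\Omega(n^{2/3})$ (the $\log$ in $c$ is what costs the tilde).

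Your approach is more elementary and slightly sharper: you set $H=G\setminus\{(u,v)\}$ with unit weights, reduce the spectral-sparsifier condition $L_G\preceq(1+\epsilon)L_H$ to the single inequality $R_{uv}^H\le\epsilon$, and certify that by an explicit unit flow and Thomson's principle. This is deterministic, avoids the black-box appeal to \cite{DBLP:conf/stoc/SpielmanS08}, and gives $\Omega(n^{2/3})$ stretch with no logarithmic loss. The paper's route, on the other hand, ties the example directly to the standard sparsification-by-sampling paradigm (so it also shows that \emph{typical} sparsifiers produced by the canonical algorithm exhibit the bad stretch, not just one handpicked $H$). Your remark that Rayleigh monotonicity goes the wrong way here and that Thomson's principle is the right tool is spot on.
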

\begin{proof}
	As was shown by Spielman and Srivastava \cite{DBLP:conf/stoc/SpielmanS08}, one can create a sparsifier $H$ of $G$ (with high probability) by adding each edge $e$ of $G$ to $H$ with probability $p_e=\min\{\epsilon^{-2}\cdot R_e^G \cdot \log n,1\}$ (and weight $1/p_e$).
	This approach is known as \emph{spectral sparsification using effective resistance sampling}. We will construct a graph $G$ and argue that for a random graph $H$ sampled according to the scheme above \cite{DBLP:conf/stoc/SpielmanS08}, the stretch of $\widehat{H}$ will (likely) be $\wt{\Omega}(n^{2/3})$.
	
	For brevity, we will construct a graph with $n+2$ vertices and ignore rounding issues. The graph $G=(V,E)$ is constructed as follows. Let $N:=\frac{n^{\nicefrac23}}{c}$ for $c:=\log n$. We partition the set of vertices, $V$, into $V_0,V_1,\ldots,V_N,V_{N+1}$, where for each $i\in[1,N]$, we have $|V_i| = a=c n^{\nicefrac{1}{3}}$, and $V_0=\{u\}$, $V_{N+1}=\{v\}$ are singletons. For every $i\in[0,N]$, we connect all vertices in $V_i$ to all vertices in $V_{i+1}$, and furthermore, we connect $u$ and $v$ by an edge called $e$. That is,
	\begin{align*}
	E=\left(\cup_{i=0}^{N} V_i \times V_{i+1}\right) \cup \{(u,v)\}.
	\end{align*}
	\begin{figure}[t]
		\centering{\includegraphics[scale=.65]{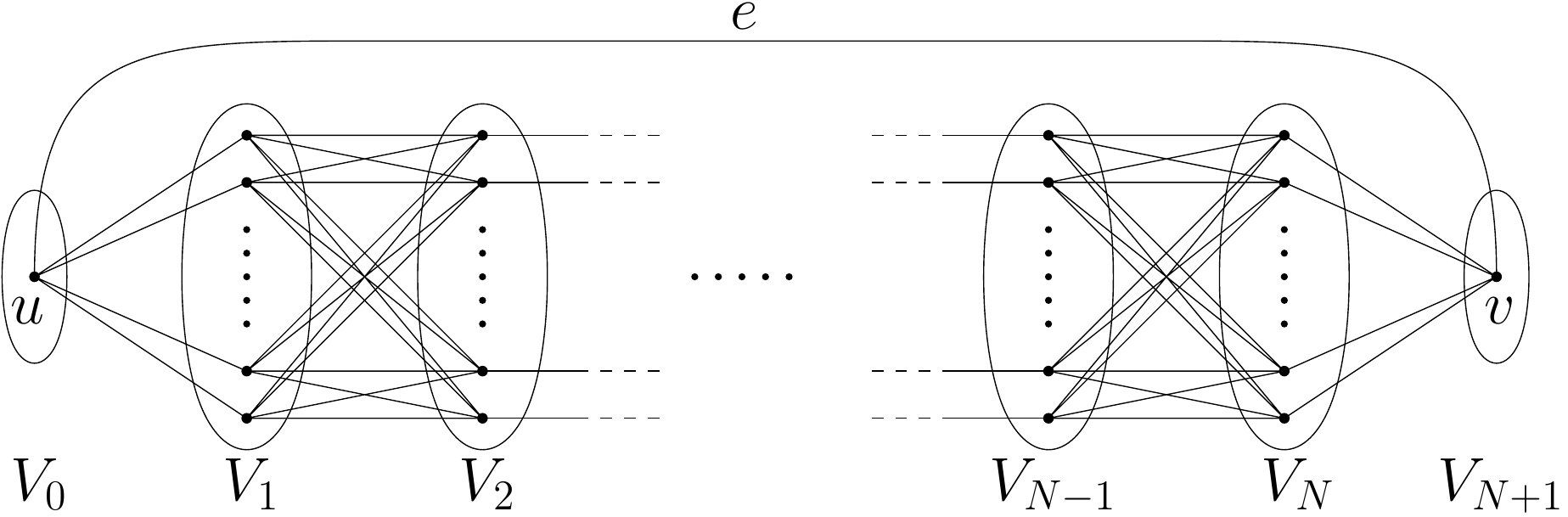}}
		\caption{\label{fig:LB}\small An illustration of the graph $G$ constructed during the proof of \Cref{lem:tightnessn}.}
	\end{figure}	 
	See \Cref{fig:LB} for illustration.
	Next, we calculate $R_e^{G}$, by observing the flow vector when one units of flow is injected in $v$ and is removed from $u$. Denote $R:=R_{e}^{G}$. Then $R$ units of flow is routed using edge $e$, while $(1-R)$ units of flow is routed using the rest of the graph.
	By symmetry, for each cut $V_i\times V_{i+1}$ the flow will spread equally among the edges. Farther, the potential of all the vertices in each set $V_i$ is equal. Denote by $P_i$ the potential of vertices in $V_i$. Thus $0=P_0<P_1<\dots<P_{N+1}=R$.
	For $i=0$, each edge in $V_0\times V_{1}$ carries $\frac{(1-R)}{a}$ flow, thus $P_1-P_0=\frac{(1-R)}{a}$. Similarly, $P_{N+1}-P_N=\frac{(1-R)}{a}$. 
	On the other hand, for $i\in[1,N-1]$, each edge in $V_{i+1}\times V_{i}$ carries $\frac{(1-R)}{a^2}$ flow, thus $P_{i+1}-P_i=\frac{(1-R)}{a^2}$.
	We conclude
	\[
	R=P_{N+1}-P_{0}=\sum_{i=0}^{N}(P_{i+1}-P_{i})=2\cdot\frac{(1-R)}{a}+\frac{(1-R)}{a^{2}}\cdot(N-1)=(1-R)\cdot\frac{2a+|N|-1}{a^{2}}
	\]
	Thus,
	\[
	R_{e}^{G}=R=\frac{2a+|N|-1}{a^{2}-2a-|N|+1}=\frac{2cn^{\nicefrac{1}{3}}+\frac{n^{\nicefrac{2}{3}}}{c}-1}{c^{2}n^{\nicefrac{2}{3}}-2cn^{\nicefrac{1}{3}}-\frac{n^{\nicefrac{2}{3}}}{c}+1}=\frac{1}{c^{3}}(1+o(1))=O(\frac{1}{\log^3 n})~.
	\]
	Note that it is thus most likely that $e$ will not belong to $H$ (for large enough $n$).  For a sampled graph $H$ excluding $e$, we will have $d_{\widehat{H}}(u,v)\ge |N|=\wt{\Omega}(n^{2/3})$.
	From the other hand, as a graph $H$ sampled in this manner is a spectral sparsifier with high probability, it implies the existence of a spectral sparsifier $H$ of $G$ with stretch $\wt{\Omega}(n^{2/3})$, as required.
\end{proof}

\begin{lemma}[Tightness of \Cref{thm:SpectralSparsAreSpannersEdges}]	
	For every large enough $m$, there exists an unweighted graph $G$ with $m$ edges, and a spectral sparsifier $H$ of $G$ such that $\widehat{H}$ has stretch $\wt{\Omega}(\sqrt{m})$ w.r.t. $G$.
\end{lemma}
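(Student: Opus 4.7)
The plan is to mirror the construction of \Cref{lem:tightnessn}, but to retune the layer width and count so that the total edge count is $\Theta(m)$ while the effective resistance of the ``shortcut'' edge remains small enough for it to be omitted by an effective-resistance sparsifier. Specifically, I will build a graph on roughly $m^{3/4}/\polylog(m)$ vertices consisting of $N$ layers $V_1,\dots,V_N$ of size $a$ each, two pole vertices $u,v$ as singleton layers $V_0,V_{N+1}$, a complete bipartite graph between every consecutive pair $V_i\times V_{i+1}$, and the single extra edge $e=(u,v)$. This gives $|E|=2a+(N-1)a^2+1 = \Theta(N a^2)$.

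Next, because the layered construction is symmetric, the exact same potential-flow computation carried out in the proof of \Cref{lem:tightnessn} applies verbatim and yields
\[
R_e^G \;=\; \frac{2a+N-1}{a^2-2a-N+1}\;=\;\Theta\!\left(\frac{N}{a^2}\right),
\]
whenever $a^2 \gg N$. I will then choose parameters so that (i) $Na^2=\Theta(m)$ and (ii) $R_e^G\cdot\log n$ is bounded by a small constant. Setting $N := \sqrt{m/(c^2\log m)}$ and $a := c\sqrt{N\log m}$ for a sufficiently large constant $c$ achieves both: one checks that $Na^2 = c^2 N^2\log m = m$ and $R_e^G = \Theta(1/(c^2\log m))$, so the Spielman--Srivastava sampling probability $p_e=\min\{\epsilon^{-2}R_e^G\log n,\,1\}$ is at most a small constant $\rho<1$.

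Finally, I will invoke the Spielman--Srivastava effective-resistance sampling scheme \cite{DBLP:conf/stoc/SpielmanS08}: it outputs a $(1\pm\epsilon)$-spectral sparsifier $H$ of $G$ with probability $1-1/\poly(n)$, and independently omits $e$ with probability $1-p_e\ge 1-\rho$. By a union bound the two events co-occur with positive probability, so there exists a realization $H$ which is a valid spectral sparsifier and does not contain $e$. In any such $H$, the removal of the shortcut forces every $u$--$v$ path in $\widehat{H}$ to traverse all $N+1$ bipartite layers, so
\[
d_{\widehat{H}}(u,v)\;\ge\;N+1\;=\;\Omega\!\left(\sqrt{m/\log m}\right)\;=\;\wt{\Omega}(\sqrt{m}),
\]
which is the claimed stretch, since $d_G(u,v)=1$ via the edge $e$.

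The only mild obstacle is the parameter bookkeeping: one must simultaneously guarantee that the quadratic expression for $R_e^G$ stays well approximated by $N/a^2$ (i.e., $a^2$ dominates the lower-order terms), that $c$ is chosen large enough to make $p_e$ bounded away from $1$, and that $n:=Na+2$ is large enough so that the $\log n$ and $\log m$ factors coincide up to constants. All three constraints are satisfied by the choice above for every sufficiently large $m$, concluding the proof.
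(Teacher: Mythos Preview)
Your proof is correct and uses essentially the same approach as the paper: the layered bipartite construction with a shortcut edge $e=(u,v)$, analyzed via Spielman--Srivastava effective-resistance sampling. The paper's version is a bit more economical in that it invokes \Cref{lem:tightnessn} as a black box---it simply sets $n=(m/(2\log m))^{3/4}$, observes that the $n$-vertex instance from that lemma has at most $2cn^{4/3}<m$ edges, pads with an isolated component to reach exactly $m$ edges, and reads off the $\wt{\Omega}(n^{2/3})=\wt{\Omega}(\sqrt{m})$ stretch. Your retuned parameters $N\approx\sqrt{m/\log m}$, $a\approx\sqrt{N\log m}$ yield the same vertex count $n=\Theta(m^{3/4}/\polylog m)$, so the two arguments coincide up to presentation. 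One minor bookkeeping point: your graph has $\Theta(m)$ edges rather than exactly $m$ (integer rounding of $N,a$ and the $2a+1$ lower-order terms prevent exact equality); to match the statement literally you should pad with an isolated component, as the paper does.
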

\begin{proof}
	Fix $n=(\frac{m}{2\log m})^{\nicefrac{3}{4}}$. Note that the graph we constructed during the proof of \Cref{lem:tightnessn} has $2a+(N-1)a^{2}=2cn^{\nicefrac{1}{3}}+(\frac{n^{\nicefrac{2}{3}}}{c}-1)\cdot c^{2}n^{\nicefrac{2}{3}}<2c\cdot n^{\nicefrac{4}{3}}<m$
	edges.
	We can complement it to exactly $m$ edges by adding some isolated component.
	Following \Cref{lem:tightnessn}, this graph has a sparsifier $H$, such that $\widehat{H}$ has stretch $\wt{\Omega}(n^{2/3})=\wt{\Omega}(\sqrt{m})$ w.r.t. $G$, as required.
	
\end{proof}

\begin{remark}\label{rem:CutSparsifier}
	Cut sparsifiers are somewhat weaker version of spectral sparsifiers. Specifically, a weighted subgraph $H$ of $G$ is called a cut sparsifier if it preserves the size of all cuts (up to $1\pm\eps$ factor).
	A natural question is the following: given a cut sparsifier $H$ of $G$, how good of a spanner is $\widehat{H}$?\\
	The answer is: very bad. Specifically, consider the hard instance constructed during the proof of \Cref{lem:tightnessn}. Construct the same graph $G$ where we change the parameter $N$ to equal $\Theta(\frac{n}{\log n})$ and $a$ to $\Theta(\log n)$. 
	There exist a cut sparsifier $H$ of $G$ excluding the edge $e=(u,v)$.
	In particular, $\widehat{H}$ will have stretch $\widetilde{\Omega}(n)$.
\end{remark}

\subsection{Stretch-Space trade-off}\label{sec:streamingtradeoff}
In this section, we first prove a result, which given an algorithm that uses $\wt{O}(n)$ space in the dynamic streaming setting, converts it to an algorithm that uses $\wt{O}(n^{1+\alpha})$ space and achieves a better stretch guarantee (see \Cref{thm:meta-space-stretch}). Then, we apply this theorem to \Cref{cor:SpannerBySparsifiers} and get a space-stretch trade off. Next, in \Cref{thm:TradeOffOnePassEdges} we prove a similar trade off in terms of number of edges. 
\begin{theorem}\label{thm:meta-space-stretch}
	 Assume there is an algorithm, called \textsc{Alg}, that given a graph $G=(V,E)$ in a dynamic stream, with $|V|=n$, using $\wt{O}(n)$ space, outputs a spanner with stretch $\wt{O}(n^{\beta})$ for some constant $\beta \in (0,1)$ with failure probability $n^{-c}$ for some constant $c$. Then, for any constant $\alpha\in(0,1)$, one can construct an algorithm that uses $\wt{O}(n^{1+\alpha})$ space and outputs a spanner with stretch $\wt{O}(n^{\beta(1-\alpha)})$ with failure probability  $\tilde{O}(n^{(2+c)\alpha-c})$.
\end{theorem}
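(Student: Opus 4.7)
The plan is to simulate \textsc{Alg} in parallel on many small induced subgraphs chosen so that every edge of $G$ is captured by at least one of them. First, I would use the algorithm's randomness to sample a family of $r = \Theta(n^{2\alpha} \log n)$ random subsets $V_1, \ldots, V_r \subseteq V$, where each $V_i$ is obtained by including every vertex independently with probability $p = n^{-\alpha}$. By a Chernoff bound, $|V_i| = \wt{O}(n^{1-\alpha})$ for every $i$ with high probability. For any fixed pair $u,v$, $\Pr[\{u,v\} \subseteq V_i] = p^2 = n^{-2\alpha}$, so the probability that no $V_i$ contains both is at most $(1 - n^{-2\alpha})^r = n^{-\omega(1)}$. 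A union bound over all $\binom{n}{2}$ pairs gives that with high probability, for every $u,v\in V$ (in particular for every edge $(u,v)\in E$), some $V_i$ contains both endpoints.

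Second, I would run \textsc{Alg} in parallel on the $r$ induced subgraphs $G[V_1], \ldots, G[V_r]$. In the dynamic stream, each incoming update to an edge $(u,v)$ is dispatched only to those instances whose vertex set contains both $u$ and $v$; this filtering produces exactly the dynamic stream of $G[V_i]$ for each $i$. Since \textsc{Alg} invoked on an $n^{1-\alpha}$-vertex graph uses $\wt{O}(n^{1-\alpha})$ space, the total space is $\wt{O}(n^{2\alpha})\cdot\wt{O}(n^{1-\alpha}) = \wt{O}(n^{1+\alpha})$, matching the claimed budget. I would then output the union $H := \bigcup_i H_i$, where $H_i$ is the spanner that \textsc{Alg} returns on $G[V_i]$.

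For the stretch analysis, it suffices (by the triangle inequality) to bound $d_H(u,v)$ for every edge $(u,v)\in E$. Each such edge lies in some $G[V_i]$ by the covering property, so $d_{G[V_i]}(u,v) = 1$ and the spanner guarantee for $H_i$ gives $d_H(u,v) \le d_{H_i}(u,v) \le \wt{O}((n^{1-\alpha})^{\beta}) = \wt{O}(n^{\beta(1-\alpha)})$. For the failure probability, each invocation of \textsc{Alg} on $n^{1-\alpha}$ vertices fails with probability at most $(n^{1-\alpha})^{-c}$. Union-bounding over the $\wt{O}(n^{2\alpha})$ instances (and absorbing the negligible failure probability of the covering step) yields total failure probability $\wt{O}(n^{2\alpha}\cdot n^{-c(1-\alpha)}) = \wt{O}(n^{(2+c)\alpha - c})$, as claimed.

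The main delicacy, and the one place where the parameters are tight, is the balancing act in choosing $r$: it must be large enough for every pair (hence every edge) to be covered, yet small enough for the total space $r\cdot \wt{O}(n^{1-\alpha})$ to stay at $\wt{O}(n^{1+\alpha})$. The choice $r=\Theta(n^{2\alpha}\log n)$ is essentially forced by the pair-covering argument and fits exactly within the space budget. Beyond this, the construction is modular: it treats \textsc{Alg} as a black box and relies only on the facts that the induced subgraphs can be streamed in parallel, and that the stretch guarantee of \textsc{Alg} scales polynomially in the number of vertices given to it.
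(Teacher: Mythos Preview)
Your proof is correct and is essentially identical to the paper's: the paper also builds a family of $O(n^{2\alpha}\log n)$ subsets of size $O(n^{1-\alpha})$ covering every vertex pair (remarking that such a family ``can be constructed by a random sampling''), runs \textsc{Alg} on each induced subgraph, and takes the union. Your write-up simply makes the random-sampling construction explicit, which the paper leaves as a one-line remark.
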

\begin{proof}
	Let $\mathcal{P}\subset 2^{[n]}$ be a set of subsets of $[n]$ such that: (1) $|\mathcal{P}|=O(n^{2\alpha}\log n)$, (2) every $P\in\mathcal{P}$ is of size $|P|=O(n^{1-\alpha})$, and (3) for every $i,j\in[n]$ there is a set $P\in\mathcal{P}$ containing both $i,j$. Such a collection $\mathcal{P}$ can be constructed by a random sampling.	
	Denote $V=\{v_1,\dots,v_n\}$.
	For each $P\in\mathcal{P}$, set $A_P=\{v_i\mid i\in P\}$. 
	For each $P\in\mathcal{P}$, we use \textsc{Alg} independently to construct a spanner $H_P$ for $G[A_P]$ the induced graph on $A_P$. The final spanner will be their union $H=\cup_{P\in\mathcal{P}} H_P$.
	
	The space (and also the number of edges in $H$) used by our algorithm is bounded by
	$\sum_{P\in\mathcal{P}}\wt{O}(|P|)=\wt{O}(n^{2\alpha}\cdot n^{1-\alpha})=\wt{O}(n^{1+\alpha})$.
	From the other hand, for every $v_i,v_j\in V$ such that $i,j\in P$, it holds that 
	\[
	d_{H}(v_{i},v_{j})\le d_{H_{P}}(v_{i},v_{j})\le\wt{O}(|P|^{\beta})\le\wt{O}(n^{\beta(1-\alpha)})~.
	\]

	By union bound, the failure probability is bounded by 
	$\tilde{O}(n^{2\alpha})\cdot O(n^{-c(1-\alpha)})=\tilde{O}(n^{(2+c)\alpha-c})$.
	
\end{proof}

Combining  \Cref{cor:SpannerBySparsifiers} with  \Cref{thm:meta-space-stretch}, we conclude:
\thmOnePassTradeoff*
\begin{remark}
	We can reduce the number of edges in the spanner returned to $O(n)$, by incurring additional $O(\log n)$ factor to the stretch. This is done by computing additional spanner upon the one returned by \Cref{thm:OnePassTradeoff}.
\end{remark}

	Following the approach in \Cref{thm:meta-space-stretch}, we can also use more space to reduce the stretch parameterized by the number of edges.
	Note that the \Cref{thm:TradeOffOnePassEdges}  provides better result than \Cref{thm:OnePassTradeoff} when $m\le n^{\frac43+\frac23\alpha}$.
\begin{theorem}\label{thm:TradeOffOnePassEdges}
	Consider an $n$-vertex unweighted graph $G$, the edges of which arrive in a dynamic stream. 
	For every parameter $\alpha\in(0,1)$, there is an algorithm using $\wt{O}(n^{1+\alpha})$ space, constructs a spanner with stretch $\wt{O}(\sqrt{m}\cdot n^{-\alpha})$.
\end{theorem}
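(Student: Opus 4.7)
The plan is to adapt the meta-reduction behind Theorem~\ref{thm:meta-space-stretch} to the edge-parameterized base guarantee provided by Theorem~\ref{thm:SpectralSparsAreSpannersEdges}, rather than to the vertex-parameterized one of Corollary~\ref{cor:SpannerBySparsifiers}. As in the proof of Theorem~\ref{thm:meta-space-stretch}, I would construct a family $\mathcal{P}\subset 2^{[n]}$ of $\tilde{O}(n^{2\alpha})$ subsets, each of size $O(n^{1-\alpha})$, such that every pair $\{i,j\}$ is contained in at least one $P\in\mathcal{P}$. In parallel over the stream, I maintain for each $P$ the $\tilde{O}(|A_P|)$-space spectral-sparsifier sketch of \cite{DBLP:conf/focs/KapralovLMMS14} applied to $G[A_P]$ (the induced subgraph on $A_P=\{v_i\mid i\in P\}$), since a sketch for the whole graph can be restricted to an induced subgraph by sketching only the relevant rows of the incidence matrix. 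At the end of the stream, each sketch is unwrapped into a $(1\pm\tfrac{1}{18})$-spectral sparsifier $H_P$ of $G[A_P]$, which is then unweighted to $\widehat H_P$, and we output $H=\bigcup_{P\in\mathcal{P}}\widehat H_P$. Because every edge $(u,v)\in E$ lies in some $A_P$, it suffices to bound the stretch of $\widehat H_P$ on $G[A_P]$, and by Theorem~\ref{thm:SpectralSparsAreSpannersEdges} this is $\tilde{O}(\sqrt{m_P})$ where $m_P:=|E(G[A_P])|$.

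\textbf{Construction of $\mathcal{P}$ and edge-count bound.} The key new step, compared to the proof of Theorem~\ref{thm:meta-space-stretch}, is to control $m_P$ for all $P$ simultaneously. I would construct the candidates for $\mathcal{P}$ by sampling each vertex independently with probability $p=\Theta(n^{-\alpha})$ into each of $\Theta(n^{2\alpha}\log^{2} n)$ candidate sets (conditioning or truncating so that every set has size $O(n^{1-\alpha})$ with high probability). For a fixed $P$, $\E[m_P]\le m\cdot p^{2}=O(m\cdot n^{-2\alpha})$; moreover $m_P$ is a degree-$2$ polynomial of the independent vertex-indicator Bernoullis, so McDiarmid's inequality with per-vertex bounded differences $O(n^{1-\alpha})$ (or Kim--Vu in the small-expectation regime) yields
\[
m_P \;\le\; \tilde{O}\!\left(m\cdot n^{-2\alpha}\right)
\]
with probability $1-n^{-\Omega(1)}$. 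A union bound over the $\tilde{O}(n^{2\alpha})$ candidate sets, together with the pair-covering property (which holds w.h.p.\ because each pair is missed with probability at most $(1-p^{2})^{\Theta(n^{2\alpha}\log^{2} n)}\le n^{-\Omega(\log n)}$), gives both guarantees simultaneously w.h.p. The collection $\mathcal{P}$ is then just these candidate sets.

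\textbf{Putting it together.} Plugging the edge-count bound into Theorem~\ref{thm:SpectralSparsAreSpannersEdges}, for every $P\in\mathcal{P}$ and every $(u,v)\in E(G[A_P])$ we get
\[
d_{\widehat H_P}(u,v) \;=\; \tilde{O}\!\left(\sqrt{m_P}\right) \;=\; \tilde{O}\!\left(\sqrt{m}\cdot n^{-\alpha}\right),
\]
and the same bound holds in $H\supseteq\widehat H_P$. The total space is
\[
\sum_{P\in\mathcal{P}} \tilde{O}(|A_P|) \;=\; \tilde{O}(n^{2\alpha}\cdot n^{1-\alpha}) \;=\; \tilde{O}(n^{1+\alpha}),
\]
matching the bound claimed in the theorem.

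\textbf{Main obstacle.} The one non-routine step is the high-probability bound on $m_P$: the edge count is a quadratic function of independent vertex indicators, so standard Chernoff does not apply directly. I expect McDiarmid's inequality to suffice whenever $m\cdot n^{-2\alpha}=\Omega(\mathrm{polylog}\, n)$, while in the remaining regime the trivial bound $m_P\le \binom{|A_P|}{2}=O(n^{2(1-\alpha)})$ already yields $\sqrt{m_P}=\tilde{O}(\sqrt{m}\cdot n^{-\alpha})$ whenever $m\ge n^{2(1-\alpha)}/\mathrm{polylog}\,n$, and otherwise $m_P\le m$ deterministically and $\sqrt{m}\le \sqrt{m}\cdot n^{-\alpha}\cdot n^{\alpha}$ is absorbed by a union-bound and a second unweighting step. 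Apart from this technicality, the argument parallels the proofs of Theorems~\ref{thm:meta-space-stretch} and~\ref{thm:OnePassTradeoff} line for line.
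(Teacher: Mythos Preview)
Your high-level construction (sample $\tilde{O}(n^{2\alpha})$ random vertex subsets of size $O(n^{1-\alpha})$, sparsify each induced subgraph, take the union) matches the paper's exactly. The gap is in the analysis of $m_P$, and it is a real one.

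First, the McDiarmid step does not give what you claim. The bounded difference for flipping the indicator of vertex $u$ is $\deg_G(u)$, not $O(n^{1-\alpha})$; conditioning on $|A_P|=O(n^{1-\alpha})$ destroys the independence McDiarmid needs. Even granting a Kim--Vu style bound, the concentration you need only kicks in once $m\gtrsim n^{1+\alpha}$, not once $m\cdot n^{-2\alpha}$ is polylogarithmic. Second, both of your fallback arguments are wrong. The ``trivial bound'' $m_P\le\binom{|A_P|}{2}=O(n^{2(1-\alpha)})$ gives $\sqrt{m_P}\le n^{1-\alpha}$, and $n^{1-\alpha}\le\tilde O(\sqrt{m}\,n^{-\alpha})$ would require $m\ge\tilde\Omega(n^{2})$, not $m\ge n^{2(1-\alpha)}$. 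The final sentence (``$\sqrt{m}\le \sqrt{m}\cdot n^{-\alpha}\cdot n^{\alpha}$ is absorbed\ldots'') is a tautology that absorbs nothing.

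The paper sidesteps both difficulties. It runs, in parallel with the random subsets, a global $\tilde O(n^{1+\alpha})$-sparse recovery sketch; whenever $m\le n^{1+\alpha}$ this recovers the entire graph, giving stretch $1$. Under the remaining assumption $m>n^{1+\alpha}$ it does \emph{not} try to show $m_P$ is small for every $P$. Instead, it fixes an edge $(u,v)$, notes that $\mathbb{E}[m_P\mid u,v\in A_P]\le 1+2np+mp^{2}\le 4mp^{2}$ (the inequality uses $m>n/p$), applies Markov to get $\Pr[u,v\in A_P\text{ and }m_P\le 8mp^{2}]\ge p^{2}/2$, and then uses independence across the $\tilde O(p^{-2})$ subsets to find one good $P$ for that edge with high probability. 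A union bound over edges finishes. This per-edge Markov argument is both simpler and actually correct; your per-subset concentration route needs the same sparse-recovery patch for the small-$m$ regime and a much more careful inequality for the large-$m$ regime.
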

\begin{proof}
	Similarly to \Cref{thm:meta-space-stretch}, our goal here is to partition the vertices into $\approx n^{2\alpha}$ sets of similar size. However, while in \Cref{thm:meta-space-stretch} we wanted to bound the number of vertices in each set, here we want to bound the edges in each set. As the edge set is unknown, we cannot use a fixed partition. Rather, in the preprocessing phase we will sample a partition that  w.h.p. will be good w.r.t. arbitrary fixed edge set.
	
	Fix $p=n^{-\alpha}$. With no regard to the rest of the algorithm, during the stream we will sample $\widetilde{O}(n^{1+\alpha})=\widetilde{O}(\frac np)$ edges from the stream using sparse recovery (\Cref{lem:RecoverSingleEdge}), and add them to our spanner $\widehat{H}$. If $m\le np^{-1}$, we will restore the entire graph $G$, and thus will have stretch $1$. The rest of the analysis will be under the assumption that $m>np^{-1}$.
	
	For every $i\in [1, \frac{8}{p^{2}}\ln n]$, sample a subset $A_i$ by adding each vertex with probability $p$.
	Consider a single subset $A_i$ sampled in this manner, and denote $G_i=G[A_i]$ the graph it induces. We will compute a sparsifier $H_i$ for $G_i$. Our final spanner will be $\widehat{H}=\cup_i \widehat{H}_i$ a union of the unweighted versions of all the sparsifiers (in addition to the random edges sampled above). The space we used for the algorithm is $\sum_{i}\widetilde{O}(|A_{i}|)$. 
	Note that with high probability, by Chernoff inequality $\sum_{i}\widetilde{O}(|A_{i}|)=\widetilde{O}(n^{2\alpha}\cdot n^{1-\alpha})=\widetilde{O}(n^{1+\alpha})$.
	
	Next we bound the stretch. Consider a pair of vertices $(u,v)\in E$. 
	Denote by $\psi_i$ the event that both $u,v$ belong to $A_i$. Note that $\Pr[\psi_i]=p^2$. Denote by $m_i=\left|{A_{i} \choose 2}\cap E\right|$ the number of edges in $G_i$.
	Set 
	\[
	\mu_{i}=\mathbb{E}\left[m_i\mid\psi_i\right]\le1+p\cdot\left(\deg_{G}(v)+\deg_{G}(u)\right)+mp^{2}<1+2np+mp^{2}<4mp^{2}~,
	\]
	to be the expected number of edges in $G_i$ provided that $u,v\in A$. The first inequality follows as (1) $(u,v)\in G_i$, (2) every edge incident on $u,v$ belongs to $G_i$ with probability $p$, and (3) every other edge belongs to $G_i$ with probability $p^2$.
	In the final inequality we used the assumption $n<mp$.
	Denote by $\phi_i$ the event that $m_i\le8mp^{2}$. By Markov we have 
	\[
	\Pr\left[\psi_{i}\wedge\phi_{i}\right]=\Pr\left[\psi_{i}\right]\cdot\Pr\left[\phi_{i}\mid\psi_{i}\right]\ge\frac{1}{2}p^{2}~.
	\]
	As $\{\psi_{i}\wedge\phi_{i}\}_i$ are independent, we have that the probability that none of them occur is bounded by
	\[
	\Pr\left[\bigwedge_{i}\left(\overline{\psi_{i}\wedge\phi_{i}}\right)\right]\le(1-\frac{1}{2}p^{2})^{\frac{8}{p^{2}}\ln n}<e^{-\frac{1}{2}p^{2}\cdot\frac{8}{p^{2}}\ln n}=n^{-4}~.
	\]	
	Note that if both $\psi_{i},\phi_{i}$ occurred, and $H_i$ is an $1\pm\eps$ sparsifier of $G_i$, by \Cref{thm:SpectralSparsAreSpanners} we will have that 
	\[
	d_{\widehat{H}}(u,v)\le d_{\widehat{H}_{i}}(u,v)\le\wt{O}(\sqrt{m_{i}})=\wt{O}(\sqrt{m}\cdot p)=\wt{O}(\sqrt{m}\cdot n^{-\alpha})
	\]
	By union bound, the probability that for every $(u,v)\in E$, there is some $i$ such that $\psi_{i}\wedge\phi_{i}$ occurred is at least $1-n^{-2}$. The probability that every $G_i$ is a spectral sparsifier is $1-n^{-\Omega(1)}$. The theorem follows by union bound.

\end{proof}

\section{Simultaneous Communication Model}\label{sec:CongestedMultiplayerModel}	
	In \Cref{sec:stream}, we considered streaming model and proved results for the setting when one pass over the stream was allowed. The remaining question is as follows: using small number of communication rounds (but more than $1$), can we improve the stretch of a spanner constructed in the simultaneous communication model?
	A partial answer is given in the following subsections. 
	
	First, in \Cref{sec:filtering} we present a single filtering algorithm that provides two different trade-offs between stretch and number of communication rounds (see \Cref{alg:filttering} and \Cref{thm:filtering}).
	Basically, the algorithm receives a parameter $t>1$, in each communication round, an unweighted version of a sparsifier is added to the spanner. Then, locally in each vertex, all the edges that already have a small stretch in the current spanner are deleted (stop being considered), and another round of communication begins.
	
	In \Cref{thm:filtering} we present two arguments. The first argument is based on effective resistance filtering, which results in a spanner with $\tilde{O}(n^{\frac{g+1}{2g+1}})$ stretch in $g$ communication rounds. The second argument, which is based on low-diameter decomposition, results in a spanner with $\tilde{O}\left(n^{\frac{2}{g}}\right)$ stretch in $g$ communication rounds. The latter approach outputs a spanner with smaller stretch compared to the former algorithm for $g\ge 4$.

	Finally, in \Cref{subsec:SimModelTradeoff} we generalize our results to the case where each player is allowed $\tilde{O}(n^\alpha)$ communication per round for some $\alpha\in(0,1)$. In that section, we prove two results: (1) in \Cref{thm:SCmodel-1round} we give a space (communication per player) stretch trade off for one round of communication (2) in \Cref{thm:SCmodel} we give a similar trade off for more than one round of communication.

\subsection{The filtering algorithm}\label{sec:filtering}
	The algorithm will receive a stretch parameter $t$. During the execution of the algorithm, we will hold in each step a spanner $\widehat{H}$, and a subset of unsatisfied edges.
	As the algorithm proceeds, the spanner will grow, while the number of unsatisfied edges will decrease.
	Initially, we start with an empty spanner $\widehat{H}$, and the set of  unsatisfied edges $E_0=E$ is the entire edge set.
	In general, at round $i$, we hold a set $E_i$ of edges yet unsatisfied. We construct a spectral sparsifier $H_i$ for the graph $G_i=(V,E_i)$ over thus edges. $\widehat{H}_i$, the unweighted version of $H_i$ is added to the spanner $\widehat{H}$. $E_{i+1}$ is defined to be all the edges $(u,v)\in E_i$, for which the distance in $\widehat{H}$ is greater than $t$, that is $d_{\widehat{H}}(u,v)>t$. Note that as the sparsifier $H_i$, and hence the spanner $\widehat{H}$ is known to all, each vertex locally can compute which of its edges belong to $E_{i+1}$. 
	
	In addition, the algorithm will receive as an input parameter $g$ to bound the number of communication rounds. 
	We denote by $E_{g+1}$ the set of unsatisfied edge by the end of the algorithm. That is edges from $(u,v)\in E$ for which $d_{\widehat{H}}(u,v)>t$.
	Note that during the execution of the algorithm,
	$E_{g+1}\subseteq E_g\subseteq E_{g-1}\subseteq\cdots\subseteq E_1=E$.	
	Finally, if $E_{g+1}=\emptyset$, it will directly imply that $\widehat{H}$ is a $t$-spanner of $G$.
	See \Cref{alg:filttering} for illustration.
	
\begin{algorithm}[t]
	\caption{\texttt{Spanners Using Filtering}$(G=(V,E),t,g)$}  
	\DontPrintSemicolon
	\SetKwInOut{Input}{input}\SetKwInOut{Output}{output}
	\Input{Graph $G=(V,E)$ (in simultaneous communication model), number of rounds $g$, stretch parameter $t$}	
	\Output{$t$-spanner of $G$ with $\tilde{O}(n\cdot g)$ edges}
	\BlankLine
	\label{alg:filttering}
	$\epsilon \gets \frac{1}{18}$\;
	$\widehat{H} \gets \emptyset$ \tcp*{$\widehat{H}$ will be the output spanner}
	
	\For{$i=1$ to $g$}{
		$E_{i}\gets \{e=(u,v)\in G_{i-1} \text{ such that } d_{\widehat{H}}(u,v)> t\}$\;
		$G_i\gets(V,E_i)$\;
		Let $H_i$ be a $(1\pm \epsilon)$-spectral sparsifier of graph $G_i$. \;
		$\widehat{H}\gets \widehat{H} \cup \widehat{H}_i$\tcp*{$\widehat{H}_i$ is the unweighted version of $H_i$}
	}
	\Return $\widehat{H}$
\end{algorithm}
Below, we state the theorem, which proves the round complexity and correctness of \Cref{alg:filttering}.
\thmfiltering*
\begin{proof}
For $g=1$, the theorem holds due to \Cref{thm:SpectralSparsAreSpanners}, thus we will assume that $g\ge2$.
We prove each of the two upper-bounds on stretch separately. We prove the first bound using an effective resistance based argument. The latter upper-bound is proven using an argument based on filtering low-diameter clusters. 
\paragraph{Effective resistance argument:}
	We execute  \Cref{alg:filttering} with parameter $g$, and $t=\tilde{O}(n^{\frac{g+1}{2g+1}})$.
	Consider an edge $e=(u,v)\in E_1$. If $e\in E_{2}$, then it follows from \Cref{cor:stretcheffres} that  $R_{u,v}^{H_{1}}=\tilde{\Omega}\left(\frac{t^{3}}{n^{2}}\right)$.
	Set  $a_1=\tilde{\Omega}\left(\frac{t^{3}}{n^{2}}\right)$. Then 
	\begin{align}
		|E_{2}|\le\frac{1}{a_{1}}\sum_{e\in E_{1}}R_{e}^{H_{1}}\le\frac{1+\eps}{a_{1}}\sum_{e\in E_{i}}R_{e}^{G_{1}}\le\frac{1+\eps}{a_{1}}\cdot(n-1)\le\tilde{\Omega}\left(\frac{n^{3}}{t^{3}}\right)~,\label{eq:resistanceFilt}
	\end{align}
	where the first inequality follows as  $a_{1} \le R_{e}^{H_{1}}$ for $e\in E_2$, the second inequality is by \Cref{fact:EffSparse}, and the third inequity follows by \Cref{fact:sumeff}, as $G_{i-1}$ is unweighted.
	In general, for $i\ge2$, we argue by induction that $|E_i|=\tilde{O}\left(\frac{n^{i+1}}{t^{2i-1}}\right)$. Indeed, consider an edge $e\in E_{i+1}$. Using the induction hypothesis, it follows from \Cref{cor:stretcheffresedge} that 
	\[
	R_{u,v}^{H_{i}}=\tilde{\Omega}\left(\frac{t^{2}}{|E_{i}|}\right)=\tilde{\Omega}\left(\frac{t^{2(i+1)-1}}{n^{i+1}}\right)
	\]
	Set $a_i=\tilde{\Omega}\left(\frac{t^{2(i+1)-1}}{n^{i+1}}\right)$. Using the same arguments as in \Cref{eq:resistanceFilt}, we get
	\[
	|E_{i+1}|\le\frac{1}{a_{i}}\sum_{e\in E_{i}}R_{e}^{H_{i}}\le\frac{1+\eps}{a_{i}}\sum_{e\in E_{i}}R_{e}^{G_{i}}\le\frac{1+\eps}{a_{i}}\cdot(n-1)\le\tilde{O}\left(\frac{n^{(i+1)+1}}{t^{2(i+1)-1}}\right)~.
	\]
	Finally, for every $e\in E_g$, following \Cref{thm:SpectralSparsAreSpannersEdges}, it holds that 
	\[
	d_{\widehat{H}}(u,v)\le d_{\widehat{H}_{g}}(u,v)\le\tilde{O}\left(\sqrt{E_{g}}\right)=\tilde{\Omega}\left(\sqrt{\frac{n^{g+1}}{t^{2g-1}}}\right)\le t~,
	\]
	where the last inequality holds for $t=\tilde{\Omega}(n^{\frac{g+1}{2g+1}})$.
	We conclude that $E_{g+1}=\emptyset$. The theorem follows.

\paragraph{Low diameter decomposition argument:}

Fix $\phi=\frac{1}{3}n^{-\nicefrac2g}$. We will execute \Cref{alg:filttering} with parameter $g$ and $t=\frac{4+o(1)}{\phi}\cdot\ln n$. We argue that for every $i\in[2,g+1]$, $|E_{i+1}|\le 3\phi |E_i|$. As $|E_1|<n^2$, it will follow that $E_{g+1}=\emptyset$, as required.

Consider the unweighted graph $G_i$, and the sparsifier $H_i$ we computed for it. We will cluster $G_i$ based on cut sizes in $H_i$.
The clustering procedure is iterative, where in phase $j$ we holds an induced subgraph $H_{i,j}$ of $H_i$, create a cluster $C_j$, remove it from the graph $H_{i,j}$ to obtain an induced subgraph $H_{i,j+1}$, and continue. The procedure stops once all the vertices are clustered.
Specifically, in phase $j$, we pick an arbitrary unclustered center vertex $v_j\in H_{i,j}$, and create a cluster by growing a ball around $v_j$.
Set $B_{r}=B_{\hat{H}_{i,j}}(v_j,r)$ to be the radius $r$ ball around $v_j$ in the unweighted version of $H_{i,j}$. That is $B_{r+1}=B_r\cup N(B_r)$, where $N(B_r)$ are the neighbors of $B_{r}$ in $H_{i,j}$.
Let $r_j$ be the minimal index $r$ such that 
\begin{align}
\partial_{H_{i,j}}(B_r) < \phi\cdot \mathrm{Vol}_{H_{i,j}}(B_r)~.\label{eq:sparseCut}
\end{align} 
\sloppy Here $\partial_{H_{i,j}}(B_r)$ denotes the total weight of the outgoing edges from $B_r$, while $\mathrm{Vol}_{\hat{H}_{i,j}}(B_r)=\sum_{u\in B_r}\deg_{\hat{H}_{i,j}}(u)$ denotes the sum of the weighted degrees of all the vertices in $B_r$.
Note that while $B_r$ is defined w.r.t. an unweighted graph $\widehat{H}_{i,j}$, $\partial_{H_{i,j}}$ and $\mathrm{Vol}_{H_{i,j}}$ are defined w.r.t. the weighted sparsifier.
For every $r$, it holds that $\mathrm{Vol}_{\hat{H}_{i,j}}(B_{r+1})\ge \mathrm{Vol}_{\hat{H}_{i,j}}(B_r)+\partial_{\hat{H}_{i,j}}(B_r)$.
We argue that $r_j\le 2(1+\epsilon)\cdot{n \choose 2}$.
If $v_j$ is isolated in $H_{i,j}$, then \Cref{eq:sparseCut} holds for $r=0$ and we are note. 
Else, as the minimal weight of an edge in a sparsifier is $1$,\footnote{Since we are producing spectral sparsifiers by effective resistance sampling method using corresponding sketches, each edge $e$ is reweighted by $\frac{1}{p_e}$ where $p_e$ is the probability that edge $e$ is sampled, and hence the weights are at least $1$.} it holds that  $\mathrm{Vol}_{H_{i,j}}(B_{0})=\deg_{H_{i,j}}(v_{j})\ge1$.
We conclude that for $r_j$, the minimal index for which \Cref{eq:sparseCut} holds, we  have that 
\[
\mathrm{Vol}_{H_{i,j}}(B_{r_{j}})\ge(1+\phi)\mathrm{Vol}_{H_{i,j}}(B_{r-1})\ge\cdots\ge(1+\phi)^{r_{j}}\mathrm{Vol}_{H_{i,j}}(B_{0})\ge(1+\phi)^{r_{j}}~,
\]
On the other hand, as $H_i$ is a $(1+\eps)$ spectral sparsifier of an unweighted graph $G_i$, we have
\[
\mathrm{Vol}_{H_{i,j}}(B_{r_j})\le\mathrm{Vol}_{H_{i,j}}(H_{i,j})\le2(1+\epsilon)\cdot|E|\le2(1+\epsilon)\cdot{n \choose 2}~.
\]
Therefore, it must holds that $(1+\phi)^{r_j} \le 2(1+\epsilon){n\choose2}$, which implies
\[
r_j\le\frac{\ln((1+\eps)n^{2})}{\ln(1+\phi)}=\frac{2+o(1)}{\phi}\cdot \ln n~.
\]
We set $C_j=B_{r_j}$ and continue to construct $C_{j+1}$. Overall, we found a partition of the vertex set $V$ into clusters $C_1,C_2,\dots$ such that each cluster satisfies \Cref{eq:sparseCut}, and has (unweighted) diameter at most $\frac{4+o(1)}{\phi}\cdot\ln n= t$. 
In particular, for every edge $e=(u,v)\in E_i$, if $u,v$ are clustered to the same $C_i$, then the distance between them in $\hat{H}$ will be bounded by  $t$. Thus $E_{i+1}$ will be a subset $\partial_{H_{i}}(C_{1},C_{2},\ldots)$, the set of inter-cluster edges.
It holds that
\begin{align} \label{eq:partialvol}
\partial_{H_{i}}(C_{1},C_{2},\ldots)=\sum_{j\ge1}\partial_{H_{i,j}}(C_{j})\le\phi\cdot\sum_{j\ge1}\mathrm{Vol}_{H_{i,j}}(C_{j})\le\phi\cdot\mathrm{Vol}_{H_{i}}(V)~,
\end{align}
where the first inequality holds as each edge counted exactly once. For example the edge $(u,v)\in E(C_a,C_b)$, where $a<b$ counted only at $\partial_{\hat{H}_{i,a}}(C_{a})$. 
Hence,
\begin{align*}
|E_{i+1}|\le\partial_{G_{i}}(C_1,C_2,\ldots) &\le (1+\epsilon) 	\partial_{H_i}(C_1,C_2,\ldots)&&\text{By \Cref{fact:cutsp}}\\
&\le (1+\epsilon)\phi \cdot \mathrm{Vol}_{H_i}(V)&&\text{By \Cref{eq:partialvol}} \\
&\le \frac{\phi(1+\epsilon)}{1-\epsilon} \cdot \mathrm{Vol}_{G_i}(V)&&\text{By \Cref{fact:cutsp}}\\
&= \phi \cdot \left(1+\frac{2\epsilon}{1-\epsilon}\right) \cdot 2|E_i|< 3\phi\cdot |E_i|~.
\end{align*}
\end{proof}

\begin{remark}
	Note that in fact for the low diameter decomposition argument, it is enough to use in \Cref{alg:filttering} cut sparsifiers rather than spectral sparsifiers.
\end{remark}

\subsection{Stretch-Communication trade-off}\label{subsec:SimModelTradeoff}

We note that if more communication per round is allowed, then we can obtain the following.
\thmSCMtradeoffoneround*
\begin{proof}
	We prove the stretch bounds, one by one. 
	\paragraph{Proving $\wt{O}(n^{(1-\alpha)\frac{2}{3}})$:} Basically, the claim follows by \Cref{cor:SpannerBySparsifiers}. More specifically, we work on graphs induces on $O(n^{1-\alpha})$ sized set of vertices. For each such subgraph, we can construct sparsifiers using $O(\polylog(n))$ sized sketches communicated by each vertex involved. Since each vertex is involved in $\wt{O}(n^{\alpha})$ subgraphs, then communication per vertex is $\wt{O}(n^{\alpha})$. And by \Cref{cor:stretcheffres}, the stretch is $\wt{O}(n^{(1-\alpha)\frac{2}{3}})$.
	
	\paragraph{Proving $\wt{O}\left(\sqrt{m} \cdot n^{-\alpha}\right)$:} 
		First, the reader should note that we cannot directly use \Cref{cor:stretcheffresedge} for this part. The reason is that during the proof of \Cref{thm:TradeOffOnePassEdges}, for the special case where $m\le n^{1+\alpha}$, we simply used a sparse recovery procedure to recover the entire graph. 
		However, as the graph $G$ might contain a dense subgraph, sparse recovery is impossible in the simultaneous communication model.
		Instead, we use a procedure, called \emph{peeling low degree vertices}, where using $\wt{O}(n^{\alpha})$ bits of information per vertex, we can partition the vertices into two sets, $V_1$ and $V_2$, where all edges incident on $V_1$ are recovered and minimum degree in $G[V_2]$ is at least $n^{-\alpha}$. We present this procedure in \Cref{alg:peeling} and its guarantees are proved in \Cref{claim:ssparse} below.

		\begin{lemma}[Peeling low-degree vertices]\label{claim:ssparse}
			In a simultaneous communication model, where communication per player is  $\tilde{O}(s)$, there is an algorithm that each vertex can  locally run and output a partition of the vertices into $V_1,V_2$ such that:
			\begin{enumerate}
				\item All the incident edges of $V_1$ are recovered.
				\item The min-degree in the induce graph $G[V_2]$ is at least $s$.
			\end{enumerate}
			Furthermore, the partitions output by all vertices are identical, due to the presence of shared randomness. 
		\end{lemma}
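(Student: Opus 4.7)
The plan is for every player to transmit, in the single allowed round, a linear $s$-sparse recovery sketch of its own adjacency vector, and then to have all players locally simulate the same deterministic peeling procedure. Using the shared randomness, all players will agree on a common $s$-sparse recovery sketching matrix $M$ with $\tilde{O}(s)$ rows; player $v$ will broadcast $\sigma_v := M\cdot\mathbbm{1}_{N(v)}$ together with its degree $\deg(v)$, fitting in $\tilde{O}(s)$ bits. After this round every player will see the same collection $\{(\sigma_v,\deg(v))\}_{v\in V}$ and share the same randomness, which is what will let us guarantee identical outputs.

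Each player will then run the following peeling procedure. Initialize $V_1=\emptyset$, $V_2=V$, and maintain a running degree $d(v)=\deg(v)$ and running sketch $\sigma_v$ for every $v$. While some $v\in V_2$ satisfies $d(v)\le s$, pick such a $v$ in a fixed canonical order (e.g.\ lexicographic on vertex ids), apply sparse recovery to $\sigma_v$ to obtain the list $N'(v)$ of $v$'s current neighbors inside $V_2$, record every edge $(v,u)$ for $u\in N'(v)$ as incident to $v$, move $v$ from $V_2$ to $V_1$, and for each $u\in N'(v)$ update $\sigma_u\leftarrow \sigma_u - M\cdot e_v$ and $d(u)\leftarrow d(u)-1$. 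The key invariant, immediate from the linearity of $M$, is that throughout the process $\sigma_v=M\cdot\mathbbm{1}_{N(v)\cap V_2}$, so whenever sparse recovery is invoked the target vector really is $s$-sparse and the recovery is exact.

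Property (1) will then follow because every vertex placed in $V_1$ has its full edge list recorded: edges to already-peeled neighbors were recorded at the time those neighbors were peeled, and the rest are recovered when $v$ itself is peeled. Property (2) will follow because the loop exits precisely when no $v\in V_2$ has $d(v)\le s$, and $d(v)$ always equals the current degree of $v$ in $G[V_2]$, so $G[V_2]$ has minimum degree at least $s+1>s$. Identical outputs across players will follow because every player sees the same sketches, degrees, and randomness, and follows the same deterministic peeling rule.

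The main obstacle will be controlling the failure probability of the sparse recovery calls, of which there can be as many as $n$. The standard fix, which I would apply here, is to instantiate $M$ as $O(\log n)$ independent copies of a basic $s$-sparse recovery sketch; this keeps the overall sketch size at $\tilde{O}(s)$, drives the per-call failure probability below $n^{-c}$ for any constant $c$ of our choosing, and lets a union bound over the (at most $n$) recoveries conclude that the entire peeling succeeds with high probability.
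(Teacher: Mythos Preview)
Your proposal is correct and follows essentially the same approach as the paper: each vertex communicates an $s$-sparse recovery sketch of its neighborhood together with its degree, and then all players locally run the same deterministic peeling loop, using linearity of the sketch to subtract off peeled vertices and a union bound over the at most $n$ recovery calls. The only point the paper makes slightly more explicit is the adaptivity issue---it notes that on success the output of each recovery is a deterministic function of the graph, so the sequence of recovery targets is predetermined and the union bound is legitimate---but your argument implicitly relies on the same observation and goes through.
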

		\begin{proof}
			First, we argue that using $s$-sparse recovery procedure on the neighborhood of vertices, one can find a set $V_1\subseteq V$ such that all the vertices in $V\setminus V_1$ have degree more than $s$. This is done in the following way: each vertex prepares an $s$-sparse recovery sketch for its neighborhood, and in the first round of communication writes its sketch alongside its degree on the board. Then, each vertex runs \Cref{alg:peeling} locally. Note that the output is identical in all vertices since they have access to shared randomness.
			
			Now, we argue the correctness of \Cref{alg:peeling}. First, we let $\textsc{Recover}$ be a $s$-sparse recovery algorithm. More specifically, the following fact holds. 
			\begin{fact}\label{fact:sparse-recovery}
				For any integer $s$, given $S$, a $\wt{O}(s)$-bit sized linear $s$-sparse recovery sketch of a vector $\vec{b}$, such that $\textsc{Support}(\vec{b})\le s$, algorithm $\textsc{Recover}(S)$ outputs the non-zero elements of $\vec{b}$, with high probability.
			\end{fact} 
			Consider the execution of \Cref{alg:peeling}. If  in the beginning there does not exist a low-degree vertex, we are done. Otherwise, there exists a vertex $u$ with degree $\le s$. Now, when we call $\textsc{Recover}(S_u)$ it is guaranteed that the support of the vector is bounded by $s$ (see \Cref{line:support} of \Cref{alg:peeling}). In that case, $\textsc{Recover}(S_u)$ succeeds with high probability.
			Note that in case of success, the output of $\textsc{Recover}(S_u)$ is deterministic- that is depend only the graph and not on the random coins.
			Then, we delete vertex $u$ alongside its incident edges. The sketches for the rest of the graph can be updated accordingly, since the sketches are linear. Thus, we can use the updated sketches in the next round to recover the neighborhood of another low-degree vertex (in the updated graph), without encountering dependency issues (as the series of events we should succeed upon is predetermined). We repeat this procedure until no vertex with degree $\le s$ remains.  Furthermore, we call $\textsc{Recover}$ at most $n$ times per vertex (since we can delete at most $n$ vertices), in total, using union bound, the algorithm succeeds with high probability. \footnote{A similar argument is also given in \cite{DBLP:journals/corr/abs-1903-12165}.}
			\begin{algorithm}[t]
				\caption{\texttt{Low-Degree Peeling}$(\{S_u\}_{u\in V},s)$}  
				\DontPrintSemicolon
				\SetKwInOut{Input}{input}\SetKwInOut{Output}{output}
				\Input{A parameter $s$, linear $s$-sparse recovery sketches (denoted by $S_u$ for each vertex $u$)}	
				\Output{A partition of vertices into two sets, $V_1$ and $V_2$, with the guarantees mentioned in \Cref{claim:ssparse}}
				\BlankLine
				\label{alg:peeling}
				$V_1\gets \emptyset$\;
				$V_2\gets V$\;
				\While{$\exists$ a vertex $u$ with degree $\le s$\label{line:support}}{
					$u \gets$ a vertex with degree $\le s$\tcp*{Using a universal ordering, and degrees in $G[V_2]$}
					$E_u\gets \textsc{Recover}(S_u)$\tcp*{See \Cref{fact:sparse-recovery}} 
					Remove $E_u$ from the sketches and update degrees.\tcp*{Sketches are linear}
					$V_1\gets V_1\cup \{u\}$.\;
					$V_2 \gets V_2 \setminus \{u\}$.
				}
				\Return $(V_1,V_2)$
			\end{algorithm}
		\end{proof}

We use \Cref{alg:peeling} with $s=n^{\alpha}$.
In the same time, we use the algorithm from \Cref{thm:TradeOffOnePassEdges}. That is, partition the vertices into $\widetilde{O}(n^{2\alpha})$ sets such that each vertex belong to each set with probability $n^{-\alpha}$. Than compute a sparsifier $H$ for each set and take their union.
It follows that the total required communication is $\widetilde{O}(n^{\alpha})$ per vertex.
Note that the algorithm of \Cref{thm:TradeOffOnePassEdges} is linear. Hence after using \Cref{claim:ssparse}, we can add all the edges incident on $V_1$ to the spanner, and update the algorithm from \Cref{thm:TradeOffOnePassEdges} accordingly. That is we will use it only on $G[V_2]$.
 
Note that we have $|E(G[V_2])| \ge |V_2|\cdot n^{\alpha}$, and consequently we can use the argument in the proof of \Cref{thm:TradeOffOnePassEdges}. 
In total from one hand we will obtain stretch $1$ on edges incident to $V_1$, and from the other hand, for edges inside $G[V_2]$ we will have stretch of $\wt{O}(\sqrt{|E(G[V_2])|}\cdot n^{\alpha})\le \wt{O}(\sqrt{m}\cdot n^{\alpha})$.
\end{proof}
\thmSCMtradeoff*

\begin{proof}
	We use the same set of subsets of vertices as in \Cref{thm:meta-space-stretch}, i.e., let $\mathcal{P}\subset 2^{[n]}$ be a set of subsets of $[n]$ such that: (1) $|\mathcal{P}|=O(n^{2\alpha}\log n)$, (2) every $P\in\mathcal{P}$ is of size $|P|=O(n^{1-\alpha})$, and (3) for every $i,j\in[n]$ there is a set $P\in\mathcal{P}$ containing both $i,j$. Such a collection $\mathcal{P}$ can be constructed by a random sampling.
	Denote $V=\{v_1,\dots,v_n\}$.
	For each $P\in\mathcal{P}$, set $A_P=\{v_i\mid i\in P\}$. 
	For each $P\in\mathcal{P}$, we use \Cref{alg:filttering} independently on each subgraph. Then, using \Cref{thm:filtering} on each subgraph, since the size of each subgraph is $O(n^{1-\alpha})$ and since for each edge we have a subgraph that this edge is present, the claim holds. 
\end{proof}

\section{Pass-Stretch trade-off: smooth transition}\label{sec:spanner-sketch}
In this section we study the trade-off between the stretch and the number of passes in the semi-streaming model. Our contribution here is a smooth transition between the spanner of \cite{BS07} (\Cref{thm:BS}) and that of \cite{KW14} (\Cref{thm:KW}), achieving a general trade-off between number of passes and stretch (while the space/number of edges is fixed).

\subsection{Previous algorithms}
We will use the clusters created in the algorithms of \cite{BS07} and \cite{KW14} as a black box.  For completeness in \Cref{sec:altBS} and \Cref{sec:KW} we provide the construction and proof of \cite{BS07} and \cite{KW14}, respectively.
The properties of the clustering procedure is described in \Cref{lem:BS_clusters} and \Cref{lem:KW_clustering}. See \Cref{sec:altBS} and \Cref{sec:KW} for a discussion of how exactly they follow.

$\mathcal{P}\subseteq 2^V$ is called a partial partition of $V$  if $\cup\mathcal{P}\subseteq V$, and for every $P,P'\in \mathcal{P}$, $P\cap P'=\emptyset$.
We denote by $B(n,p)$ the binomial distribution, where we have $n$ biased coins, each with probability $p$ for head, and we count the total number of heads.
\begin{restatable}[\cite{KW14} clustering]{lemma}{KWclustering}\label{lem:KW_clustering}
	Given an unweighted, undirected $n$-vertex graph $G=(V,E)$ in a streaming fashion, for every parameters $p\in(0,1]$ and integer $i\le\log_{\frac{1}{p}}n$,	
	there is a $2$ pass algorithm that uses $\tilde{O}(\nicefrac{|V|}{p})$ space, and returns a partial partition  $\mathcal{P}$ of $V$, and a subgraph $H$ (where $|H|=\tilde{O}(\nicefrac{|V|}{p}))$ such that:
	\begin{OneLiners}
		\item $\mathcal{P}$ is known at the end of the first pass, and $|\mathcal{P}|$ is distributed according to $B(|V|,p^i)$.
		\item Each cluster $P\in\mathcal{P}$ has diameter at most $2^{i+1}-2$ w.r.t. $H$.
		\item For every edge $(u,v)$ such that at least one of $u,v$ is not in $\cup\mathcal{P}$, it holds that 
		$d_H(u,v)\le 2^i-1$.
	\end{OneLiners}
\end{restatable}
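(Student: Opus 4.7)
The plan is to simulate a hierarchical Baswana--Sen-style clustering over $i$ virtual rounds, compressed into two streaming passes by aggregating $\ell_0$-samplers. Using the shared randomness I sample a nested hierarchy $V=S_0\supseteq S_1\supseteq\cdots\supseteq S_i$, where each $S_{j+1}$ keeps each vertex of $S_j$ independently with probability $p$. The clusters of $\mathcal{P}$ will be in one-to-one correspondence with the centers in $S_i$, so the distribution $|\mathcal{P}|\sim B(|V|,p^i)$ is built in by construction. Because $i\le\log_{1/p}n$ contributes only a polylogarithmic factor, the space budget $\tilde{O}(|V|/p)$ comfortably accommodates $\tilde{O}(1/p)$ independent $\ell_0$-sampling sketches per vertex per level; by \Cref{lem:SubsetSampling}, these sketches are linear and can therefore be aggregated across any subset $T\subseteq V$ to sample an edge leaving $T$ that lands in a prescribed target set (specifically, the union of currently alive clusters whose center is in $S_j$).

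The first pass is a vanilla feed of the stream into the sketches. In post-processing I simulate $i$ clustering rounds: at round $j=1,\dots,i$, every cluster whose center is not in $S_j$ queries the aggregated $\ell_0$-sampler of its members to try to produce a single edge into a cluster whose center is in $S_j$; on success it merges there, on failure it dies and its vertices leave $\cup\mathcal{P}$. All decisions are fixed by the pass-1 sketches, so $\mathcal{P}$ is determined at the end of pass 1. The second pass then uses sparse recovery (\Cref{lem:RecoverSingleEdge}) to materialize the actual spanner edges: one per merge (producing the BFS-tree backbone of every surviving cluster), and for each cluster $C$ that dies at some round $j$, one representative edge per neighboring level-$(j-1)$ cluster.

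The diameter claim (ii) falls out of a straightforward recursion: writing $D_j$ for the worst-case diameter in $H$ of a surviving level-$j$ cluster, each merge in round $j+1$ glues two level-$j$ clusters together through a single edge, yielding $D_{j+1}\le 2D_j+2$ with $D_0=0$, which unwinds to $D_i\le 2^{i+1}-2$. The unclustered-edge bound (iii) requires more care: if $(u,v)\in E$ and $v\notin\cup\mathcal{P}$, then $v$'s cluster $C$ died at some round $j^\star$ and the postprocessing added a representative edge from $C$ into every neighboring level-$(j^\star-1)$ cluster, in particular into the level-$(j^\star-1)$ cluster that contained $u$ at that moment. The path in $H$ traverses $v$'s dead cluster (length $\le D_{j^\star-1}$), the representative edge (length $1$), and then inside the neighboring level-$(j^\star-1)$ cluster a path of length $\le D_{j^\star-1}$ to $u$; a careful bookkeeping using that $j^\star-1\le i-1$ together with the value $D_{j^\star-1}\le 2^{j^\star}-2\le 2^{i-1}-1$ pins the total down to at most $2^i-1$.

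The main obstacle I anticipate is twofold. First, I must handle the adaptivity of the $\ell_0$-sampling chain correctly, since round-$(j+1)$ queries depend on round-$j$ outcomes; I will boost each sampler's success probability to $1-1/\poly(n)$ using $\tilde{O}(1)$ extra independent copies per level and union-bound over the $\le i\cdot|V|$ queries, which only inflates the space by a logarithmic factor. Second, the precise length-accounting in (iii) is subtle because the neighboring cluster of $v$'s dead $C$ may itself keep growing after round $j^\star$, and I need to verify that the edges added in pass 2 (specifically, one representative per neighboring level-$(j^\star-1)$ cluster, landing us directly inside the correct level-$(j^\star-1)$ fragment containing $u$) indeed yield a path that does not leak into the post-round-$j^\star$ growth before reaching $u$, which is what ultimately enables the $2^i-1$ bound rather than the looser $2^{i+1}-3$ one would get by naive concatenation.
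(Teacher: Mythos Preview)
Your high-level skeleton is right, but two coupled details cause the claimed bounds to fail.

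\textbf{The merge target must be a center, not a surviving cluster.} In round $j$ you let a dying level-$(j{-}1)$ cluster sample an edge into \emph{any vertex of} some level-$(j{-}1)$ cluster whose center lies in $S_j$. With that rule a level-$j$ cluster is a ``star of clusters'': a core $T_{u,j-1}$ with many petals $T_{v,j-1}$ attached by single edges to arbitrary vertices of the core. The worst-case diameter is then
\[
D_j \;\le\; D_{j-1}+1+D_{j-1}+1+D_{j-1}\;=\;3D_{j-1}+2,
\]
which unwinds to $D_i\le 3^i-1$, not $2^{i+1}-2$. (Concretely: $D_1=2$ is fine, but already $D_2$ can be $8$, not $6$.) The paper's \cite{KW14} algorithm avoids this by sampling an edge from $T_{v,j-1}$ directly into the vertex set $N_j$ (your $S_j$), so each petal is attached \emph{at the center} $u$; this yields $R_j\le 1+D_{j-1}$ and hence $D_j\le 2D_{j-1}+2$. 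Your recursion ``glues two level-$j$ clusters through a single edge'' tacitly assumes this center-attachment, which your described algorithm does not provide.

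\textbf{The terminal-cluster step must add an edge to every neighboring \emph{vertex}, not one per neighboring cluster.} With your rule the path for an edge $(u,v)$ with $v$ unclustered traverses two level-$(j^\star{-}1)$ clusters plus the representative edge, giving $2D_{j^\star-1}+1$; with $j^\star$ as large as $i$ this is $2(2^{i}-2)+1=2^{i+1}-3$, strictly larger than $2^i-1$ for $i\ge 2$. Your ``careful bookkeeping'' step asserts $D_{j^\star-1}\le 2^{j^\star}-2\le 2^{i-1}-1$, but $j^\star$ can equal $i$, so this inequality is false. The paper instead has each terminal cluster $T$ add one edge to \emph{every vertex} in $\mathcal{N}(T)$ (using \Cref{lem:SubsetSampling} in pass 2); then the path for $(u,v)$ only crosses \emph{one} cluster before hitting $u$ directly, giving $1+D_{j^\star-1}\le 2^{j^\star}-1\le 2^i-1$. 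The space still fits because a terminal level-$(j{-}1)$ cluster has at most $\tilde{O}(1/p)$ neighboring vertices with high probability (otherwise one of them would have landed in $S_j$).
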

\begin{restatable}[\cite{BS07} clustering]{lemma}{BSclustering}\label{lem:BS_clusters}
	 Given an unweighted, undirected $n$-vertex graph $G=(V,E)$ in a streaming fashion, for every parameters $p\in(0,1]$ and integer $i\le\log_{\frac{1}{p}}n$,	
	there is an $i+1$ pass algorithm that uses $\tilde{O}(\nicefrac{|V|}{p})$ space, and returns a partial partition  $\mathcal{P}$ of $V$, and a subgraph $H$ (where $|H|=\tilde{O}(\nicefrac{|V|}{p}))$ such that:
	\begin{OneLiners}
		\item $\mathcal{P}$ is known at the end of the $i$`th pass, and $|\mathcal{P}|$ is distributed according to $B(|V|,p^i)$.
		\item Each cluster $P\in\mathcal{P}$ has diameter at most $2i$ w.r.t. $H$.
		\item For every edge $(u,v)$ such that at least one of $u,v$ is not in $\cup\mathcal{P}$, it holds that 
		$d_H(u,v)\le 2i-1$.
	\end{OneLiners}
\end{restatable}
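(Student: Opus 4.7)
The plan is to implement the Baswana-Sen clustering iteratively in the dynamic stream, using one pass per clustering round and one final pass for edge recovery, for a total of $i+1$ passes. Initialize $\mathcal{P}_0 := \{\{v\} : v \in V\}$ and $H := \emptyset$. For $j = 1, \ldots, i$, before pass $j$ sample each cluster of $\mathcal{P}_{j-1}$ into a set $\mathcal{S}_j$ independently with probability $p$ using shared random bits, so every vertex locally knows whether its cluster is in $\mathcal{S}_j$. During pass $j$, each $v \in \cup\mathcal{P}_{j-1}$ maintains the linear sketch of \Cref{lem:SubsetSampling} restricted to neighbors in sampled clusters, with the coordinate partition induced by cluster labels in $\mathcal{P}_{j-1}$ and sparsity parameter $\tilde{O}(1/p)$. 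After pass $j$: vertices whose cluster is in $\mathcal{S}_j$ retain their label in $\mathcal{P}_j$; a vertex $v$ whose cluster is unsampled queries its sketch for one edge to each adjacent sampled cluster, and if at least one is found, $v$ joins some adjacent sampled cluster via that edge (added to $H$); otherwise $v$ is removed from the partition. The final pass uses an analogous sketch, now partitioned by all clusters of $\mathcal{P}_{j-1}$, to add to $H$, for every removed vertex $v$, one edge from $v$ to each cluster of $\mathcal{P}_{j-1}$ containing a neighbor of $v$.

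Correctness reduces to three inductive arguments. For the diameter bound, by induction on $j$ a cluster in $\mathcal{P}_j$ is obtained from a sampled cluster $C' \in \mathcal{S}_j$ of $H$-diameter at most $2(j-1)$ by attaching new members at $H$-distance $1$, so its diameter is at most $2(j-1) + 2 = 2j$. For the size distribution, each of the $n$ original singletons survives to $\mathcal{P}_i$ iff it is sampled in every one of the $i$ rounds, with independent probability $p^i$ per singleton, so $|\mathcal{P}_i| \sim B(n, p^i)$. For the third property, fix an edge $(u,v) \in E$ with, say, $v \notin \cup\mathcal{P}$ and let $j$ be the round in which $v$ was removed. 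Then at the start of round $j$, $u$ belonged to some cluster $C \in \mathcal{P}_{j-1}$ (the case $u \notin \cup\mathcal{P}_{j-1}$ is handled symmetrically by taking $u$'s removal round), and the final pass placed an edge from $v$ to a vertex of $C$ into $H$, which combined with $\mathrm{diam}_H(C) \le 2(j-1)$ gives $d_H(u,v) \le 1 + 2(j-1) \le 2i-1$.

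The main obstacle is the total space bound $\tilde{O}(n/p)$: per-vertex sketches of size $\tilde{O}(1/p)$ only provably suffice when the number of adjacent sampled clusters is $\tilde{O}(1/p)$, but a vertex can have many more adjacent clusters in $\mathcal{P}_{j-1}$. I would handle this with a case split. If $v$ has at most $\tilde{O}(1/p^2)$ adjacent clusters, then Chernoff gives that only $\tilde{O}(1/p)$ are sampled w.h.p., and \Cref{lem:SubsetSampling} recovers one edge to each; if $v$ has more adjacent clusters, then with overwhelming probability at least one is sampled, and an $\ell_0$-sampler of size $\tilde{O}(1/p)$ over the subgraph of edges to sampled clusters already finds one such edge, which is all $v$ needs to join a cluster and stay alive. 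A union bound over rounds and vertices controls failure, and the $|H| = \tilde{O}(n/p)$ edge bound follows from the standard Baswana-Sen charging: each surviving vertex contributes at most one edge per round, and summing over the $\tilde{O}(n/p)$ expected removed vertices weighted by their average cluster-degree yields $\tilde{O}(n/p)$ edges.
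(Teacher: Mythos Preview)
Your approach is essentially the paper's: run Baswana--Sen clustering one round per pass, and in one extra pass recover an edge from each dropped vertex to every adjacent cluster at its removal round. The diameter and stretch inductions are correct. The paper differs only organizationally: it samples, in pass $j$, $\tilde O(1/p)$ edges from an auxiliary ``star graph'' $G_{v,j-1}$ (whose nodes are the $\mathcal{P}_{j-1}$ clusters adjacent to $v$), and then in pass $j{+}1$ realizes those star-edges as actual $G$-edges---so the removed-vertex recoveries are interleaved rather than collected into one final pass.

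There is, however, a genuine gap in your space/edge accounting. Your case split (threshold $\tilde O(1/p^2)$) only addresses the \emph{pass-$j$} sketch over \emph{sampled} clusters; it says nothing about the \emph{final-pass} sketch, which must be partitioned by \emph{all} clusters of $\mathcal{P}_{j-1}$ adjacent to a removed vertex $v$. From your Case~1 you can only conclude that a removed $v$ has at most $\tilde O(1/p^2)$ adjacent clusters, which would force per-vertex sketches of size $\tilde O(1/p^2)$ in the final pass and blow the budget. The missing (and easy) step---exactly what the paper uses---is: if $v$ has $\ge c(\log n)/p$ adjacent clusters in $\mathcal{P}_{j-1}$, then with probability $1-n^{-\Omega(1)}$ at least one is sampled and $v$ joins; hence every removed vertex has $\tilde O(1/p)$ adjacent clusters w.h.p., and a $\tilde O(1/p)$-sparsity \Cref{lem:SubsetSampling} sketch suffices in the final pass. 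Once you have this, the $\tilde O(1/p^2)$ threshold is unnecessary (a single $\ell_0$-sampler over edges to sampled clusters already handles joining in pass $j$).

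Relatedly, your edge-count sentence is garbled: there are not ``$\tilde O(n/p)$ expected removed vertices''---up to $n$ vertices can be removed. The correct count is that each removed vertex contributes at most $\tilde O(1/p)$ edges (one per adjacent cluster, by the bound above), and each surviving vertex contributes one edge per round, giving $|H|=\tilde O(n/p)$.
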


\subsection{Algorithms}
We begin with  a construction based on \Cref{lem:KW_clustering}. In this regime we are interested in at most $\log k$ passes.

\thmKWInPasses*
\begin{proof}	
Fix $r=\left\lceil (\frac{k+1}{2})^{\nicefrac{1}{g}}\right\rceil -1$.
For $i\in[1,g+1]$ set $d_{1}=\frac{1}{k}$ and in general $d_{i}=\frac{1}{k}+r\sum_{q=1}^{i-1}d_{q}$.
By induction it holds that $d_{i}=\frac{(r+1)^{i-1}}{k}$, as 
\begin{align*}
d_{s+1} & =\frac{1}{k}+r\sum_{q=1}^{s}d_{q}=r\cdot d_{s}+\frac{1}{k}+r\sum_{q=1}^{s-1}d_{q}=(r+1)\cdot d_{s}=\frac{(r+1)^{s}}{k}~.
\end{align*}
In addition, set $p_{i}=n^{-d_{i}}$. 
Our algorithm will work as follows,
In the first pass we use \Cref{lem:KW_clustering} with parameter $p_1$ and $r$ to obtain a spanner $H_1$ and partial partition $\mathcal{P}_1$. We construct a super graph $\mathcal{G}_1$ of $G$ by contracting all internal edges in $\mathcal{P}_1$, and deleting all vertices out of $\cup \mathcal{P}_1$. As $\mathcal{P}_1$ is known after a single pass, the construction of $\mathcal{G}_1$ takes a single pass.

Generally, after $i$ iterations, which took us $i$ passes, we will have spanners $H_1,\dots,H_i$, partition $\mathcal{P}_i$ of $V$ and a super graph $\mathcal{G}_i$ which was constructed by contracting the clusters in $\mathcal{P}_i$, and deleting vertices out of $\cup \mathcal{P}_i$. 
We invoke \Cref{lem:KW_clustering} with parameters $p_i$ and $r$ to obtain a spanner $H_{i+1}$, and partition $\mathcal{P}_{i+1}$.
In \Cref{rem:KWclusteringSuperGraph}, we explain how to use \Cref{lem:KW_clustering} on a super graph $\mathcal{G}_i$ rather than on $G$. Farther, instead of obtaining spanner $\mathcal{H}_{i+1}$ of $\mathcal{G}_i$, we can obtain a spanner $H_{i+1}$ of $G$ such that for every edge $\tilde{e}=(C,C')\in \mathcal{H}_{i+1}$, $H_{i+1}$ contains a representative edge $e\in E(C,C')$.
Then, we create a super graph $\mathcal{G}_{i+1}$ out of  $\mathcal{G}_{i}$ by contracting the clusters in $\mathcal{P}_{i+1}$, and deleting clusters out of $\cup \mathcal{P}_{i+1}$.
Finally, after $g$ passes, we will have a partition 
$\mathcal{P}_{g}$ of $V$. In the $g+1$'th pass, for every pair of clusters $C,C'\in \mathcal{P}_{g}$, we try to sample a single edge from $E(C,C')$ using \Cref{lem:RecoverSingleEdge}. All the sampled edges will be added to a spanner $H_{g+1}$. Note that $|H_{g+1}|\le{|\mathcal{P}_{g}|\choose 2}$. The final spanner $H=\cup_{i=1}^{g+1} H_i$ will be constructed as a union of all the $g+1$ spanners we constructed.

Next, we turn to analyzing the algorithm. First for the number of passes, note that the construction of $\mathcal{P}_i$ is done at the $i$'th pass, while we finish constructing $H_i$ only in the $i+1$'th pass. In particular, in the $i+1$'th pass we will simultaneously construct $H_i$ and  $\mathcal{P}_{i+1}$. This is possible as $\mathcal{P}_i$ (and therefore $\mathcal{G}_i$) is already known by the end of the $i$'th pass.
An exception is $H_{g+1}$ which is computed in a single $g+1$'th pass (where we also simultaneously construct $H_{g}$).

It follows from \Cref{lem:KW_clustering}, that for every $j$, $|\mathcal{P}_{j}|$ is distributed according to $B(n,p^r_1\cdot p^r_2\cdots p^r_j)$,  
thus 
\[
\mathbb{E}\left[\left|\mathcal{P}_{i}\right|\right]=n\cdot\Pi_{q=1}^{j}p_{q-1}=n^{1-r\sum_{q=1}^{j}d_{q}}=n^{1+\frac{1}{k}-d_{j+1}}~.
\]
In particular, using Chernoff inequality (see e.g., thm. 7.2.9.  \href{https://sarielhp.org/misc/blog/15/09/03/chernoff.pdf}{here})
$$\Pr\left[\left|\mathcal{P}_{j}\right|\ge2\cdot n^{1+\frac{1}{k}-d_{j+1}}\right]\le\exp\left(-\frac{1}{4}n^{1+\frac{1}{k}-d_{j+1}}\right).$$
Thus w.h.p. for every $j$, $\left|\mathcal{P}_{j}\right|=O(n^{1+\frac{1}{k}-d_{j+1}})$. The rest of the analysis is conditioned on this bound holding for every $j$.
According to \Cref{lem:KW_clustering}, for every $j$ it holds that 
\[
|H_{j}|\le\tilde{O}(\frac{\left|\mathcal{P}_{j-1}\right|}{p_{j}})=\tilde{O}(n^{1+\frac{1}{k}-d_{j}}\cdot n^{d_{j}})=\tilde{O}(n^{1+\frac{1}{k}})~.
\]
Furthermore, 
\[
|H_{g+1}|\le\left|\mathcal{P}_{g}\right|^{2}=O(n^{2(1+\frac{1}{k}-d_{g+1})})=O(n^{2(1+\frac{1}{k}-\frac{(r+1)^{g}}{k})})\le O(n^{1+\frac{1}{k}})
\]
where the last inequality follows as $1+\frac{1}{k}-\frac{(r+1)^{g}}{k}\le1+\frac{1}{k}-\frac{k+1}{2k}=\frac{1}{2}(1+\frac{1}{k})$.
We conclude the we return a spanner of size $|H|=\tilde{O}(n^{1+\frac{1}{k}})$, and used $\tilde{O}(n^{1+\frac{1}{k}})$ space in every pass.

Finally, we analyze stretch. Denote by $D_i$ the maximal diameter of a cluster in $\mathcal{P}_i$ w.r.t to $G$. Here $\mathcal{P}_0=V$ and thus $D_0=0$. By \Cref{lem:KW_clustering}, the diameter of each cluster in  $\mathcal{P}_i$ w.r.t.  $\mathcal{G}_{i-1}$ is bounded by $\alpha=2^{r+1}-2$. 
As every path inside an $\mathcal{G}_{i}$ cluster will use at most $\alpha$ edges from $H_i$, and will go through at most $\alpha+1$ different clusters in $\mathcal{G}_{i-1}$, it follows that $D_{i+1}\le\alpha+\left(\alpha+1\right)D_{i}$. As $D_0=0$ (singleton clusters), solving this recursion yields
\[
D_{i}=(\alpha+1)^{i}-1=(2^{r+1}-1)^{i}-1~.
\]
Consider a pair of neighboring vertices $u,v$. If there is some cluster at some level $i\le g$ containing both $u,v$, then $d_H(u,v)\le (2^{r+1}-1)^{g}-1$. Else, let $i\in[1,g]$ be the minimal index $i$ such that $\{u,v\}\nsubseteq \cup\mathcal{P}_{i}$ (denote $\mathcal{P}_{g+1}=\emptyset$). 
If $i\le g$, then there is a path in $H_i$ of length $2^{r}-1$ between the clusters in $\mathcal{P}_{i-1}$ containing $u,v$. It thus holds that
\begin{align*}
d_{H}(u,v) & \le(2^{r}-1)+2^{r}\cdot D_{i-1}=2^{r}\cdot\left(1+(2^{r+1}-1)^{i-1}-1\right)-1\\
& =2^{r}\cdot(2^{r+1}-1)^{i-1}-1\le2^{r}\cdot(2^{r+1}-1)^{g-1}-1
\end{align*}
Otherwise, if $i=g+1$, then there is an edge in $H_{g+1}$ between two $\mathcal{P}_{g}$ clusters containing $u$ and $v$. It follows that 
\begin{align}
d_{H}(u,v) & \le1+2\cdot D_{g}=1+2\cdot\left((2^{r+1}-1)^{g}-1\right)\nonumber \\
& =2\cdot(2^{\left\lceil (\frac{k+1}{2})^{\nicefrac{1}{g}}\right\rceil }-1)^{g}-1< 2^{g\cdot k^{\nicefrac{1}{g}}}\cdot2^{g+1}~,\label{eq:KWtransitionStretch}
\end{align}
which is the maximum among the three bounds. The theorem follows.
\end{proof}

Using the same algorithm, while replacing \Cref{lem:KW_clustering} with \Cref{lem:BS_clusters}, we obtain the following,
\thmBSInPasses*
\begin{proof}
	We proceed in the same manner as in \Cref{thm:KWInPasses}, where the only difference is that we use \Cref{lem:BS_clusters} instead of \Cref{lem:KW_clustering}.
	In particular, we use the same parameters $r=\left\lceil (\frac{k+1}{2})^{\nicefrac{1}{g}}\right\rceil -1$, and $d_j,p_j$ as before, for $g$ iterations.
	See \Cref{rem:BSclusteringSuperGraph} for why we can use \Cref{lem:BS_clusters} over a super graph $\mathcal{G}_i$ in this case.
	Similarly, in the last pass we will add an edge between every pair of $\mathcal{P}_g$ clusters.
	It follows from the analysis of \Cref{thm:KWInPasses} that we are using $\tilde{O}(n^{1+\frac1k})$ space and return a spanner with  $\tilde{O}(n^{1+\frac1k})$ edges.
	To analyze the number of passes used, note that in each of the $g$ iterations we need only $r$ passes, where the $r+1$'th pass can be done simultaneously to the first pass in the next iteration. We will also execute the last special pass at the end (together with the $r+1$'th pass of the $g$'th iteration). Thus in total $g\cdot r+1=g\cdot\left(\left\lceil (\frac{k+1}{2})^{\nicefrac{1}{g}}\right\rceil -1\right)+1<g\cdot k^{\nicefrac{1}{g}}+1$ passes.

	To bound the stretch, we will first analyze the maximal diameter $D_i$ of the cluster constituting $\mathcal{G}_i$. It holds that $D_0=0$, while by \Cref{lem:BS_clusters} each cluster in $\mathcal{G}_{i}$ has diameter $2r$ w.r.t. $\mathcal{G}_{i-1}$. Thus $D_{i}\le 2r+(2r+1)D_{i-1} $. Solving this recursion we obtain
	$$D_{i}\le (2r+1)^i-1~.$$
	Consider a pair of neighboring vertices $u,v$. If there is some cluster at some level $i\le g$ containing both $u,v$, then $d_H(u,v)\le (2r+1)^{g}-1$. Else, let $i\in[1,g+1]$ be the minimal index $i$ such that $\{u,v\}\nsubseteq \cup\mathcal{P}_{i}$ (denote $\mathcal{P}_{g+1}=\emptyset$). 
	If $i\le g$, then there is a path in $H_i$ of length $2r-1$ between the  $\mathcal{P}_{i-1}$ clusters containing $u,v$. It thus holds that
	\[
	d_{H}(u,v)\le2r-1+2r\cdot D_{i-1}=2r\cdot(2r+1)^{i-1}-1<(2r+1)^{g}
	\]
	Otherwise, if $i=g+1$, then there is an edge in $H_{g+1}$ between two $\mathcal{P}_{g}$ clusters containing $u$ and $v$. It follows that 
	\begin{align}
	d_{H}(u,v) & \le1+2\cdot D_{g}=1+2\cdot\left((2r+1)^{g}-1\right)\nonumber \\
	& =2\cdot\left(2\cdot\left\lceil (\frac{k+1}{2})^{\nicefrac{1}{g}}\right\rceil -1\right)^{g}-1\approx2^{g}\cdot\left(k+1\right)\label{eq:BStransitionStretch}
	\end{align}
	which is the maximum among the three bounds. The theorem follows.
\end{proof}

\subsection{Corollaries}\label{sec:corollaries}
In this subsection we emphasize some cases of special interest that follow from \Cref{thm:KWInPasses} and \Cref{thm:BSInPasses}. In all the corollaries and the discussion above we discuss a dynamic stream algorithms over an $n$ vertex graphs, that use $\tilde{O}(n^{1+\frac1k})$ space and w.h.p. return a spanner with $\tilde{O}(n^{1+\frac1k})$ edges. 
The performance of the different algorithms is illustrated in \Cref{tab:SpecialCases} for some specific parameter regimes. 

First, surprisingly we obtain a direct improvement over \cite{KW14} and \cite{BS07}. Specifically, in \Cref{cor:BetterKW} we obtain a quadratic improvement in the stretch, while still using only $2$ passes. 
Then, in \Cref{cor:halfBS}, for the case where $k$ is an odd integer, we achieve the exact same parameters as \cite{BS07}, while using only half the number of passes. 
Next, we treat the case where we are allowed $\log k$ passes. Interestingly, for this case, \Cref{thm:KWInPasses} and \Cref{thm:BSInPasses} coincide, and obtain stretch $\approx k^{\log3}$, a polynomial improvement over the $k^{\log5}-1$ stretch in $\log k$ passes by Ahn, Guha, and McGregor \cite{AGM12Spanners}.
Another interesting case is that of $3$ passes. In \Cref{cor:3passes} we show that using a single additional pass compared to \cite{KW14} (and \Cref{cor:BetterKW}), we obtain an exponential improvement in the stretch.
Finally, when one wishes to get close to optimal stretch, in \Cref{cor:sqrtKpasses} we show that compared to \cite{BS07}, we can reduce the number of passes quadratically, while paying only additional factor of $2$ in the stretch.

\begin{corollary}\label{cor:BetterKW}
	There is a $2$ pass algorithm that obtains stretch $2^{\left\lceil \frac{k+1}{2}\right\rceil +1}-3$.
\end{corollary}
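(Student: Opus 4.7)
The plan is simply to instantiate \Cref{thm:KWInPasses} with $g=1$. Recall that \Cref{thm:KWInPasses} produces, for any real $k\in[1,\log n]$ and any integer $g\in[1,\log k]$, a $(g+1)$-pass dynamic streaming algorithm using $\tilde{O}(n^{1+\frac{1}{k}})$ space that outputs a spanner with $\tilde{O}(n^{1+\frac{1}{k}})$ edges and stretch
\[
2\cdot\bigl(2^{\lceil (\frac{k+1}{2})^{1/g}\rceil}-1\bigr)^{g}-1.
\]
Setting $g=1$ gives a $2$-pass algorithm in which the exponent $\lceil (\frac{k+1}{2})^{1/g}\rceil$ collapses to $\lceil \frac{k+1}{2}\rceil$, so the stretch bound becomes
\[
2\cdot\bigl(2^{\lceil \frac{k+1}{2}\rceil}-1\bigr)-1 \;=\; 2^{\lceil \frac{k+1}{2}\rceil +1}-3,
\]
which is precisely the claimed bound. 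The space and edge-count guarantees $\tilde{O}(n^{1+\frac{1}{k}})$ are inherited directly from the $g=1$ case of \Cref{thm:KWInPasses}.

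There is essentially no obstacle here: the corollary is a direct arithmetic specialization of \Cref{thm:KWInPasses}. The only thing worth verifying is that the admissible range $g\in[1,\log k]$ includes $g=1$ (it does, whenever $k\ge 2$; the degenerate small-$k$ cases are uninteresting since the stretch bound is trivially met by a spanning forest). Thus the proof is a one-line instantiation, and no further argument is required.
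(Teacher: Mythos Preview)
Your proposal is correct and matches the paper's own proof essentially verbatim: the paper also fixes $g=1$ in \Cref{thm:KWInPasses} and simplifies the stretch expression to $2\cdot(2^{\lceil \frac{k+1}{2}\rceil}-1)-1=2^{\lceil \frac{k+1}{2}\rceil+1}-3$.
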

\begin{proof}
	Fix $g=1$, then by \Cref{thm:KWInPasses} in two passes we obtain stretch  $2\cdot(2^{\left\lceil \frac{k+1}{2}\right\rceil }-1)-1=2^{\left\lceil \frac{k+1}{2}\right\rceil +1}-3$.
\end{proof}

\begin{corollary}\label{cor:halfBS}
	For an odd integer $k\ge 3$, there is a $\frac{k+1}{2}$-pass algorithm that obtains stretch $2k-1$.
\end{corollary}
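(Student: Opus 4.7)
The plan is to derive Corollary~\ref{cor:halfBS} as a direct specialization of Theorem~\ref{thm:BSInPasses} with the parameter $g = 1$. I will substitute $g = 1$ into both the pass count and the stretch expression, and then simplify using the assumption that $k$ is an odd integer $\geq 3$.

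For the pass count, Theorem~\ref{thm:BSInPasses} gives $g \cdot \left(\left\lceil \left(\frac{k+1}{2}\right)^{1/g}\right\rceil - 1\right) + 1$. Setting $g = 1$ yields $\left\lceil \frac{k+1}{2}\right\rceil$. The key observation here is that when $k$ is odd, $\frac{k+1}{2}$ is already an integer, so the ceiling is redundant and the pass count is exactly $\frac{k+1}{2}$. For the stretch, the theorem gives $2\cdot \left(2\cdot \left\lceil \left(\frac{k+1}{2}\right)^{1/g}\right\rceil - 1\right)^g - 1$. With $g = 1$ and $k$ odd, this simplifies to $2\cdot\left(2 \cdot \frac{k+1}{2} - 1\right) - 1 = 2(k+1) - 2 - 1 = 2k - 1$.

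Finally, the space requirement of $\tilde{O}(n^{1+1/k})$ and the spanner size bound of $\tilde{O}(n^{1+1/k})$ carry over directly from Theorem~\ref{thm:BSInPasses}. There is no real obstacle here: the only thing to check carefully is the parity condition, which ensures the ceiling disappears and the arithmetic collapses cleanly. The content of the corollary is essentially that among the parameter choices offered by Theorem~\ref{thm:BSInPasses}, the choice $g = 1$ with odd $k$ reproduces the classical Baswana--Sen stretch of $2k - 1$ while only using $\lfloor k/2 \rfloor + 1 = \frac{k+1}{2}$ passes rather than $k$, matching the concurrent result of~\cite{FWY20}.
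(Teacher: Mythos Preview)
Your proof is correct and follows exactly the same approach as the paper: set $g=1$ in Theorem~\ref{thm:BSInPasses} and simplify, using that $k$ odd makes $\frac{k+1}{2}$ an integer so the ceiling disappears. The arithmetic for both the pass count $\left\lceil \frac{k+1}{2}\right\rceil = \frac{k+1}{2}$ and the stretch $2\cdot\left(2\cdot\frac{k+1}{2}-1\right)-1 = 2k-1$ is identical to the paper's.
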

\begin{proof}
	Fix $g=1$, then by \Cref{thm:BSInPasses} there is a $\left\lceil \frac{k+1}{2}\right\rceil =\frac{k+1}{2}$ pass algorithm that obtain stretch 
	 $2\cdot\left(2\cdot(\frac{k+1}{2})-1\right)-1=2k-1$.	
\end{proof}

\begin{corollary}\label{cor:logKpasses}
	There is an $\lceil\log(k+1)\rceil$ pass algorithm that obtains stretch $2\cdot3^{\lceil\log\frac{k+1}{2}\rceil}-1\le2\cdot k^{\log3}-1$.
\end{corollary}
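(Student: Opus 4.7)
The plan is to derive the corollary as a direct instantiation of \Cref{thm:KWInPasses} (equivalently, \Cref{thm:BSInPasses}, since both theorems collapse to the same parameters in this regime) with the specific choice
\[
g \;:=\; \left\lceil \log \tfrac{k+1}{2} \right\rceil.
\]
The motivation behind this choice is that $g$ is precisely the smallest integer for which $\bigl(\tfrac{k+1}{2}\bigr)^{1/g} \le 2$, so the nested ceiling appearing in the stretch bound of \Cref{thm:KWInPasses} collapses to the constant $2$: concretely, for $k \ge 2$ we have $1 < \bigl(\tfrac{k+1}{2}\bigr)^{1/g} \le 2$, and hence $\bigl\lceil \bigl(\tfrac{k+1}{2}\bigr)^{1/g} \bigr\rceil = 2$.

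Substituting these values into \Cref{thm:KWInPasses}, the stretch bound becomes
\[
2 \cdot (2^{2}-1)^{g} - 1 \;=\; 2 \cdot 3^{g} - 1 \;=\; 2 \cdot 3^{\lceil \log\frac{k+1}{2}\rceil} - 1,
\]
matching the claimed stretch exactly. The number of passes becomes $g+1 = \lceil\log\frac{k+1}{2}\rceil + 1 = \lceil\log(k+1)\rceil$, using that adding an integer commutes with the ceiling. This already yields the main part of the corollary.

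It remains to verify the secondary inequality $3^{\lceil \log\frac{k+1}{2}\rceil} \le k^{\log 3}$, which is equivalent to $\lceil \log\frac{k+1}{2}\rceil \le \log k$ for integer $k \ge 2$. I would split into two cases: if $k$ is a power of $2$, both quantities equal $\log k$ by direct computation; otherwise, let $m := \lfloor \log k\rfloor$, so that $k \le 2^{m+1}-1$, which gives $\tfrac{k+1}{2} \le 2^{m}$, and hence $\lceil \log\tfrac{k+1}{2}\rceil \le m < \log k$. This same estimate also confirms that $g$ lies in the admissible range $[1, \log k]$ required by \Cref{thm:KWInPasses}, completing the proof.

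The main obstacle here is essentially absent --- the corollary is a direct specialization of the main theorem and the only work is a short arithmetic verification involving ceilings. The only point to be slightly careful about is the boundary case $k \in \{1,2\}$ and when $k$ is exactly a power of two, which the case split above handles cleanly.
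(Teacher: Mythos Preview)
Your proposal is correct and follows essentially the same approach as the paper: set $g=\lceil\log\frac{k+1}{2}\rceil$, observe that then $\lceil(\frac{k+1}{2})^{1/g}\rceil=2$, and read off the pass count and stretch from \Cref{thm:KWInPasses} (the paper likewise remarks that \Cref{thm:BSInPasses} coincides here). Your verification of the secondary inequality $\lceil\log\frac{k+1}{2}\rceil\le\log k$ via the power-of-two case split is in fact more careful than the paper, which simply asserts the bound; note that this inequality genuinely requires $k$ to be an integer (it can fail for real $k$ slightly below a power of two), so your implicit restriction to integer $k$ is appropriate.
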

\begin{proof}
	Set $g=\lceil\log\frac{k+1}{2}\rceil$, then using we are using $\lceil\log\frac{k+1}{2}\rceil+1=\lceil\log(k+1)\rceil$ passes while having stretch  $2\cdot(2^{\left\lceil (\frac{k+1}{2})^{\nicefrac{1}{g}}\right\rceil }-1)^{g}-1=2\cdot3^{\lceil\log\frac{k+1}{2}\rceil}-1\le2\cdot k^{\log3}-1$.
	
	Interestingly, using the same $g$ in \Cref{thm:BSInPasses}, we will also obtain the exact same result! Specifically, a spanner in $g\cdot\left(\left\lceil (\frac{k+1}{2})^{\nicefrac{1}{g}}\right\rceil -1\right)+1=\lceil\log\frac{k+1}{2}\rceil+1=\lceil\log(k+1)\rceil$
	 passes, with stretch $2\cdot\left(2\cdot\left\lceil (\frac{k+1}{2})^{\nicefrac{1}{g}}\right\rceil -1\right)^{g}-1=2\cdot3^{\lceil\log\frac{k+1}{2}\rceil}-1$.
	
	Following the proof of \Cref{thm:KWInPasses}, set $g=\lceil\log\frac{k+1}{2}\rceil\le\log k$. Then $\left\lceil (\frac{k+1}{2})^{\nicefrac{1}{g}}\right\rceil=2$.
	By \Cref{thm:KWInPasses} we obtain stretch $2\cdot(2^{\left\lceil (\frac{k+1}{2})^{\nicefrac{1}{g}}\right\rceil }-1)^{g}-1=2\cdot3^{\lceil\log\frac{k+1}{2}\rceil}-1\le2\cdot k^{\log3}-1$, while the number of passes is $g+1=\lceil\log(k+1)\rceil$. 
	
	Interestingly, using the same $g$ in \Cref{thm:BSInPasses}, we will also obtain the exact same result! Specifically, a spanner in $g\cdot r+1=g+1=\lceil\log(k+1)\rceil$ passes, while the stretch will be $2\cdot\left(2\cdot\left\lceil (\frac{k+1}{2})^{\nicefrac{1}{g}}\right\rceil -1\right)^{g}-1=2\cdot3^{g}-1$. 
\end{proof}

Note that for $k=3$, both \Cref{cor:BetterKW}, \Cref{cor:halfBS} and \Cref{cor:logKpasses} obtain the best stretch possible: $5$, while using only two passes.

\begin{corollary}\label{cor:3passes}
	There is a $3$ pass algorithm that obtains stretch $2\cdot(2^{\left\lceil \sqrt{\nicefrac{k+1}{2}}\right\rceil }-1)^{2}-1<2^{\sqrt{2k+2}+3}$.
\end{corollary}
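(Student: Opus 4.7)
The plan is to obtain this corollary as a direct specialization of Theorem~\ref{thm:KWInPasses} with the choice $g=2$. Indeed, the theorem promises a $g+1 = 3$ pass algorithm using $\tilde{O}(n^{1+1/k})$ space that returns a spanner with $\tilde{O}(n^{1+1/k})$ edges and stretch
\[
2\cdot\bigl(2^{\lceil ((k+1)/2)^{1/g}\rceil}-1\bigr)^{g}-1,
\]
which, upon substituting $g=2$, becomes exactly $2\cdot(2^{\lceil\sqrt{(k+1)/2}\rceil}-1)^{2}-1$, matching the leading expression in the statement.

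To finish, I would verify the closed-form upper bound $2^{\sqrt{2k+2}+3}$ by a short calculation. First, crudely bound $(2^{\lceil\sqrt{(k+1)/2}\rceil}-1)^{2} < 2^{2\lceil\sqrt{(k+1)/2}\rceil}$, so that
\[
2\cdot\bigl(2^{\lceil\sqrt{(k+1)/2}\rceil}-1\bigr)^{2}-1 \;<\; 2^{2\lceil\sqrt{(k+1)/2}\rceil+1}.
\]
Then use the elementary estimate $\lceil\sqrt{(k+1)/2}\rceil \le \sqrt{(k+1)/2}+1$ to conclude
\[
2\lceil\sqrt{(k+1)/2}\rceil + 1 \;\le\; 2\sqrt{(k+1)/2} + 3 \;=\; \sqrt{2k+2}+3,
\]
which yields the claimed bound.

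There is no real obstacle here; the only thing to be slightly careful about is that $g=2$ is a legal choice of parameter in Theorem~\ref{thm:KWInPasses}, i.e.\ that $2 \in [1,\log k]$, which simply requires $k \ge 4$ (for smaller $k$ the stated bound is already weak, and one can fall back on Corollary~\ref{cor:BetterKW} or a trivial construction). Thus the corollary is essentially a one-line instantiation of Theorem~\ref{thm:KWInPasses} followed by the elementary inequality above.
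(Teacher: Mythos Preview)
Your proposal is correct and follows exactly the paper's own proof: set $g=2$ in Theorem~\ref{thm:KWInPasses} to get a $3$-pass algorithm with the stated stretch, then bound the ceiling crudely to obtain the closed-form $2^{\sqrt{2k+2}+3}$. Your additional remark about the constraint $g\le \log k$ (hence $k\ge 4$) is a reasonable caveat that the paper glosses over.
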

\begin{proof}
	Fix $g=2$, then by \Cref{thm:KWInPasses} we have a $3$ pass algorithm with stretch $2\cdot(2^{\lceil \sqrt{\nicefrac{k+1}{2}}\rceil }-1)^{2}-1<2^{\sqrt{2k+2}+3}$.
\end{proof}

\begin{corollary}\label{cor:sqrtKpasses}
	\sloppy There is a $\sqrt{2(k+1)}+1$ pass algorithm that obtains stretch ${8\left\lceil \sqrt{\nicefrac{k+1}{2}}\right\rceil (\left\lceil \sqrt{\nicefrac{k+1}{2}}\right\rceil -1)+1}\approx4k$.
\end{corollary}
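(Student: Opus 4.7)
The plan is to apply \Cref{thm:BSInPasses} with the specific choice $g=2$ and then simplify the resulting expressions. First I would abbreviate $r := \left\lceil \sqrt{(k+1)/2}\right\rceil$, so that $\left\lceil ((k+1)/2)^{1/g}\right\rceil = r$ when $g=2$.

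Next, I would plug into the pass count formula of \Cref{thm:BSInPasses}: the number of passes becomes $g\cdot(r-1)+1 = 2(r-1)+1 = 2r-1$. Using the bound $r \le \sqrt{(k+1)/2}+1$ coming from the ceiling, this is at most $2\sqrt{(k+1)/2}+1 = \sqrt{2(k+1)}+1$, matching the claim.

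For the stretch, I would plug the same $g=2$ into the stretch formula of \Cref{thm:BSInPasses}, obtaining $2\cdot(2r-1)^2 - 1$. Expanding gives $2(4r^2-4r+1)-1 = 8r^2 - 8r + 1 = 8r(r-1)+1$, which upon substituting $r = \left\lceil\sqrt{(k+1)/2}\right\rceil$ is exactly the expression in the corollary. The $\approx 4k$ asymptotics follow since $r(r-1) \approx (k+1)/2$ for large $k$, so $8r(r-1) \approx 4(k+1)$.

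There is no genuine obstacle here: the corollary is a direct specialization of \Cref{thm:BSInPasses} at $g=2$. The only minor care required is the algebraic simplification $2(2r-1)^2-1 = 8r(r-1)+1$ and the handling of the ceiling when converting $r$ into an explicit bound on the number of passes; both are routine.
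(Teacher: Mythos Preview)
Your proposal is correct and matches the paper's own proof essentially line for line: both fix $g=2$ in \Cref{thm:BSInPasses}, obtain $2r-1$ passes and stretch $2(2r-1)^2-1=8r(r-1)+1$ with $r=\lceil\sqrt{(k+1)/2}\rceil$. Your handling of the ceiling to reach the stated $\sqrt{2(k+1)}+1$ pass bound is in fact slightly more explicit than the paper's.
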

\begin{proof}
	Fix $g=2$, then by \Cref{thm:BSInPasses} there is a $2\cdot\lceil \sqrt{\nicefrac{k+1}{2}}\rceil -1$ pass algorithm that obtains stretch
	$2\cdot(2\cdot\lceil\sqrt{\nicefrac{k+1}{2}}\rceil-1)^{2}-1=8\lceil\sqrt{\nicefrac{k+1}{2}}\rceil\cdot(\lceil\sqrt{\nicefrac{k+1}{2}}\rceil-1)+1$.
\end{proof}

\begin{table}[h]
	\begin{tabular}{|c|l|l|l|l|}
		\hline
		& Ref      & Space                          & \#Passes & Stretch    \\ \hline\hline
		\multirow{8}{*}{$k=7$}  & \cite{BS07}     & $\tilde{O}(n^{1+\frac17})$      & 7        & 13         \\ \cline{2-5} 
		& \Cref{cor:halfBS} & $\tilde{O}(n^{1+\frac17})$      & 4        & 13         \\\cline{2-5} 
		& \Cref{cor:sqrtKpasses}    & $\tilde{O}(n^{1+\frac17})$      & 3        & 17         \\ \cline{2-5} 		& \cite{AGM12Spanners} & $\tilde{O}(n^{1+\frac17})$      & 3        & 90         \\ \cline{2-5} 
		& \Cref{cor:logKpasses}    & $\tilde{O}(n^{1+\frac17})$      & 3        & 17         \\ \cline{2-5} 		
		& \Cref{cor:3passes}    & $\tilde{O}(n^{1+\frac17})$      & 3        & 17         \\ \cline{2-5} 
		& \cite{KW14}     & $\tilde{O}(n^{1+\frac17})$      & 2        & 127        \\ \cline{2-5} 
		& \Cref{cor:BetterKW}     & $\tilde{O}(n^{1+\frac17})$      & 2        & 29        \\ \hline\hline
		
		\multirow{8}{*}{$k=31$} & \cite{BS07}     & $\tilde{O}(n^{1+\frac{1}{31}})$ & 31       & 61         \\ \cline{2-5} 
		& \Cref{cor:halfBS} & $\tilde{O}(n^{1+\frac{1}{31}})$ & 16       & 61         \\ \cline{2-5} 
		& \Cref{cor:sqrtKpasses}    & $\tilde{O}(n^{1+\frac{1}{31}})$ & 9        & 97         \\ \cline{2-5} 
		& \cite{AGM12Spanners}    & $\tilde{O}(n^{1+\frac{1}{31}})$ & 5        & 2901         \\ \cline{2-5} 
		& \Cref{cor:logKpasses}    & $\tilde{O}(n^{1+\frac{1}{31}})$ & 5        & 161         \\ \cline{2-5} 				
		& \Cref{cor:3passes}    & $\tilde{O}(n^{1+\frac{1}{31}})$ & 3        & 449       \\ \cline{2-5} 
		& \cite{KW14}     & $\tilde{O}(n^{1+\frac{1}{31}})$ & 2        & $2^{31}-1$ \\ \cline{2-5} 
		& \Cref{cor:BetterKW}     & $\tilde{O}(n^{1+\frac{1}{31}})$ & 2        & $2^{17}-3$ \\ \hline\hline
		
		\multirow{8}{*}{$k=71$} & \cite{BS07}     & $\tilde{O}(n^{1+\frac{1}{71}})$ & 71       & $141\approx2^{7.1}$         \\ \cline{2-5} 
		& \Cref{cor:halfBS} & $\tilde{O}(n^{1+\frac{1}{71}})$ & 36       & $141\approx2^{7.1}$         \\ \cline{2-5} 
		& \Cref{cor:sqrtKpasses}    & $\tilde{O}(n^{1+\frac{1}{71}})$ & 13        & $241\approx2^{7.9}$         \\ \cline{2-5} 
		& \cite{AGM12Spanners}    & $\tilde{O}(n^{1+\frac{1}{71}})$ & 7        & $19882\approx2^{14.3}$         \\ \cline{2-5} 
		& \Cref{cor:logKpasses}    & $\tilde{O}(n^{1+\frac{1}{71}})$ & 7        & $1457<2^{10.5}$        \\ \cline{2-5} 				
		& \Cref{cor:3passes}    & $\tilde{O}(n^{1+\frac{1}{71}})$ & 3        & $7937<2^{13}$       \\ \cline{2-5} 
		& \cite{KW14}     & $\tilde{O}(n^{1+\frac{1}{71}})$ & 2        & $2^{71}-1$ \\ \cline{2-5} 
		& \Cref{cor:BetterKW}     & $\tilde{O}(n^{1+\frac{1}{71}})$ & 2        & $2^{37}-3$ \\ \hline		
	\end{tabular}
	\caption{An illustration of various trade-offs between stretch to the number of passes, for $k=7,31,71$ achieved by different algorithms while using the same space. The parameter $7,31,71$ were chosen to be representatives so that $\sqrt{\frac{k+1}{2}}$ will be an integer.}\label{tab:SpecialCases}
\end{table}

{\small
	\bibliographystyle{alphaurlinit}
	\bibliography{bibArnold}
}
\newpage

\appendix
{\Huge Appendix}
\section{Conjectured hard input distribution}\label{sec:lb-conj}

\newcommand{\D}{\mathcal{D}}
Let $\pi:[n]\to [n]$ be a uniformly random permutation. Let $d>1$ be an integer parameter. Define the distribution $\D'$ on graphs $G=(V, E)$, $V=[n]$ as follows. For every pair $(i, j)\in [n]$ such that $\| \pi(i)-\pi(j)\|_\circ\leq d$ include an edge $(i, j)$ in $E$ with probability $1/2$, where $\|i-j\|_\circ$ is the circular distance on a cycle of length $n$. Define the distribution $\D$ on graphs $G=(V, E)$ as follows. First sample $G'=(V, E')\sim \D'$, and pick two edges $(a, b), (c, d)\sim \mathrm{Unif}(E')$ independently without replacement, and let 
$$
E=(E'\cup \{(a, c), (b, d)\})\setminus \{(a, b), (c, d)\}.
$$ 
Let $G=(V, E)$ be a sample from $\D$. Note that with constant probability over the choice of $G\sim \D$ one has that the distance between $G$ from $a$ to $c$ in  $E\setminus \{(a, c), (b, d)\}$ is $\Omega(n/d)$ and the distance between $c$ and $d$ in $E\setminus \{(a, c), (b, d)\}$ is $\Omega(n/d)$ (see \Cref{fig:GGp} for an illustration). Thus, every $k$-spanner with $k\ll n/d$ must contain both of these edges.  We conjecture that recovering these edges from a linear sketch of the input graph $G$ sampled from $\D$ requires $n^{1+\Omega(1)}$ space when $d=n^{1/3+\Omega(1)}$. Note that the diameter of the graph is (up to polylogarithmic factors) equals $n/d$, and hence this would in  particular imply that obtaining an $n^{2/3-\Omega(1)}$ spanner using a linear sketch requires $n^{1+\Omega(1)}$ bits of space, and therefore imply \Cref{conj:1}.

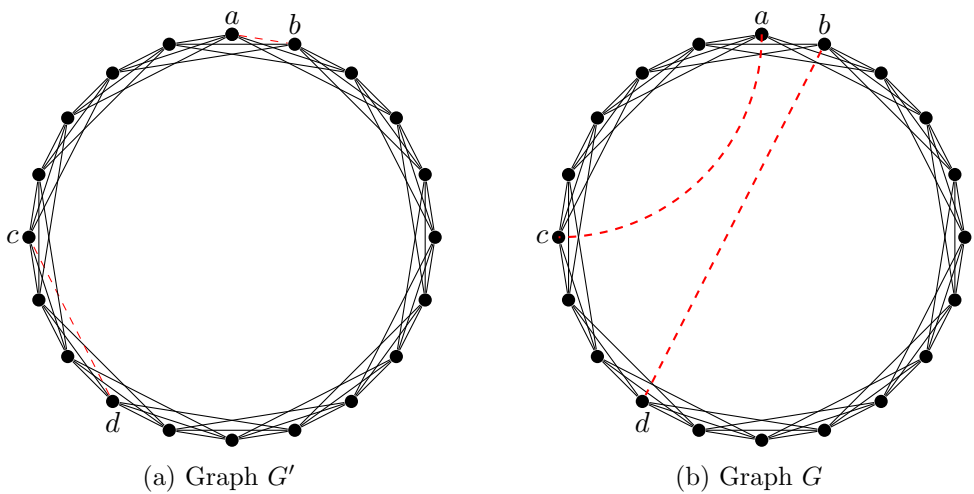
\begin{figure}[H]
	\begin{subfigure}[t]{0.4\textwidth}
		\centering
		\tikzstyle{vertex}=[circle, fill=black!70, minimum size=3,inner sep=1pt]
		\tikzstyle{svertex}=[circle, fill=black!100, minimum size=5,inner sep=1pt]
		\tikzstyle{gvertex}=[circle, fill=green!80, minimum size=7,inner sep=1pt]
		
		\tikzstyle{evertex}=[circle,draw=none, minimum size=25pt,inner sep=1pt]
		\tikzstyle{edge} = [draw,-, color=red!100, very  thick]
		\tikzstyle{bedge} = [draw,-, color=green2!100, very  thick]
		\begin{tikzpicture}[scale=2.7, auto,swap]
		
		\node[svertex] (v0) at ({sin(0)},{cos(0)}){};
		\node[svertex] (v1) at ({sin(18)},{cos(18)}){};
		\node[svertex] (v2) at ({sin(2*18)},{cos(2*18)}){};
		\node[svertex] (v3) at ({sin(3*18)},{cos(3*18)}){};
		\node[svertex] (v4) at ({sin(4*18)},{cos(4*18)}){};
		\node[svertex] (v5) at ({sin(5*18)},{cos(5*18)}){};
		\node[svertex] (v6) at ({sin(6*18)},{cos(6*18)}){};
		\node[svertex] (v7) at ({sin(7*18)},{cos(7*18)}){};
		\node[svertex] (v8) at ({sin(8*18)},{cos(8*18)}){};
		\node[svertex] (v9) at ({sin(9*18)},{cos(9*18)}){};
		\node[svertex] (v10) at ({sin(10*18)},{cos(10*18)}){};
		\node[svertex] (v11) at ({sin(11*18)},{cos(11*18)}){};
		\node[svertex] (v12) at ({sin(12*18)},{cos(12*18)}){};
		\node[svertex] (v13) at ({sin(13*18)},{cos(13*18)}){};
		\node[svertex] (v14) at ({sin(14*18)},{cos(14*18)}){};
		\node[svertex] (v15) at ({sin(15*18)},{cos(15*18)}){};
		\node[svertex] (v16) at ({sin(16*18)},{cos(16*18)}){};
		\node[svertex] (v17) at ({sin(17*18)},{cos(17*18)}){};
		\node[svertex] (v18) at ({sin(18*18)},{cos(18*18)}){};
		\node[svertex] (v19) at ({sin(19*18)},{cos(19*18)}){};
		\node[svertex] (v20) at ({sin(20*18)},{cos(20*18)}){};
		
		\draw[dashed,red] (v0)--(v1);
		\draw (v1)--(v2);
		\draw (v2)--(v3);
		\draw (v3)--(v4);
		\draw (v4)--(v5);
		\draw (v5)--(v6);
		\draw (v6)--(v7);
		\draw (v7)--(v8);
		\draw (v8)--(v9);
		\draw (v9)--(v10);
		\draw (v10)--(v11);
		\draw (v11)--(v12);
		\draw (v12)--(v13);
		\draw (v13)--(v14);
		\draw (v14)--(v15);
		\draw (v15)--(v16);
		\draw (v16)--(v17);
		\draw (v17)--(v18);
		\draw (v18)--(v19);
		\draw (v19)--(v20);

		\draw (v1)--(v3);
		\draw (v2)--(v4);
		\draw (v3)--(v5);
		\draw (v4)--(v6);
		\draw (v5)--(v7);
		\draw (v6)--(v8);
		\draw (v7)--(v9);
		\draw (v8)--(v10);
		\draw (v9)--(v11);
		\draw (v10)--(v12);
		\draw (v11)--(v13);
		\draw (v12)--(v14);
		\draw (v13)--(v15);
		\draw (v14)--(v16);
		\draw (v15)--(v17);
		\draw (v16)--(v18);
		\draw (v17)--(v19);
		\draw (v18)--(v20);
		\draw (v19)--(v1);
		\draw (v20)--(v2);

		\draw (v1)--(v4);
		\draw (v2)--(v5);
		\draw (v3)--(v6);
		\draw (v4)--(v7);
		\draw (v5)--(v8);
		\draw (v6)--(v9);
		\draw (v7)--(v10);
		\draw (v8)--(v11);
		\draw (v9)--(v12);
		\draw (v10)--(v13);
		\draw (v11)--(v14);
		\draw[dashed,red] (v12)--(v15);
		\draw (v13)--(v16);
		\draw  (v14)--(v17);
		\draw (v15)--(v18);
		\draw (v16)--(v19);
		\draw (v17)--(v20);
		\draw (v18)--(v1);
		\draw (v19)--(v2);
		\draw (v20)--(v3);
		
		\draw (v0) [above] node {{$a$}};
		\draw (v1) [above] node {{$b$}};
		\draw (v15) [left] node {{$c$}};
		\draw (v12) [below] node {{$d$}};

		\end{tikzpicture}
		\subcaption{Graph $G'$}
	\end{subfigure}
	\hspace{0cm}
	\begin{subfigure}[t]{0.4\textwidth}
		\centering
		\tikzstyle{vertex}=[circle, fill=black!70, minimum size=3,inner sep=1pt]
		\tikzstyle{svertex}=[circle, fill=black!100, minimum size=5,inner sep=1pt]
		\tikzstyle{gvertex}=[circle, fill=green!80, minimum size=7,inner sep=1pt]
		
		\tikzstyle{evertex}=[circle,draw=none, minimum size=25pt,inner sep=1pt]
		\tikzstyle{edge} = [draw,-, color=red!100, very  thick]
		\tikzstyle{bedge} = [draw,-, color=green2!100, very  thick]
		\begin{tikzpicture}[scale=2.7, auto,swap]
		
		\node[svertex] (v0) at ({sin(0)},{cos(0)}){};
		\node[svertex] (v1) at ({sin(18)},{cos(18)}){};
		\node[svertex] (v2) at ({sin(2*18)},{cos(2*18)}){};
		\node[svertex] (v3) at ({sin(3*18)},{cos(3*18)}){};
		\node[svertex] (v4) at ({sin(4*18)},{cos(4*18)}){};
		\node[svertex] (v5) at ({sin(5*18)},{cos(5*18)}){};
		\node[svertex] (v6) at ({sin(6*18)},{cos(6*18)}){};
		\node[svertex] (v7) at ({sin(7*18)},{cos(7*18)}){};
		\node[svertex] (v8) at ({sin(8*18)},{cos(8*18)}){};
		\node[svertex] (v9) at ({sin(9*18)},{cos(9*18)}){};
		\node[svertex] (v10) at ({sin(10*18)},{cos(10*18)}){};
		\node[svertex] (v11) at ({sin(11*18)},{cos(11*18)}){};
		\node[svertex] (v12) at ({sin(12*18)},{cos(12*18)}){};
		\node[svertex] (v13) at ({sin(13*18)},{cos(13*18)}){};
		\node[svertex] (v14) at ({sin(14*18)},{cos(14*18)}){};
		\node[svertex] (v15) at ({sin(15*18)},{cos(15*18)}){};
		\node[svertex] (v16) at ({sin(16*18)},{cos(16*18)}){};
		\node[svertex] (v17) at ({sin(17*18)},{cos(17*18)}){};
		\node[svertex] (v18) at ({sin(18*18)},{cos(18*18)}){};
		\node[svertex] (v19) at ({sin(19*18)},{cos(19*18)}){};
		\node[svertex] (v20) at ({sin(20*18)},{cos(20*18)}){};
		
		\draw (v1)--(v2);
		\draw (v2)--(v3);
		\draw (v3)--(v4);
		\draw (v4)--(v5);
		\draw (v5)--(v6);
		\draw (v6)--(v7);
		\draw (v7)--(v8);
		\draw (v8)--(v9);
		\draw (v9)--(v10);
		\draw (v10)--(v11);
		\draw (v11)--(v12);
		\draw (v12)--(v13);
		\draw (v13)--(v14);
		\draw (v14)--(v15);
		\draw (v15)--(v16);
		\draw (v16)--(v17);
		\draw (v17)--(v18);
		\draw (v18)--(v19);
		\draw (v19)--(v20);

		\draw (v1)--(v3);
		\draw (v2)--(v4);
		\draw (v3)--(v5);
		\draw (v4)--(v6);
		\draw (v5)--(v7);
		\draw (v6)--(v8);
		\draw (v7)--(v9);
		\draw (v8)--(v10);
		\draw (v9)--(v11);
		\draw (v10)--(v12);
		\draw (v11)--(v13);
		\draw (v12)--(v14);
		\draw (v13)--(v15);
		\draw (v14)--(v16);
		\draw (v15)--(v17);
		\draw (v16)--(v18);
		\draw (v17)--(v19);
		\draw (v18)--(v20);
		\draw (v19)--(v1);
		\draw (v20)--(v2);

		\draw (v1)--(v4);
		\draw (v2)--(v5);
		\draw (v3)--(v6);
		\draw (v4)--(v7);
		\draw (v5)--(v8);
		\draw (v6)--(v9);
		\draw (v7)--(v10);
		\draw (v8)--(v11);
		\draw (v9)--(v12);
		\draw (v10)--(v13);
		\draw (v11)--(v14);
		\draw (v13)--(v16);
		\draw (v14)--(v17);
		\draw (v15)--(v18);
		\draw (v16)--(v19);
		\draw (v17)--(v20);
		\draw (v18)--(v1);
		\draw (v19)--(v2);
		\draw (v20)--(v3);

		\draw (v0) [above] node {{$a$}};
		\draw (v1) [above] node {{$b$}};
		\draw (v15) [left] node {{$c$}};
		\draw (v12) [below] node {{$d$}};

		\draw[dashed,red,thick] (v0)  arc (0:-90:1);
		
		\draw[dashed,red,thick] (v1)  -- (v12);

		\end{tikzpicture}
		\subcaption{Graph $G$}
	\end{subfigure}
	\caption{Illustration of the conjectured hard input distribution} \label{fig:GGp}
\end{figure}

\section{Omitted Proofs}\label{sec:omitted}
\subsection{Omitted proofs from \Cref{sec:prelims}}\label{sec:omitted-2}
\begin{proof}[{\bf Proof of \Cref{lem:SubsetSampling}}]
		First, we start by the following well-known fact about $\ell_0$-sampling sketching algorithms. 
		\begin{fact}[See e.g., \cite{DBLP:conf/pods/JowhariST11,DBLP:conf/focs/KapralovNPWWY17}]\label{fact:ellzero}
			For any vector $\vec{a}\in \R^{n}$ that 
			\begin{itemize}
				\item receives coordinate updates in a dynamic stream,
				\item and each entry is bounded by $O(\poly(n))$,
			\end{itemize} one can design an $\ell_0$-sampler procedure, which succeeds with high probability, by storing a vector $\vec{b}\in \R^{\polylog(n)}$, where 
			\begin{itemize}
				\item (Bounded entries) each entry of $\vec{b}$ is bounded by $O(\poly(n))$,
				\item (Linearity) there exists a matrix $\Pi$ (called \emph{sketching matrix}) such that $\vec{b} = \Pi \cdot \vec{a}$.
			\end{itemize}
		\end{fact}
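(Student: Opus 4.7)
The plan is to reconstruct the standard $\ell_0$-sampler via geometric subsampling combined with $1$-sparse recovery sketches, taking care to verify that every intermediate quantity is a linear function of $\vec{a}$ with polynomially bounded entries. First I would pick a $k$-wise independent hash function $h:[n]\to[n]$ for $k=\Theta(\log n)$, and for each level $j\in\{0,1,\ldots,\lceil\log n\rceil\}$ define the subsampled vector $\vec{a}^{(j)}$ with $\vec{a}^{(j)}_i = \vec{a}_i$ if $h(i) < n/2^j$ and $0$ otherwise. Since conditioning on $h$ only zeroes out coordinates, each $\vec{a}^{(j)}$ is still a linear image of $\vec{a}$; moreover, once $h$ is fixed (shared randomness, stored in $\polylog(n)$ seed space), the projection is deterministic and therefore representable by a matrix. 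A standard ball-and-bins argument shows that with constant probability there is some level $j^\star$ at which $\vec{a}^{(j^\star)}$ has exactly one non-zero entry.

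Next I would describe the $1$-sparse recovery gadget that runs in parallel at each level. For each $\vec{a}^{(j)}$, maintain the three linear sketches $S_1^{(j)} = \sum_i \vec{a}^{(j)}_i$, $S_2^{(j)} = \sum_i i\cdot \vec{a}^{(j)}_i$, and a verification sketch $S_3^{(j)} = \sum_i \vec{a}^{(j)}_i \cdot r^i \bmod p$, where $r$ is a uniformly random element of a prime field $\mathbb{F}_p$ with $p = \Theta(\poly(n))$. Under coordinate updates these are maintained trivially by addition. If $\vec{a}^{(j)}$ is truly $1$-sparse with non-zero coordinate $i^\star$ of value $v^\star$, then $v^\star = S_1^{(j)}$ and $i^\star = S_2^{(j)}/S_1^{(j)}$, while consistency of $S_3^{(j)}$ with $v^\star r^{i^\star}$ certifies success; a Schwartz--Zippel argument bounds the false-positive probability by $n/p = 1/\poly(n)$. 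To amplify the constant success probability to high probability, I would repeat the whole construction (with independent hash functions and independent $r$'s) $\Theta(\log n)$ times in parallel and output any successfully decoded sample.

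The total sketch vector $\vec{b}$ concatenates $O(\log n)$ levels across $O(\log n)$ repetitions of three scalar counters each, so its dimension is $\polylog(n)$. Linearity is preserved throughout: every counter is a fixed linear combination of the $\vec{a}_i$'s with coefficients that are either $0$, $1$, indices in $[n]$, or powers $r^i$ in $\mathbb{F}_p$, so stacking them defines the desired sketching matrix $\Pi$ with $\vec{b}=\Pi\vec{a}$. Bounded entries follow because the input coordinates are at most $\poly(n)$, coefficients are at most $\poly(n)$, and the number of summands is $n$, all of which multiply to $\poly(n)$ (and the fingerprint counters live in $\mathbb{F}_p$ with $p=\poly(n)$).

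The main obstacle, and the part I would spend the most care on, is the probabilistic isolation argument: one needs to show that with high probability at least one subsampling level isolates a single non-zero coordinate, and simultaneously that the decoded coordinate is distributed close to uniformly over $\mathrm{supp}(\vec{a})$. Standard treatments handle this via the $k$-wise independence of $h$ and a truncated second-moment calculation on the number of surviving non-zeroes per level; the $\Theta(\log n)$-parallel repetition then boosts success to $1-1/\poly(n)$. I would cite \cite{DBLP:conf/pods/JowhariST11,DBLP:conf/focs/KapralovNPWWY17} for the detailed quantitative bounds, since the fact is standard and the contribution of this paper is downstream.
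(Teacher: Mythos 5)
The paper does not prove this statement at all: it is stated as a Fact and deferred entirely to the cited references, so there is no in-paper argument to compare against. Your proposal is a correct reconstruction of the standard construction in those references (geometric subsampling with a limited-independence hash, a $1$-sparse recovery gadget per level consisting of the sum, the index-weighted sum, and a polynomial fingerprint over $\mathbb{F}_p$ verified via Schwartz--Zippel, and $\Theta(\log n)$ independent repetitions to boost the constant isolation probability to $1-1/\poly(n)$), and your checks of linearity and of the $\poly(n)$ bound on every counter are exactly the two properties the Fact asserts. Two small remarks: the fingerprint is what saves you when the decoder $i^\star = S_2^{(j)}/S_1^{(j)}$ is run on a level that is not actually $1$-sparse (including the degenerate case $S_1^{(j)}=0$ with a nonzero vector), so it is worth stating explicitly that a failed verification at every level simply causes that repetition to abstain; and the near-uniformity of the returned coordinate, which you flag as the delicate part, is not actually needed for this paper's application --- the authors note in a footnote that merely recovering \emph{some} non-zero coordinate suffices for \Cref{lem:SubsetSampling}.
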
 
		Now, we use \Cref{fact:ellzero} to prepare a data structure for $\ell_0$ sampling of $A_i$ for each $i\in[r]$.\footnote{Note that we do not need to sample uniformly over the non-zero entries, and just recovering a non-zero element is enough for our purpose, however, we still use a $\ell_0$-sampling procedure.} Thus, we are going to have $r$ vectors, $\vec{b}_1,\ldots, \vec{b}_r$, where for each $i\in [r]$ the entries of $\vec{b}_i$ are bounded by $O(\poly(n))$. However, our space is limited to $s\cdot \polylog n$, so we cannot store these vectors. Define $\vec{b}$ as concatenation of $\vec{b}_1,\ldots,\vec{b}_r$. At this point, the reader should note that by assumption $|\mathcal{I}|\le s$, which implies that at most $s\cdot  \polylog (n)$ of entries of $\vec{b}$ are non-zero at the end of the stream.\footnote{However, at some time during the stream, you may have more than $s\cdot\polylog(n)$ non-zeroes, so it is not possible to store $\vec{b}$ explicitly.} Now, we need to prepare a $\wt{O}(s)$-sparse recovery primitive for $\vec{b}$. We apply the following well-known fact about sparse recovery sketching algorithms. 
		\begin{fact}[Sparse recovery]\label{fact:s-sparse}
			For any vector $\vec{b}\in \R^{n}$ that 
			\begin{itemize}
				\item receives coordinate updates in a dynamic stream,
				\item and each entry is bounded by $O(\poly(n))$,
			\end{itemize} one can design a $s$-sparse recovery sketching procedure, by storing a vector $\vec{w}\in \R^{s\cdot\polylog(n)}$, where 
			\begin{itemize}
				\item (Bounded entries) each entry of $\vec{w}$ is bounded by $O(\poly(n))$,
				\item (Linearity) there exists a matrix $\Pi$ (called \emph{sketching matrix}), where $\vec{w} = \Pi \cdot \vec{b}$,
			\end{itemize}
		such that if $\mathrm{Support}(\vec{b})\le s$, then it can recover all non-zero entries of $\vec{b}$.
		\end{fact}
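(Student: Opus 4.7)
The plan is to implement the by-now-standard exact $s$-sparse recovery sketch based on hashing coordinates into $O(s)$ buckets together with an algebraic fingerprint that certifies ``lonely'' buckets. The final sketch will be a fixed linear map $\Pi$ whose entries lie in $\mathbb{F}_p$ for a prime $p = \mathrm{poly}(n)$, which in particular keeps each coordinate of $\vec{w}$ polynomially bounded as required. Since the sketch is purely linear, dynamic insertions/deletions are handled for free: only the final vector $\vec{b}$ matters, even though the intermediate stream may be much denser than $s$.

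The construction I have in mind is as follows. Pick a pairwise-independent hash $h:[n]\to[B]$ with $B = \Theta(s)$ and a separate fingerprint hash $\phi:[n]\to\mathbb{F}_p$ from a $3$-wise independent family. For each bucket $b\in[B]$ maintain three linear combinations of the coordinates hashed to it: $\sigma_0(b) = \sum_{i:\,h(i)=b} b_i$, $\sigma_1(b) = \sum_{i:\,h(i)=b} i\cdot b_i$, and $\sigma_F(b) = \sum_{i:\,h(i)=b} \phi(i)\cdot b_i$. To decode, iterate over buckets: if $\sigma_0(b)\neq 0$, compute the candidate index $i^\star = \sigma_1(b)/\sigma_0(b)$ and accept it (with value $\sigma_0(b)$) provided $i^\star\in[n]$ is an integer and the fingerprint check $\sigma_F(b) = \phi(i^\star)\cdot\sigma_0(b)$ passes. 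A truly lonely bucket passes this check deterministically; a non-lonely bucket passes with probability $O(1/p)$ over $\phi$, which a union bound over $B$ buckets kills. By pairwise independence of $h$ and $|\mathrm{supp}(\vec{b})|\le s$, a standard second-moment computation shows that a constant fraction of the at most $s$ nonzero coordinates land in lonely buckets; iterating this block $O(\log s)$ times in a ``peeling'' fashion, each time subtracting the already-recovered items from the sketches by linearity, then recovers every nonzero with constant probability. Running $O(\log n)$ independent copies of the whole construction boosts the success probability to $1-1/\mathrm{poly}(n)$. The total space is $O(s)\cdot O(\log s)\cdot O(\log n)\cdot O(\log n) = s\cdot\mathrm{polylog}(n)$ bits, with every sketch entry bounded by $p = \mathrm{poly}(n)$.

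The step I expect to be the most delicate is not the sketch design but establishing that the purity test is reliable: a naive purely-algebraic test such as $\sigma_1^2 = \sigma_0\cdot\sigma_2$ on second moments can in principle fail for adversarially chosen real-valued inputs with cancellations, whereas the random fingerprint $\phi$ turns the check into a proper Schwartz--Zippel-style argument with failure probability $O(1/p)$ per bucket. Once the fingerprint is in place the correctness analysis reduces to straightforward union bounds, and the space bound follows by counting buckets and copies. Alternative implementations (Reed--Solomon style non-adaptive schemes, or $O(s)$-wise independent single-layer constructions) give the same guarantee and correspond to the references the paper cites for this fact.
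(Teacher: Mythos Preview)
The paper does not prove this statement at all: it is stated as a black-box \emph{Fact} from the sketching literature, invoked inside the proof of \Cref{lem:SubsetSampling} with no accompanying argument or even citation. So there is nothing to compare against on the paper's side.

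Your construction is the standard hashing-plus-fingerprint sparse recovery scheme and is essentially correct. Two small points worth tightening. First, when you say ``iterating this block $O(\log s)$ times in a peeling fashion,'' make explicit that these are $O(\log s)$ \emph{independent} hash layers fixed in advance (so that $\Pi$ is a single fixed linear map), and that peeling proceeds by subtracting recovered coordinates from the \emph{other} layers; as written it could be read as re-using a single layer, which would not work. Second, working over $\mathbb{F}_p$ implicitly assumes integer-valued entries; this is the intended setting in the dynamic-stream model (coordinate updates are integers and the final magnitudes are $\mathrm{poly}(n)$), but it is worth saying so, since the statement as phrased allows $\vec{b}\in\mathbb{R}^n$.
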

		Using \Cref{fact:s-sparse}, we can store a sketch of $\vec{b}$, in $\wt{O}(s)$ bits of space and recover the non-zero entries of $\vec{b}$ at the end of the stream, which in turn recovers a non-zero element from each $A_i$, in $\vec{v}$. In other words, one can see this procedure as the following linear operation
		\begin{align*}
		\vec{w} = \Pi_1 \Pi_2 \vec{v}
		\end{align*}
		where matrix $\Pi_2\in \R^{r\cdot\polylog(n)\times n}$ is in charge of $\ell_0$ sampling for each $A_i$ and concatenation of vectors, and $\Pi_1\in \R^{(s\cdot \polylog(n))\times(r\cdot \polylog(n))}$ is responsible for the sparse recovery procedure. 

\end{proof}

\begin{proof}[{\bf Proof of \Cref{lem:RecoverSingleEdge}}]
	Let vector $\vec{b}$ be an indicator vector for edges in $A\times B$, i.e., each entry corresponds to a pair of vertices in $A\times B$ and is $1$ if the edge is in the graph, and is $0$ otherwise. Now, by applying \Cref{fact:ellzero} to this vector, one can recover an edge using space $O(\polylog(n))$. Also, using \Cref{fact:s-sparse} with $s=m$, we can recover all $m$ edges using space $m\cdot \polylog(n)$. 
\end{proof}

\subsection{Omitted proofs from \Cref{sec:stream}}\label{sec:omitted-3}
\begin{proof}[{\bf Proof of \Cref{thm:SpectralSparsAreSpannersEdges}}]
	Consider an edge $(u,v)\in E$. Similarly to the proof of \Cref{thm:SpectralSparsAreSpanners}, fix $s:=d_{\hat{H}}(u,v)$, and $A_{i}:=\{z\in V\mid d_{\hat{H}}(v,z)=i\}$  for $i\in [0,s-1]$ be
	all the vertices at distance $i$ from $v$ in $\hat{H}$. Set $A_{s}:=\{z\in V\mid d_{\hat{H}}(v,z)\ge s\}$
	to be all the vertices at distance at least $s$ from $v$.
	In addition set $W_{i}^{H}=w_H(A_i\times A_{i+1})$ and $W_{i}^{G}=w_G(A_i\times A_{i+1})$. 
	Also recall that $W_{-1}^{H}=W_{-1}^{G}=W_{s}^{H}=W_{s}^{G}=0$.
	
	We will follow steps similar to those in \Cref{claim:sumWH}. 
	Set 
	\begin{align}\label{eq:alpha2} \alpha=\Theta(\eps\log n)\text{ such that } \alpha\ge10\log_{\frac{1}{6\epsilon}}\frac{2s}{\alpha}
	,\end{align} and $I=\left\{ i\in[0,s-1]\mid W_{i}^{G}\le\frac{\alpha m}{s}\right\} $.
	It holds that $|I|\ge\left(1-\frac{1}{\alpha}\right)s+1$, as otherwise
	there are more than $\frac{s}{\alpha}$ indices $i$ for which $W_{i}^{G}>\frac{\alpha m}{s}$,
	implying $\sum_{i}W_{i}^{G}>m$, a contradiction, since $\{W_{i}^{G}\}_i$ represent the number of elements in disjoint sets of edges. Set $$\wt{I}=\left\{ i\mid\text{such that }\forall j,\,|i-j|\le\frac{\alpha}{10}\text{ it holds that }j\in I\right\}.$$
	Then there are less than $\frac{s}{\alpha}\cdot\frac{2\alpha}{10}<\frac{s}{2}$
	indices out of $\wt{I}$, implying 
	\begin{align}\label{eq:tildeI2}
	\left|\wt{I}\right|\ge\frac{s}{2}.
	\end{align}
	
	For any index $i\in\wt{I}$ and any index $j\in \left[i-\frac{\alpha}{10},i+\frac{\alpha}{10}-1\right]$, by \Cref{claim:CutHG}, $$W_{i-1}^{H}+W_{i}^H+W_{i+1}^{H}\ge\frac{1}{\epsilon}\left(W_{i}^{H}-W_{i}^{G}\right)\ge\frac{1}{\epsilon}\left(W_{i}^{H}-\frac{\alpha m}{s}\right).$$
	Assume for contradiction that $W_{i}^{H}>2\cdot\frac{\alpha m}{s}$.
	Then, 
	\[
	W_{i-1}^{H}+W_{i}^H+W_{i+1}^{H}>\frac{1}{\epsilon}\left(\frac{\alpha m}{s}-\frac{\alpha m}{s}\right)=\frac{1}{\epsilon}\cdot\frac{\alpha m}{s}~.
	\]
	Let $i_{1}\in\left\{ i-1,i,i+1\right\} $ such that $W_{i_{1}}^{H}\ge\frac{1}{3\epsilon}\cdot \frac{\alpha m}{s}\ge \frac{1}{6\epsilon}\cdot \frac{\alpha m}{s}$.
	Using the same argument, 
	\[
	W_{i_{1}-1}^{H}+W_{i_{1}}^{H}+W_{i_{1}+1}^{H}\ge\frac{1}{\epsilon}\left(W_{i_{1}}^{H}-\frac{\alpha m}{s}\right)>\frac{1}{2\epsilon}\cdot\frac{1}{6\epsilon}\cdot\frac{\alpha m}{s}~.
	\]
	Choose $i_{2}\in\left\{ i_{1}-1,i_{1},i_{1}+1\right\} $ such that
	$W_{i_{2}}^{H}>\frac{1}{(6\epsilon)^{2}}\cdot\frac{\alpha m}{s}$.
	As $i\in\wt{I}$, we can continue this process for $\frac{\alpha}{10}$
	steps, where in the $j$ step we have $W_{i_{j}}^{H}>\frac{1}{(6\epsilon)^{j}}\cdot\frac{\alpha m}{s}$.
	In particular $$W_{i_{\frac{\alpha}{10}}}^{H}>\left(6\epsilon\right)^{-\frac{\alpha}{10}}\frac{\alpha m}{s}\ge2m,$$ a contradiction, as $H$
	is an $(1\pm\epsilon)$-spectral sparsifier of the unweighted graph $G$,
	where the maximal size of a cut is $m$. We conclude
	that for every $i\in\wt{I}$ it holds that $W_{i}^{H}\le2\cdot\frac{\alpha m}{s}$.
	It follows that 
	\begin{align}\label{eq:sumWiH}
	\sum_{i=0}^{s-1}\frac{1}{W_{i}^{H}}&\ge\left|\wt{I}\right|\cdot\frac{s}{2\alpha m}\nonumber\\
	&\ge\frac{s^{2}}{4\alpha m}&&\text{By \Cref{eq:tildeI2}}\nonumber\\
	&=\wt{\Omega}\left(\frac{s^{2}}{m}\right)&&\text{By setting of $\alpha$ in \Cref{eq:alpha2}}
	\end{align}
	
	Construct an auxiliary graph $H'$ from $H$, by  contracting all the vertices inside each set $A_{i}$, and keeping multiple edges. Note that by this operation, the effective resistance between $u$ and $v$ can only decrease.
	The graph $H'$ is a path graph consisting of $s$ vertices, where the conductance between the $i$'th vertex to the $i+1$'th is $W_{i}^H$. We conclude
	\begin{align}
	(1+\epsilon)R_{u,v}^{G} & \ge R_{u,v}^{H}&&\text{By \Cref{fact:EffSparse}}\nonumber\\
	&\ge R_{u,v}^{H'} &&\text{As explained above}\nonumber\\
	&=\sum_{i=0}^{s-1}\frac{1}{W_{i}^{H}} &&\text{Since $H'$ is a path graph}\nonumber\\
	&=\wt{\Omega}\left(\frac{s^{2}}{m}\right)&&\text{By \Cref{eq:sumWiH}}\label{eq:RGboundUsingM}
	\end{align}
	As $u,v$ are neighbors in the unweighted graph $G$, it necessarily holds that $R_{u,v}^{G}\le 1$, implying that $s=\wt{O}\left(\sqrt{m}\right)$.
\end{proof}

\section{Baswana-Sen \cite{BS07} spanner}\label{sec:altBS}
Originally Baswana and Sen constructed $2k-1$ spanners with $\tilde{O}(n^{1+\frac1k})$ edges in the sequential setting. Assuming Erd\H{o}s girth conjecture, this construction is optimal up to second order terms.
Ahn \etal \cite{AGM12Spanners} adapted the spanner of \cite{BS07} to the dynamic-stream framework using $\tilde{O}(n^{1+\frac1k})$ space and $k$ passes. 
We begin this section with the sequential algorithm of \cite{BS07}. Then, we will provide it's streaming implementation by \cite{AGM12Spanners}, with a proof sketch. Afterwards, we will state the clustering  \Cref{lem:BS_clusters} that follows from the analysis of this algorithm, with some discussion.
Interestingly, for odd integers $k$, in \Cref{cor:halfBS}, using the same clustering technique we obtain a spanner with the same performance as \cite{BS07}, while using only half the number of passes.

\begin{theorem}[\cite{BS07}+\cite{AGM12Spanners}]\label{thm:BS}
	Given an integer $k\ge 1$, there is a $k$-pass algorithm, that given the edges of an $n$-vertex graph in a dynamic stream fashion, using  $\tilde{O}(n^{1+\frac1k})$ space, w.h.p. constructs a $2k-1$-spanner with $\tilde{O}(n^{1+\frac1k})$ edges.
\end{theorem}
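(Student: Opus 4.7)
The plan is to use \Cref{lem:BS_clusters} as a black box with parameters $p = n^{-1/k}$ and $i = k-1$, and then in the final ($k$-th) pass simultaneously recover one representative inter-cluster edge from each vertex to each cluster in the resulting partition $\mathcal{P}$. Concretely, the clustering lemma gives us, in $k$ passes and $\tilde O(n^{1+1/k})$ space, a subgraph $H_{\mathrm{clust}}$ with $\tilde O(n^{1+1/k})$ edges and a partial partition $\mathcal{P}$ of $V$ that is fully determined after pass $k-1$. Since $|\mathcal{P}| \sim B(n, p^{k-1}) = B(n, n^{-(k-1)/k})$, a Chernoff bound shows $|\mathcal{P}| = \tilde O(n^{1/k})$ with high probability, so we may assume this bound holds in what follows.

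In the $k$-th pass (which coincides with the $(i+1)$-th pass of the clustering subroutine), and knowing $\mathcal{P}$, I would apply \Cref{lem:SubsetSampling} independently at every vertex $v \in V$: view the neighborhood of $v$ as a vector indexed by $V$, partitioned according to the cluster membership, and recover one edge from $v$ into each cluster of $\mathcal{P}$ that $v$ touches. Each vertex uses $\tilde O(|\mathcal{P}|) = \tilde O(n^{1/k})$ sampler slots, so the overall space in this pass is $\tilde O(n^{1+1/k})$. The final spanner $H$ is defined as $H_{\mathrm{clust}}$ together with all of these representative inter-cluster edges; since the number of pairs (vertex, cluster) is at most $n \cdot |\mathcal{P}| = \tilde O(n^{1+1/k})$, we have $|H| = \tilde O(n^{1+1/k})$ as required.

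For the stretch, pick any edge $(u,v) \in E$ and split into three cases. If at least one endpoint lies outside $\cup\mathcal{P}$, property 3 of \Cref{lem:BS_clusters} gives $d_{H_{\mathrm{clust}}}(u,v) \le 2(k-1)-1 = 2k-3$. If $u,v$ lie in the same cluster $C \in \mathcal{P}$, property 2 gives $d_{H_{\mathrm{clust}}}(u,v) \le 2(k-1) = 2k-2$. Finally, if $u \in C_u$ and $v \in C_v$ with $C_u \ne C_v$, then in the $k$-th pass we recover some edge $(u,v')$ with $v' \in C_v$ (such an edge exists because $(u,v)$ itself witnesses one), and
\[
d_H(u,v) \;\le\; 1 + d_{H_{\mathrm{clust}}}(v',v) \;\le\; 1 + 2(k-1) \;=\; 2k-1,
\]
by property 2 applied to $C_v$. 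In all three cases $d_H(u,v) \le 2k-1$, so $H$ is a $(2k-1)$-spanner.

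The main obstacle, and the only place where care is needed beyond invoking \Cref{lem:BS_clusters}, is arranging the inter-cluster edge recovery so that (a) it fits in a single pass and (b) it respects the $\tilde O(n^{1+1/k})$ per-pass space budget even though the identity of the clusters is only revealed at the end of pass $k-1$. This is precisely what \Cref{lem:SubsetSampling} is designed for: because the sketch is linear and the cluster partition can be chosen after the sketch has been maintained, and because $|\mathcal{P}|$ is $\tilde O(n^{1/k})$ with high probability, the subset-sampling primitive can be applied a posteriori in the $k$-th pass to extract exactly one edge per (vertex, cluster) pair. A union bound over the $O(n \cdot |\mathcal{P}|)$ sampler invocations, together with the high-probability guarantees of \Cref{lem:BS_clusters} and the Chernoff bound on $|\mathcal{P}|$, yields the final w.h.p. correctness statement.
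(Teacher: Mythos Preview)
Your proof is essentially the same algorithm as the paper's: run the Baswana--Sen clustering for $k-1$ levels and, in the final pass, add one representative edge from each vertex to each surviving cluster. The three-case stretch analysis is clean and matches the paper's bound.

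There is, however, a logical circularity you should be aware of. In this paper \Cref{lem:BS_clusters} is \emph{derived} from the algorithm built in the proof of \Cref{thm:BS} (the proof sketch of the lemma in \Cref{sec:altBS} literally says ``We run the algorithm of \Cref{thm:BS} for $i+1$ rounds''). Invoking \Cref{lem:BS_clusters} as a black box to establish \Cref{thm:BS} therefore reverses the paper's dependency order. This is harmless in spirit, since the lemma is just a restatement of the Baswana--Sen procedure from \cite{BS07}, but a self-contained proof within this paper must spell out the clustering directly, as the paper's proof does.

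One minor technical point: your claim that \Cref{lem:SubsetSampling} lets the partition be ``chosen after the sketch has been maintained'' is not a feature of that lemma as proved here (its proof sets up a separate $\ell_0$-sampler per block $A_i$, so the partition must be fixed in advance). Fortunately you do not need any post-hoc flexibility: since $\mathcal{P}$ is fully known at the end of pass $k-1$, you can simply set up, at the start of pass $k$, one instance of \Cref{lem:RecoverSingleEdge} per (vertex, cluster) pair, which is exactly what the paper does.
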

\begin{proof}
We will start with a sequential description of the algorithm, which is also illustrated in \Cref{alg:BS07}. Afterwards, we will explain how to implement this algorithm is the streaming model, and we will finish with an analysis of its performance.

\paragraph{Sequential spanner construction.}
Initially $H=\emptyset$. 
The algorithm runs in $k$ steps. We have $k+1$ sets $V=N_0\supseteq N_1 \supseteq \dots\supseteq N_{k-1}\supseteq N_{k}=\emptyset$.
For $i<k$, each vertex $v\in N_{i-1}$, joins $N_{i}$ with probability $p=n^{-\frac1k}$.
In each stage we will have set of clusters, rooted in $N_i$. Initially we have $n$ singleton clusters. For $v\in N_i$, it will be the root of clusters (or trees) $T_{v,0}\subseteq T_{v,1} \subseteq \dots\subseteq  T_{v,i}$. 
In stage $i$, for each vertex $v\in  T_{u,i-1}$ that belong to an $i-1$ cluster do as follows: 
If $u\in N_i$, that is $v$ also belongs to an $i$ cluster, do nothing. 
Else ($u\in N_{i-1}\setminus N_i$), look for an edge from $v$ towards $\cup_{z\in N_{i}}T_{z,i-1}$, that is towards an $i-1$ cluster that becomes an $i$ cluster. If there is such an edge $e_v$, towards $T_{z,i}$, $v$ joins $T_{z,i}$ and $e_v$ is added to $H$. Otherwise, go over all the clusters $\{T_{z,i-1}\}_{z\in N_{i}}$, and add a single crossing edge from $v$ to each one of them (if exist).
Note that if $v$ did not belong to any $i-1$ cluster we do nothing.

\begin{algorithm}[t]
	\caption{\texttt{Sequential spanner construction: ala \cite{BS07}}}	\label{alg:BS07}
	\DontPrintSemicolon
	\SetKwInOut{Input}{input}\SetKwInOut{Output}{output}
	\Input{$n$ vertex graph $G=(V,E)$, parameter $k$}
	\Output{$2k-1$ spanner $H$ with $\tilde{O}(n^{1+\frac1k})$ edges}
	\BlankLine
	Set $N_0=V$ and $N_k=\emptyset$. For every $v\in N_0$ set $T_{v,0}\leftarrow \{v\}$\;
	\For{$i=1$ to $k-1$}{
		$N_i\leftarrow \emptyset$\;
		\ForEach{$v\in N_{i-1}$}{
			i.i.d. with probability $n^{-\nicefrac{1}{k}}$ add $v$ to $N_i$
		}
	}
	\For{$i=1$ to $k$}{
		\ForEach{$v\in N_{i}$}{Set $T_{v,i}\leftarrow T_{v,i-1}$}
		\ForEach{$v\in (\cup_{u\in N_{i-1}}T_{u,i-1})\setminus(\cup_{u\in N_{i}}T_{u,i-1})$}{
			Sample an edge $e_v=(v,y)\in \{v\}\times\cup_{u\in N_{i}}T_{u,i-1}$\;
			\If{$e_v\neq\emptyset$}{
				Add $e_v$ to $H$\;
				Let $u\in N_{i}$ s.t. $y\in T_{u,i-1}$, add $v$ to $T_{u,i}$\;
			}
			\Else{
				\ForEach{$u\in N_{i-1}$}{
					Sample an edge $e_v\in 	\{v\}\times T_{u,i-1}$\;
					\textbf{if} $e_v\neq\emptyset$ \textbf{then} add $e_v$ to $H$
				}
			}
		}
	}
	\Return $H$\;
\end{algorithm}

\paragraph{Streaming implementation.}
Each step of the algorithm is implemented in a single streaming pass. In the $i$'th pass, for every vertex $v\in\cup_{z\in N_{i-1}\setminus N_{i}}T_{z,i-1}$, using \Cref{lem:RecoverSingleEdge} we will sample an edge $e_v=(v,y)\in \{v\}\times \cup_{z\in N_{i}}T_{z,i-1}$. This will determine whether $v$ joins an $i$-cluster.
In addition, for each such vertex  $v\in\cup_{z\in N_{i-1}\setminus N_{i}}T_{z,i-1}$, we will sample $\tilde{O}(n^{\frac1k})$ edges from the star graph $G_{v,i-1}$ defined as follows: the set of nodes will be $\{v\}\cup N_{i-1}$ where there is an edge from $v$ to $z\in N_{i-1}$ in $G_{v,i-1}$ iff in $G$ there is an edge from $v$ to a vertex in $T_{z,i-1}$. Note that we can interpret the edge stream for $G$ as an edge stream for $G_{v,i-1}$ (by ignoring all non-relevant edges). Thus we can use \Cref{lem:RecoverSingleEdge}.
In case $v$ did not joined $i$ cluster, next in the $i+1$ pass, for every sampled edge $(v,z)$ from $G_{v,i-1}$, we will sample an edge $e_{v,z}$ from $\{v\}\times T_{z,i-1}$ using \Cref{lem:RecoverSingleEdge} and add it to $H$. \\
For the last stage, $N_k=\emptyset$, for each vertex $v\in\cup_{z\in N_{k-1}}T_{z,k-1}$, instead of looking for an neighbor in an $k$ cluster, we will simply sample a single edge from $v$ the each cluster in $\{T_{z,k-1}\}_{z\in N_{k-1}}$ (using \Cref{lem:RecoverSingleEdge}), and add it to $H$.

\paragraph{Analysis.}
We start by bounding the space, and number of edges. 
First note than perhaps for the last round, according to \Cref{lem:RecoverSingleEdge} we are using at most $\tilde{O}(n^{\frac1k})$ space per vertex per round, and thus a total of  $\tilde{O}(n^{1+\frac1k})$.
Considering the last round, set 
$\mu=\mathbb{E}[|N_{k-1}|]=n^{1-\frac{k-1}{k}}=n^{\frac{1}{k}}$, by Chernoff inequality (see e.g. thm. 7.2.9.  \href{https://sarielhp.org/misc/blog/15/09/03/chernoff.pdf}{here}), $\Pr[\left||N_{k-1}| - \mu\right| \geq \mu+O(\log n)] =\poly(\frac 1n)$. Hence w.h.p.  $|N_{k-1}|=\tilde{O}(n^{\frac{1}{k}})$. If this event indeed occurred, in the last $k$'th round we will be using additional 
$\tilde{O}\left(n\cdot|N_{k}|\right)=\tilde{O}\left(n^{1+\frac{1}{k}}\right)$ space to sample edges towards the last level clusters.

For a vertex $z$, if $G_{z,i-1}$ contains $\Omega( n^{\frac1k}\log n)$ vertices (in other words there are at least $\Omega(n^{\frac1k}\log n)$ $i-1$ cluster containing a neighbor of $z$), then the probability that $z$ will fail to join an $i$ cluster is bounded by $(1-n^{-\frac{1}{k}})^{\Omega(n^{\frac{1}{k}}\log n)}=n^{-\Omega(1)}$. Thus we will assume that for every level $i\le k-1$ and vertex $z$ such that  $G_{z,i-1}$ contains $\Omega( n^{\frac1k}\log n)$, $z$ will have an edge towards an $i$ cluster. By \Cref{lem:RecoverSingleEdge} w.h.p. we will sample such an edge. In the other case, by \Cref{lem:RecoverSingleEdge} again w.h.p. we will manage to sample all the edges in $G_{z,i-1}$.
It follows that using $\tilde{O}(n^{1+\frac1k})$ space we manage to implement \Cref{alg:BS07} fatefully. In particular $H$ contains $\tilde{O}(n^{1+\frac1k})$ edges.

It remains to analyze the stretch. By induction, it easily follows that for each $i$ cluster $T_{v,i}$ has radius at most $i$ w.r.t. $v$ in $H$. Consider an edge $(x,y)$ in $G$. Suppose that the highest level cluster $x$ (resp. $y$) belongs to is $T_{v,i}$ (resp. $T_{u,j}$) where w.l.o.g $i\le j$. If $T_{v,i}=T_{u,j}$ then by the radius bound $d_H(x,y)\le d_H(x,v)+d_H(v,y)\le 2i\le 2k-2$.
Else, we added an edge from $x$ to a vertex $y'$ belonging to an $i$ cluster $T_{u',i}$ containing $y$. Thus 
\begin{equation}
d_{H}(x,y)\le d_{H}(x,y')+d_{H}(y',u')+d_{H}(u',y)\le1+2\cdot i\le1+2\cdot(k-1)=2k-1~.\label{eq:BSstretch}
\end{equation}
\end{proof}

\paragraph{\cite{BS07} clustering} We can stop the running of \Cref{alg:BS07} after $i+1$ iterations for some $i<k$. In fact, we can do this even if $k\ge1$ is not integer. we conclude:
\BSclustering*
\begin{proof}[Proof sketch.]
	We run the algorithm of \Cref{thm:BS} for $i+1$ rounds, where each vertex $v\in N_{j-1}$ joins $N_j$ with probability $p$. 
	Here $\mathcal{P}=\{T_{v,i}\}_{v\in N_i}$. Thus indeed $|\mathcal{P}|$ distributed according to $B(|V|,p^i)$. It follows from the analysis of \Cref{thm:BS}, that each cluster in $T_{v,i}\in\mathcal{P}$ has radius $i$, and thus diameter $2i$. Finally, according to \cref{eq:BSstretch}, if $x\notin \cup\mathcal P$, then for every $y$, $d_H(x,y)\le 1+2(i-1)=2i-1$.
\end{proof}
\begin{remark}\label{rem:BSclusteringSuperGraph}[Super graph clustering]
	During the algorithm of \Cref{thm:BSInPasses}, we actually use \Cref{lem:BS_clusters} for a super graph $\mathcal{G}$ of $G$ rather than for the actual graph. Specifically, there is a partial partition of $G$ into clusters $\mathcal{C}$, and there is an edge between clusters $C,C'\in\mathcal{C}$ in $\mathcal{G}$ if and only if $E(C,C')\ne \emptyset$.\\
	We argue that \Cref{lem:BS_clusters} can be used in this regime as well.
	First, given such a representation of a super graph using partial partition $\mathcal{C}$, we can treat a stream of edges for $G$ as a stream of edges for $\mathcal{G}$. Specifically, when seeing an insertion/deletion of an edge $e=(u,v)$: if either $u,v$ belong to the same cluster, or one of them doesn't belong to a cluster at all- simply ignore $e$. Otherwise, simulate insertion/deletion the edge $\tilde{e}=(C_1,C_2)$, where $C_1,C_2\in\mathcal{C}$ are the clusters containing $u,v$.\\
	Second, even though initially we suppose to receive a spanner $\mathcal{H}$ of $\mathcal{G}$, we can actually instead obtain for every edge $\tilde{e}=(C,C')\in \mathcal{H}$, a representative edge $e\in E(C,C')$. To see this, note that in \Cref{alg:BS07} there are two types of edges added to $\mathcal{H}$. Consider $C\in T_{\tilde{C},j-1}$. Then in the $j$'th pass, $C$ ``will try'' to join a $j$ cluster, specifically we sample a single edge from $C$ towards $\cup_{C'\in N_{j}}T_{C',j-1}$, and also $\tilde{O}(\frac{1}{p})$ edges in the auxiliary graph $G_{C,j-1}$. If we manage to sample an edge towards $\cup_{C'\in N_{j}}T_{C',j-1}$, than we can sample a representative in $G$ for this edge in the next $j+1$'th pass. 	
	Else, in the $j+1$'th pass the algorithm will sample a representative  for each edge in $G_{C,j-1}$. Observe, that as the algorithm samples a representative edges between clusters in $\mathcal{G}$, say from $C$ to $T_{C',j-1}$ we actually can instead sample an edge between the actual clusters in $G$, $C,\bigcup T_{C',j-1}\subset V$.	
\end{remark}

\section{Kapralov-Woodruff \cite{KW14} Spanner}\label{sec:KW}
Kapralov and Woodruff constructed a spanner in $2$ passes of a dynamic stream, with stretch $2^k-1$ using   $\tilde{O}(n^{1+\frac1k})$ space. Their basic approach is similar to \cite{BS07}, where the difference is that all the clustering steps are done in a single pass, using the linear nature of $\ell_0$ samplers.
As a result, the diameter of an $i$-level cluster is blown up from $2i$ to $2^{i+1}-2$.
We begin this section by providing the details of \cite{KW14} algorithm. Afterwards, we will state the clustering \Cref{lem:KW_clustering} that follows from the analysis of this algorithm, with some discussion.
Surprisingly, in \Cref{cor:BetterKW}, using the same clustering technique, in $2$ passes only using the same space, we obtain a quadratic improvement in the stretch compared to \cite{KW14}.

\begin{algorithm}[t]
	\caption{\texttt{\cite{KW14} sequential spanner construction}}\label{alg:KW14}
	\DontPrintSemicolon
	\SetKwInOut{Input}{input}\SetKwInOut{Output}{output}
	\Input{$n$ vertex graph $G=(V,E)$, parameter $k$}
	\Output{$2^k-1$ spanner $H$ with $\tilde{O}(n^{1+\frac1k})$ edges}
	\BlankLine
	Set $H=\emptyset$, $N_0=V$ and $N_k=\emptyset$\;
	\For{$i=1$ to $k-1$}{
		$N_i=\emptyset$\;
		\For{$v\in V$}{
			With probability $n^{-\frac ik}$, add $v$ to $N_i$, and set $T_{v,i}=\{v\}$\;
		}
	}
	\For{$i=1$ to $k$}{
		\ForEach{$v\in N_{i-1}$}{
			Sample an edge $e_v=\{x,u\}\in T_{v,i-1}\times N_i$\;
			\If{$e_v\neq\emptyset$}{
				Add $e_v$ to $H$\;
				$T_{u,i}\leftarrow T_{u,i}\cup T_{v,i-1}$}
			\Else{
				\ForEach{vertex $z\in\mathcal{N}(T_{v,i-1})$}{
					Sample an edge $e_z\in 	T_{v,i-1}\times \{z\}$,
					add $e_v$ to $H$.
				}
			}
	}}
	\Return $H$\;
\end{algorithm}
\begin{theorem}[\cite{KW14}]\label{thm:KW}
	For every integer $k\ge 1$, there is a 2 pass dynamic stream algorithm that given an unweighted, undirected $n$-vertex graph $G=(V,E)$, uses $\tilde{O}(n^{1+\frac1k})$ space, and computes w.h.p. a spanner $H$ with   $\tilde{O}(n^{1+\frac1k})$ edges and stretch $2^k-1$.	
\end{theorem}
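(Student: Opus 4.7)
The plan is to verify that \Cref{alg:KW14} can be simulated over a dynamic stream using two passes while meeting the claimed space and stretch bounds. Conceptually the algorithm is a variant of Baswana--Sen where every one of the $k$ clustering phases is carried out from data collected during the same first pass, leveraging the linearity of $\ell_0$-samplers; a second pass is then needed only to recover the ``boundary'' edges between clusters that failed to merge. The proof splits into (i) the streaming implementation, (ii) the space and edge-count analysis, and (iii) the stretch analysis.

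\textbf{Streaming implementation.} The nested sets $V=N_0\supseteq N_1\supseteq\dots\supseteq N_{k-1}$ (where $v\in N_{i-1}$ is promoted to $N_i$ independently with probability $n^{-1/k}$) are fixed using public randomness before the stream begins, so all players know them in advance. During the first pass, each vertex $v$ maintains, for every level $i\in[k]$, an $\ell_0$-sampling sketch of its incident edges restricted to pairs of the form $(v,u)$ with $u\in N_i$ (via \Cref{fact:ellzero}), and, in parallel, sparse-recovery sketches à la \Cref{lem:SubsetSampling} that will support the boundary-recovery step. Because these sketches are linear and the sets $N_i$ are pre-fixed, for any candidate cluster $T_{v,i-1}$ (which is a function only of the random seeds and of $\ell_0$-samples at levels $<i$) we can sum the first-pass sketches of its constituent vertices and, offline after pass one, extract a sample edge from $T_{v,i-1}\times N_i$. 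The entire level-by-level merging of \Cref{alg:KW14} is thus simulated in post-processing; the second pass is reserved for invoking \Cref{lem:RecoverSingleEdge} (restricting the edge stream appropriately) to retrieve representative edges between each cluster $T_{v,i-1}$ with $e_v=\emptyset$ and every neighboring $(i-1)$-cluster.

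\textbf{Space and edge count.} Chernoff concentration gives $|N_i|=\tilde{O}(n^{1-i/k})$ simultaneously for all $i$ w.h.p., so each vertex stores $\tilde{O}(n^{1/k})$ bits of sketch and the total pass-one space is $\tilde{O}(n^{1+1/k})$. For the merging success, observe that if $T_{v,i-1}$ has at least $\Omega(n^{1/k}\log n)$ neighboring distinct $(i-1)$-clusters, then because each such cluster's root is promoted to $N_i$ independently with probability $n^{-1/k}$, at least one promotion happens with probability $1-n^{-\Omega(1)}$; the $\ell_0$-sample then succeeds w.h.p. Clusters with fewer than $\tilde{O}(n^{1/k})$ neighboring clusters may fail to merge, but each one contributes only $\tilde{O}(n^{1/k})$ boundary edges to $H$; summing over the $O(n)$ clusters spanned across all levels gives $\tilde{O}(n^{1+1/k})$ edges in total.

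\textbf{Stretch analysis.} Let $D_i$ be the maximum diameter in $H$ of any level-$i$ cluster. Each $T_{u,i}$ is formed by attaching several $T_{v,i-1}$ clusters to the center $u\in N_i$ via a single graph edge $(x,u)$ with $x\in T_{v,i-1}$, so a straightforward induction (using $D_0=0$) yields $D_i\le 2(D_{i-1}+1)\le 2^{i+1}-2$. For an arbitrary edge $(x,y)\in E$, let $i\le k$ be the smallest level at which either $x$ or $y$'s $(i-1)$-cluster triggered the boundary recovery (this $i$ must exist since $N_k=\emptyset$ forces $e_v=\emptyset$ for every $v\in N_{k-1}$). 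By construction the boundary recovery places an edge in $H$ between the $(i-1)$-clusters containing $x$ and $y$, and combined with the diameter bound $D_{i-1}$ of those clusters this gives $d_H(x,y)\le 2D_{i-1}+1\le 2^{k}-1$ (with the tighter bound $d_H(x,y)\le D_{i-1}+1$ achievable if the boundary step samples directly to neighboring vertices rather than to cluster representatives, matching the statement of \Cref{thm:KW}). The main obstacle is not the stretch bookkeeping but the coherence of the two-pass simulation: one has to ensure that although cluster identities at level $i$ depend adaptively on outcomes from earlier levels, the sketches needed to query them are all prepared in pass one. This is resolved by maintaining independent sketches per level whose linearity is exploited only after the pass is complete.
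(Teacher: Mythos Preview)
Your outline has the right two-pass skeleton, but the way you describe the terminal (``boundary'') step does not match \Cref{alg:KW14}, and this mismatch breaks both the stretch and the space arguments you give.

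In \Cref{alg:KW14}, when a level-$i$ cluster $T_{v,i}$ is terminal, one adds an edge from $T_{v,i}$ to every neighboring \emph{vertex} $z\in\mathcal{N}(T_{v,i})$, not to every neighboring $(i{-}1)$-cluster. With per-vertex recovery the stretch bound is $d_H(x,y)\le D_i+1\le 2^{i+1}-1\le 2^k-1$ (since terminal clusters occur only for $i\le k-1$). Your per-cluster version gives $d_H(x,y)\le 2D_{i-1}+1$, and at the top level $i=k$ this equals $2(2^k-2)+1=2^{k+1}-3$, so the inequality ``$2D_{i-1}+1\le 2^k-1$'' you wrote is false for every $k\ge 2$. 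Your parenthetical does name the per-vertex fix, but you then do not redo the space analysis, and your current space argument (``each terminal cluster contributes only $\tilde O(n^{1/k})$ boundary edges'') is specific to the per-cluster variant. For per-vertex recovery the budget must be level-dependent: a terminal cluster at level $i$ can have up to $\tilde O(n^{(i+1)/k})$ neighboring vertices (if it had more, one of them would land in $N_{i+1}$ w.h.p.\ and the cluster would not be terminal), and the correct bookkeeping is $\sum_{i=0}^{k-1}|N_i|\cdot \tilde O(n^{(i+1)/k})=\sum_{i}\tilde O(n^{1-i/k})\cdot\tilde O(n^{(i+1)/k})=\tilde O(n^{1+1/k})$, exactly as in the paper.

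Two smaller points in the same direction: (i) in \Cref{alg:KW14} the sets $N_i$ are sampled \emph{independently} from $V$ with probability $n^{-i/k}$ and are \emph{not} nested; your ``promotion with probability $n^{-1/k}$'' description is Baswana--Sen, not \cite{KW14}. (ii) Your merging-success argument counts neighboring $(i{-}1)$-clusters, but the merging edge in \Cref{alg:KW14} goes to a \emph{vertex} $u\in N_i$, so the relevant quantity is the number of neighboring vertices, and the threshold is $\tilde\Omega(n^{i/k})$ rather than $\tilde\Omega(n^{1/k})$. None of these are fatal to the approach, but together they mean the proposal as written proves stretch $2^{k+1}-3$, not $2^k-1$, and the corrected version needs the level-dependent second-pass budget above.
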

\begin{proof}
	We begin by providing a sequential version of \cite{KW14} algorithm, which is also illustrated in \Cref{alg:KW14}. Then we will show how to implement it in 2 passes of a dynamic stream and sketch the analysis.	Given a cluster $C\subseteq V$, we denote by $\mathcal{N}(C)=\left\{u\in V\setminus C\mid E\cap(u\times C)\ne\emptyset\right\}$ the set of vertices out of $C$ with a neighbor in $C$.
	
\paragraph{Sequential spanner construction.}
There are two steps: clustering, and adding edges between clusters.
Initially $H=\emptyset$. Sample sets $N_0,N_1,\dots,N_{k-1},N_k$ as follows: $N_0=V$ and $N_k=\emptyset$. Each vertex $v\in V$ joins $N_i$ i.i.d. with probability $n^{-\frac ik}$. Note that the sets are not necessarily nested.
For $v\in N_i$ set $T_{v,i}=\{v\}$. 
We will have $k-1$ clustering steps. Initially each vertex $v$ belongs to a $0$-level singleton cluster $T_{v,0}$. In general, for level $i$ we will have a collection of $i-1$-level clusters $\{T_{v,i-1}\}_{v\in N_{i-1}}$, and will construct $i$-level clusters. $N_{i}$ will be the centers of this clusters. For each $v\in N_{i-1}$, we will pick a random edge $e_v=(x,u)\in T_{v,i}\times N_{i+1}$ (if exist, if $v\in N_i$ it can also pick itself). Then $e_v$ will be added to $H$, and all the vertices in $T_{v,i-1}$ will join $T_{u,i}$. If no such edge exist, we say that $T_{v,i-1}$ is a terminal cluster. Denote by $\mathcal{I}_{i-1}\subseteq N_{i-1}$ the set of centers of terminal clusters. For each $v\in \mathcal{I}_{i-1}$, add to $H$ a single edge from $T_{v,i-1}$ to every vertex in $\mathcal{N}(T_{v,i})$, the neighbors of $T_{v,i}$. 
Note that every $k-1$ cluster, is a terminal cluster, thus $\mathcal{I}_{k-1}=N_{k-1}$.
See \Cref{alg:KW14}.

\paragraph{Streaming implementation} There will be two passes. In the first we will create the clusters of all the levels. In the second pass we will add edge from the terminal clusters to their neighbors.
The sets $N_1,N_2,\dots,N_{k-1}$ are sampled before the first pass. During the first pass we will maintain an $\ell_0$ sampler from each vertex $v$ to each set $N_i$, i.e. for the sets $v\times N_1, v\times N_2,\dots, v\times N_{k-1}$.
Due to the linear nature of $\ell_0$ samplers, consider an $i-1$-cluster $T_{v,i-1}$, given samplers for $\{u\times N_{i}\}_{u\in T_{v,i-1}}$ we can sample w.h.p. an edge from $T_{v,i-1}\times N_{i}$ (see \Cref{fact:ellzero}). In particular, either $T_{v,i-1}$ will join an $i$ level cluster, or $v$ will join $\mathcal{I}_{i-1}$. 
We can preform all the $k-1$ steps of clustering after the first pass. 

During the second pass, for every index $i\in[0,k-1]$, and every $v\in \mathcal{I}_i$ we will use \Cref{lem:SubsetSampling} with parameter $s=\tilde{O}(n^{\frac{i+1}{k}})$ to sample edges from $T_{v,i}$ to every neighbor in $\mathcal{N}(T_{v,i})$. Specifically, we can think on edges as a vector $\vec{v}\in\R^{|T_{v,i}|\times |V\setminus T_{v,i}|}$ where every pair in $T_{v,i}\times (V\setminus T_{v,i})$ has a representative coordinate. This coordinates are divided to $|V\setminus T_{v,i}|$ sets in the natural way, where the goal is to sample a non empty coordinate (i.e. edge) from each non-empty set of coordinates.
The edges corresponding to the sampled coordinates will be added to the spanner.

\paragraph{Analysis sketch.}
Using Chernoff bound (see e.g. thm. 7.2.9.  \href{https://sarielhp.org/misc/blog/15/09/03/chernoff.pdf}{here}), w.h.p. for every index $i\in[1,k-1]$, $|N_i|=\tilde{O}(n^{1-\frac ik})$. The number $\ell_0$ samplers used during the first pass is $n\cdot k$, thus the overall space used is $\tilde{O}(n)$. 
For a cluster $T_{v,i}$, if $\mathcal{N}(T_{v,i})$ then w.h.p. (again using Chernoff) $v\notin\mathcal{I}_i$ (as each vertex in $\mathcal{N}(T_{v,i})$ joins $N_{i+1}$ independently with probability $|N_i|=O(n^{-\frac ik})$). Hence using \Cref{lem:SubsetSampling} we will indeed succeed in recovering an edge to each neighbor in $\mathcal{N}(T_{v,i})$. We conclude that the streaming algorithm faithfully implemented \Cref{alg:KW14}. The total space (and hence also number of edges) used in the second pass is bounded by 
\[
\sum_{i=0}^{k-1}\left|\mathcal{I}_{i}\right|\cdot\tilde{O}(n^{\frac{i+1}{k}})\le\sum_{i=0}^{k-1}\left|N_{i}\right|\cdot\tilde{O}(n^{\frac{i+1}{k}})=\sum_{i=0}^{k-1}\cdot\tilde{O}(n^{1-\frac{i}{k}})\cdot\tilde{O}(n^{\frac{i+1}{k}})=\tilde{O}(n^{1+\frac{1}{k}})~.
\]

Regarding stretch, we argue by induction that the radius of $T_{v,i}$ in $H$ w.r.t. $v$ is bounded by  $2^i-1$. Indeed it holds for $i=0$ as each $T_{v,0}$ is a singleton.
For the induction step, consider a cluster $T_{v,i}$, and let $z\in T_{v,i}$ be some vertex. If $v\in N_{i-1}$ and $z\in T_{v,i}$ the bound follows from the induction hypothesis. Otherwise, there is some center $u\in N_{i-1}$ such that $z\in T_{u,i-1}$, and $H$ contains an edge from some vertex $x\in T_{u,i-1}$ to $v$. We conclude
$d_H(z,v)\le d_H(z,u)+d_H(u,x)+d_H(x,v)\le 2(2^{i-1}-1)+1=2^{i}-1$.
Next consider an edge $(x,y)$ in $G$. If there is some terminal cluster $T_{v,i}$ containing both $x,y$ then $d_H(x,y)\le d_H(x,v)+d_H(v,y)\le 2\cdot(2^i-1)\le 2^k-2$.
Else, let $i$ be the minimal number such that either $x$ or $y$ belong to a terminal cluster. By minimality there are $v_x,v_y\in N_i$ such that $x\in T_{v_x,i}$ and $y\in T_{v_y,i}$. W.l.o.g. $T_{v_x,i}$ is a terminal cluster. In particular the algorithm adds an edge towards $y$ from some vertex $z\in T_{v_x,i}$. We conclude,
\begin{equation}
d_{H}(x,y)\le d_{H}(x,v_{x})+d_{H}(v_{x},z)+d_{H}(z,y)\le2(2^{i}-1)+1=2^{i+1}-1\le2^{k}-1~.\label{eq:KWstretch}
\end{equation}
\end{proof}

For our construction, we will run \cite{KW14} algorithm for $i$ steps only. The result is the following:

\paragraph{\cite{KW14} clustering} We can stop the running of \Cref{alg:KW14} after $i+1$ iterations for some $i<k$. In fact, we can do this even if $k\ge1$ is not integer. we conclude:
\KWclustering*
\begin{proof}[Proof sketch.]
	We run the first pass in the algorithm of \Cref{alg:KW14} for $i$ rounds, where each vertex $v\in N_{j-1}$ joins $N_j$ with probability $p$. 
	Here $\mathcal{P}=\{T_{v,i}\}_{v\in N_i}$. Thus indeed $|\mathcal{P}|$ distributed according to $B(|V|,p^i)$. It follows from the analysis of \Cref{thm:KW}, that each cluster in $T_{v,i}\in\mathcal{P}$ has radius $2^i-1$, and thus diameter $2^{i+1}-2$. Finally, if $x\notin \cup\mathcal P$, then $x$ belongs to an $i-1$-level terminal cluster. Hence according to \cref{eq:KWstretch}, for every neighbor $y$ of $x$ in $G$, $d_H(x,y)\le 2^{(i-1)+1}-1=2^{i}-1$.
\end{proof}
\begin{remark}\label{rem:KWclusteringSuperGraph}[Super graph clustering]
	Similarly to our usage of \Cref{lem:BS_clusters} discusses in \Cref{rem:BSclusteringSuperGraph}, here as well during the algorithm of \Cref{thm:KWInPasses}, we actually use \Cref{lem:KW_clustering} for a super graph $\mathcal{G}$ of $G$ rather than for the actual graph. Specifically, there will be a partial partition of $G$ into clusters $\mathcal{C}$, and  $\mathcal{G}$ will be defined over  $\mathcal{C}$, where there is an edge between clusters $C,C'\in\mathcal{C}$ in $\mathcal{G}$ if and only if $E(C,C')\ne \emptyset$.
	See \Cref{rem:BSclusteringSuperGraph} for an explanation of why we can treat a stream of edges over $G$, as a stream over $\mathcal{G}$.\\
	Note that even though initially we suppose to receive a spanner $\mathcal{H}$ of $\mathcal{G}$, we can actually instead obtain for every edge $\tilde{e}=(C,C')\in \mathcal{H}$, a representative edge $e\in E(C,C')$. 
	For edges add to the spanner during the first pass, we can simply sample a representations in the second pass. 
	During the second pass for each terminal cluster $T_{C,i}$ we added an edges towards every neighbor in $\mathcal{N}(T_{C,i})$ using \Cref{lem:SubsetSampling}. Specifically we have the sets $\{T_{C,i}\times \{C'\}\}_{C'\in \mathcal{C}\setminus T_{C,i}}$ and sampled a single $\mathcal{G}$ edge from each non-empty set. But this just correspond to edges between actual clusters in $G$. Thus instead we can use \Cref{lem:SubsetSampling} to sample a single $G$ edge from each non-empty set  $\{(\bigcup T_{C,i})\times C'\}_{C'\in \mathcal{C}\setminus T_{C,i}}$.	
\end{remark}
\begin{remark}\label{rem:ImpossibleKWinSCM}
	While \Cref{alg:KW14} can be implemented in the dynamic steaming model in two passes  using $\widetilde{O}(n^{1+\frac1k})$ space, it is impossible to do so in the simultaneous communication model where each player can send only $\widetilde{O}(n^{\frac1k})$ size message in each communication round.
	Specifically, the problem is that there is no equivalent to \Cref{lem:SubsetSampling} in the simultaneous communication model. 
	In more detail, note that for each terminal cluster  $T_{v,i}\in \mathcal{N}(T_{v,i})$ the algorithm might restore $\Omega(n^{\frac{i+1}{k}})$ outgoing edges from $T_{v,i}$. In particular, all this edges might be incident on small number of vertices (even one). In the simultaneous communication model it will be impossible to restore them all.
\end{remark}

\end{document}